\documentclass[conference]{IEEEtran}

\usepackage{enumitem}
\usepackage{hyperref}

\usepackage{amsmath}
\usepackage{amssymb}
\usepackage{soul}
\usepackage{amsthm}
\hypersetup{
    colorlinks=true,
    linkcolor=black,
    filecolor=black, 
    citecolor=black,
    urlcolor=cyan,
}
\usepackage{graphicx}
\usepackage[framemethod=TikZ]{mdframed}
\newtheorem{definition}{Definition}
\newtheorem{theorem}{Theorem}

\usepackage[ruled, vlined, linesnumbered]{algorithm}
\usepackage[noend]{algpseudocode}
\usepackage{makecell}  
\alglanguage{pseudocode}
\algnewcommand{\Initialize}[1]{%
  \State \textbf{Initialize:}
  \Statex \hspace*{\algorithmicindent}\parbox[t]{.8\linewidth}{\raggedright #1}
}
\usepackage{tcolorbox}

\usepackage{mathtools}
\usepackage{hyperref}
\hypersetup{
    colorlinks=true,
    linkcolor=blue,
    filecolor=magenta,      
    urlcolor=cyan,
    pdfpagemode=FullScreen
    }
\usepackage{pifont}
\usepackage{listings}
\usepackage{xcolor}
\usepackage[framemethod=TikZ]{mdframed}
\newcounter{theory}

\usepackage{cite}

\usepackage{url}
\urldef{\mailsa}\path|{alfred.hofmann, ursula.barth, ingrid.haas, frank.holzwarth,|
	\urldef{\mailsb}\path|anna.kramer, leonie.kunz, christine.reiss, nicole.sator,|
	\urldef{\mailsc}\path|erika.siebert-cole, peter.strasser, lncs}@springer.com|    
\newcommand{\name}{\texttt{RSMixLeak}\xspace}

\usepackage[T1]{fontenc}
\usepackage[utf8]{inputenc}
\usepackage[english]{babel}
\definecolor{rowbg}{RGB}{238,240,248}   % proposed method row highlight
\definecolor{bestval}{RGB}{192,57,43}   % best value color

\usepackage{amsthm}

\usepackage{soul}
\definecolor{mixbg}{RGB}{235,237,245}  

\usepackage{minibox}
\newcounter{observcntr}

\usepackage{minibox}
\newcounter{takeaway}

\definecolor{R}{RGB}{0,0,150}
\usepackage{xspace}

\theoremstyle{remark}
\usepackage{tikz}
\usepackage{amsmath}
\usepackage{multirow}
\usepackage{threeparttable}
\usepackage{diagbox}
\usepackage{caption}
\usepackage{subcaption}
\usepackage{float}
\usepackage{balance}
\usepackage{graphicx}
\usepackage{textcomp}
\usepackage{booktabs}
\usepackage{colortbl}

\usepackage[absolute]{textpos}
\newcommand{\mypara}[1]{\smallskip\noindent\textbf{#1}}

\definecolor{softgreen}{RGB}{0,128,96}

\begin{document}
\title{When T2I Synthetic Data Backfires: Amplified Privacy Risks in Real-Synthetic Mix Training}
\author{
\IEEEauthorblockN{
Na Li\textsuperscript{1},
Boyu Kuang \textsuperscript{1},
Hongsheng Hu\textsuperscript{2},
Liquan Chen\textsuperscript{3},
Hyoungshick Kim\textsuperscript{4},
Yansong Gao\textsuperscript{5},
Anmin Fu\textsuperscript{1}
}

\IEEEauthorblockA{\textsuperscript{1}Nanjing University of Science and Technology, China}
\IEEEauthorblockA{\textsuperscript{2}The University of Newcastle, Australia}

\IEEEauthorblockA{\textsuperscript{3}Southeast University, China}
\IEEEauthorblockA{\textsuperscript{4}Sungkyunkwan University, Korea}

\IEEEauthorblockA{\textsuperscript{5}The University of Western Australia, Australia}}
\IEEEoverridecommandlockouts
\makeatletter\def\@IEEEpubidpullup{6.5\baselineskip}\makeatother

% make the title area
\maketitle

% As a general rule, do not put math, special symbols or citations
% in the abstract
\begin{abstract}
To overcome data scarcity and privacy constraints in data collection, it has become standard practice across academia and industry to augment real training data with text-to-image (T2I)-generated synthetic data, a paradigm
we term \emph{Real-Synthetic Mix-Training} (RSMT). While substituting synthetic data for sensitive real samples is widely regarded as a means to mitigate privacy exposure of the substituted data, the risk to the remaining real samples that actively participate in training has remained
largely unexamined.

This work reveals, for the first time, that RSMT can substantially amplify privacy leakage of these real training samples. We establish a theoretical framework, \emph{RSMT Memorization Amplification}, proving that
incorporating synthetic data displaces real samples toward peripheral regions of the mixed feature space, in turn forcing the model to memorize them more aggressively. Guided by this foundation, we propose \name to
systematically assess this risk through membership inference attacks (MIAs). \name comprises two variants depending on the adversary's capability. The non-adversarial variant audits a benign RSMT pipeline with
an honest T2I provider, establishing a lower bound on the leakage induced by the intrinsic gap between real and T2I-generated data. The adversarial variant considers an adversary who controls the T2I model or contributes
crafted data to the T2I provider, and deliberately enlarges this distributional gap on a target class via either high-level semantic attribute binding or imperceptible pixel-level coating, further amplifying
leakage on real training data while improving downstream model utility.
Motivated by these findings, we further propose a lightweight leakage propensity indicator computable from real data alone that reliably identifies high-risk datasets \textit{unsuitable} for entering RSMT, as a
self-assessable mitigation.

Extensive experiments across five benchmark datasets spanning facial and satellite imagery, together with four state-of-the-art T2I generators --- two open-source models (SD1.5 and Flux-mini) and two commercial APIs
(Google's Nano Banana and OpenAI's ChatGPT Images 2.0) --- provide strong evidence that RSMT jeopardizes the privacy of real training samples.
\end{abstract}

\IEEEpeerreviewmaketitle

\section{Introduction}
\label{sec:intro}
% ======================================================
The rise of deep learning (DL) has entirely reshaped the landscape of computer vision research over the past decades, with the recent surge in Artificial Intelligence Generated Content (AIGC) propelling a new wave of unprecedented revolution~\cite{Sariyildiz2022FakeIT, he2016deep, stokel2023chatgpt}. 
A well-known challenge is that the performance of most contemporary DL technologies is long-standing bottlenecked by the quality and quantity of natural image data. 
%However, 
Manually collecting and annotating large, diverse, and fair datasets is inherently time-consuming and labor-intensive, and imposes practical burdens for storage and transfer~\cite{azizi2023synthetic, yuan2024realfake, singh2024synthetic, fan2024scaling}.
More critically, across a wide range of real-world scenarios, obtaining sufficient training data itself often poses a nontrivial barrier.

Fortunately, the latest breakthroughs in text-to-image (T2I) generation have flourished a large and continuously increasing number of open-source T2I models (e.g., Stable Diffusion (SD)~\cite{rombach2022high}  and Flux~\cite{labs2025flux1kontextflowmatching}) and commercial APIs (e.g., Google's Nano Banana~\cite{comanici2025gemini} and OpenAI's latest ChatGPT Images 2.0~\cite{chatgptimages2026}), exhibiting stunning image generation capabilities and inspiring a surge in leveraging them for data augmentation~\cite{azizi2023synthetic}.
In particular, by using text prompts to control generated image content precisely, this approach can synthesize task-specific datasets that benefit low-resource or privacy-compliant areas (e.g., facial recognition, satellite imagery) and produce categorical labels without human labor. 
Building on these benefits, synthetic data has been rapidly and widely adopted across both academia and industry as a data augmentation tool, typically blended with real training data for mix-training downstream models, a paradigm we term \emph{Real-Synthetic Mix-Training (RSMT)}.
For instance, Google Research reports that augmenting real datasets with ImageNet-style images generated by Imagen~\cite{saharia2022photorealistic} can boost the classification accuracy of ResNet and Transformer~\cite{azizi2023synthetic}. In practice, NVIDIA incorporates synthetic trajectory data produced by Cosmos World Foundation Models~\cite{agarwal2025cosmos} with real-world data to enhance the performance of Isaac GROOT N1.5.

\subsection{Limitation}
Beyond utility gains, synthetic data has also been posited as a privacy safeguard that replaces sensitive real samples unable to enter training due to privacy or licensing constraints and thereby protects them from exposure~\cite{yuan2024realfake, zhao25does}. However, the real data that ultimately enters the training pipeline may still carry sensitive information, and the specific privacy risks that synthetic data poses to these remaining real training samples are overlooked to date and poorly understood.

To quantify privacy risks, MIAs have emerged as the standard auditing tool~\cite{shokri2017membership, carlini2022membership, nasr2019comprehensive, yeom2018privacy}. By leveraging the model’s tendency to memorize its training data, MIAs identify subtle discrepancies in outputs between members and non-members, enabling attackers to infer whether a specific sample was part of the training data, thereby posing a substantial threat to DL models~\cite{WangZCBKY25, chen2020gan,li25enhanced,chen2025method}.
For instance, membership inference against a melanoma classifier trained on dermatoscopic images can effectively disclose a patient's sensitive information, revealing that the individual has been diagnosed with the corresponding condition~\cite{ziller2024reconciling}.
On the other hand, MIAs can also serve as an auditing tool for assessing the extent of privacy leakage in deployed models.

Notably, state-of-the-art T2I generative models unavoidably embed measurable low-level artifacts into synthetic images~\cite{sander2024watermarking, karageorgiou2025any, ojha2023towards}, inducing an irreducible distributional gap between synthetic and real data. When synthetic samples dominate the training set, this gap renders the remaining real samples atypical relative to the dominant synthetic distribution, making them hard examples that compel the model to memorize them more aggressively to minimize the training loss. 
Although prior studies~\cite{yuan2024realfake, zhao25does} have noted the privacy-preserving potential of T2I-generated synthetic data via MIAs, they focus exclusively on scenarios where downstream models are trained on pure synthetic datasets serving as proxies for real samples excluded from training, not RSMT.
The emerging RSMT has not been looked at yet, where the distributional discrepancy between real and synthetic data, may instead expose the real samples that actively participate in training.

We hence ask the following research questions to underscore the urgent need for a comprehensive assessment of the privacy risks arising from the emerging RSMT scenario.

\begin{mdframed}[backgroundcolor=black!10,rightline=false,leftline=false,topline=false,bottomline=false,roundcorner=2mm]
Does the incorporation of T2I-generated synthetic data amplify the privacy leakage of the real training samples that actively participate in training? If so, to what extent is the privacy risk exacerbated?
\end{mdframed}

\subsection{Our Work}
This work, for the first time, unveils and proves that incorporating T2I-generated synthetic data exacerbates the privacy leakage of real samples that actively participate in training.
Firstly, we establish a theoretical framework, \emph{RSMT Memorization Amplification}, proving that the intrinsic distributional gap between real and T2I-generated data pushes real samples toward peripheral regions of the mixed feature space, turning them into an atypical subpopulation that the model is forced to memorize more aggressively.
Guided by this theoretical foundation, we propose \name, a systematic framework that quantifies the privacy leakage of real training samples through MIAs and shows that the intrinsic gap alone suffices to amplify leakage, which an adversary can further enlarge by manipulating the upstream T2I pipeline.
Below, we brief the key findings of \name and its core design.

\mypara{Theoretical Foundation.}
We formalize RSMT Memorization Amplification as a unified theoretical framework, establishing that RSMT provably amplifies the privacy risk of real training samples over real-only training. First, we prove that the irreducible real–synthetic distributional gap displaces real samples toward peripheral regions of the mixed feature space and relegates them to a tail subpopulation that is atypical relative to the dominant synthetic core (Theorem~\ref{thm:atypicality}). 
Building on this, we show that these atypical real samples cannot be captured by shared statistical patterns and thus rely on substantially amplified memorization to be fitted (Theorem~\ref{thm:mem-amp}). Both theorems are substantiated by corresponding empirical evidence.

\mypara{\name.}
The above theorems lie the foundations of \name, a framework that systematically audits RSMT-induced privacy leakage to real training samples through two progressive variants differentiated by the adversary's capability to manipulate the upstream T2I pipeline.

\noindent \textit{Non-adversarial Audit.} We first audit a benign non-adversarial RSMT pipeline with an honest T2I provider, where the adversary only has black-box query access to the downstream victim model.
Specifically, we collect synthetic data from open-source backbones, i.e., pre-trained SD1.5 and Flux-mini (a lightweight TencentARC-released substitute for FLUX.1-dev)\footnote{Flux-mini is a 3.2B distilled variant of FLUX.1-dev released by TencentARC at \url{https://github.com/TencentARC/FluxKits}}, plus their LoRA fine-tuned variants for specialized domains and commercial APIs, i.e., Google's Nano Banana and OpenAI's ChatGPT Images 2.0, mix it with real data at a 1:4 real-to-synthetic ratio, and train the victim classifier.
Four representative MIA methods spanning training-based ~\cite{shokri2017membership,yuan2022membership} and metric-based methods~\cite{song2019privacy,ZarifzadehLS24} then quantify the membership leakage. Our results reveal that the intrinsic gap alone substantially amplifies leakage. For example, when applying RMIA~\cite{ZarifzadehLS24} on synthetic data generated by pre-trained SD1.5 for ImageNet100, the TPR@0.1\% FPR increases from 12\% to 21.2\%.

\noindent \textit{Adversarial Audit.} Adversarial \name considers an adversary who operates upstream as a malicious T2I provider or as a contributor injecting crafted data into the T2I training set. 
Since memorization amplification grows with the real–synthetic distributional gap $\delta$, \name
deliberately enlarges $\delta$ on a chosen victim target class through two schemes.
\textit{Semantic attribute binding scheme} fine-tunes the T2I model on bias images that consistently embed a rare but visually natural attribute underrepresented in the real distribution, shifting the synthetic distribution at the semantic level. \textit{Pixel-level coating scheme} applies imperceptible perturbations directionally optimized away from the real-class centroid; the T2I model absorbs these perturbations as class-intrinsic features and reproduces the induced shift in its generated samples. 
Both adversarial schemes further amplify privacy leakage beyond the non-adversarial \name baseline while improving downstream classification utility over real-only training. For instance, on SD1.5/VGGFace2, semantic attribute binding further lifts TPR@0.1\% FPR from 4.0\% to 26.0\% and AUC from 89.2\% to 92.8\%, while pixel-level coating achieves 12.0\% TPR@0.1\% FPR and 93.4\% AUC.

Our main contributions are summarized as follows:

\noindent$\bullet$ We establish the theoretical foundation that RSMT amplifies the memorization of real training samples, as the large influx of synthetic data renders them an atypical tail subpopulation that the model is compelled to memorize more aggressively.

\noindent$\bullet$ We propose \name, the first framework for systematically auditing the privacy leakage of the RSMT paradigm through the lens of MIAs, showing that benign RSMT alone is sufficient to jeopardize the privacy of real training samples.

\noindent$\bullet$ We further show that an adversary can substantially amplify leakage via enlarging the distributional gap $\delta$ between real and synthetic data on a chosen target class, through semantic or pixel-level manipulation on the upstream T2I pipeline.

\noindent$\bullet$ We propose a lightweight leakage propensity indicator computable from real data alone, which reliably identifies high-risk datasets \textit{unsuitable} for RSMT before synthetic data generation or downstream training, as an \textit{avoidance countermeasure}.

\noindent$\bullet$ We conduct extensive experiments across diverse datasets, T2I backbones, and commercial APIs that validate the effectiveness of \name. We also provide practical mitigation to \name e.g., by computing an indicator upon the real samples \textit{only}.

% ======================================================
\section{Background and Related Work}
\label{sec:related}
% ======================================================
\subsection{T2I Generative Models} T2I generation has witnessed remarkable progress in the past few years, with the emergence of numerous accessible open-source and commercial models capable of producing high-fidelity, photo-realistic images. Specifically, early mainstream T2I models predominantly relied on the UNet-based architecture (e.g., SD1.5~\cite{rombach2022high}, SDXL~\cite{podell2024sdxl}, Imagen~\cite{saharia2022photorealistic}, etc.), enabling image generation by conducting text-conditioned diffusion within a latent space, which has been widely adopted in both academic research and industrial applications.
More recently, Diffusion transformers (DiT)~\cite{peebles2023scalable} burst onto the scene as a notable evolution (i.e., Flux~\cite{labs2025flux1kontextflowmatching}, Sana~\cite{xie2025sana}), replacing the UNet backbone with vision transformers, supporting high-quality image generation.

\subsection{Synthetic Data Generation}
For general-domain tasks such as synthesizing ImageNet-like data, large-scale pre-trained T2I models can reliably generate high-quality samples owing to their extensive training on massive and broadly representative image–text corpora, whereas domain-specific tasks (e.g., satellite imagery) can be effectively supported by applying parameter-efficient fine-tuning (PEFT) methods (e.g., LoRA~\cite{hu2022lora}) to fine-tune the models effectively using a small amount of real data from the target domain~\cite{singh2024synthetic}. 
Moreover, an increasing number of companies, such as Synthesis AI~\cite{synthesisai} (serving clients like Amazon, Apple, Google, and Intel), offer commercial synthetic data services, enabling the creation of customized enterprise synthetic datasets.

\subsection{Learning with Real and T2I-Generated Synthetic Data}
\label{sec:learn}
Synthetic data has been extensively utilized across various computer vision tasks, including robotics~\cite{moreau2022lens, yen2022nerf}, autonomous driving~\cite{abu2018augmented}, and object detection~\cite{abu2018augmented,peng2015learning}.
An earlier line of work explores synthetic data via dataset condensation and distillation, which compresses real training sets into smaller synthetic surrogates of comparable utility, reducing storage and accelerating training~\cite{wang2018dataset,zhaodataset,cazenavette2022dataset}. 
While Dong \textit{et al.}~\cite{dong2022privacy} claim that condensed data inherently resist membership inference since no real sample enters training, Carlini \textit{et al.}~\cite{carlini2022no} refute this by showing the reported gain becomes statistically insignificant against a properly measured baseline. Liu \textit{et al.}~\cite{liu2023backdoor} further exposed backdoor vulnerabilities in distilled datasets.
However, growing attention has shifted to T2I-generated synthetic data, which is also the focus of our work. 
Primarily, existing research can be grouped into two categories: one line of work examines the use of purely synthetic data to evaluate its standalone effectiveness, while the other leverages synthetic data as an augmentation source, integrating it with real data to assess the improvement in performance.

Existing studies investigate training solely on T2I generated synthetic data from multiple perspectives, e.g., model performance, robustness, and scalability. Although purely synthetic training typically falls short of matching real data performance, Sariyildiz \textit{et al.}~\cite{Sariyildiz2022FakeIT} first demonstrate that carefully leveraging prompt engineering can substantially narrow the performance gap between models trained on T2I-generated synthetic versus real images.
Singh \textit{et al.}~\cite{singh2024synthetic} show that self-supervised and multi-modal models trained exclusively on T2I-generated synthetic data can match real image baselines across several robustness metrics.

Cutting-edge works highlight that when T2I-generated synthetic data works collaboratively with real data for model training, a paradigm we term Real-Synthetic Mix Training (RSMT), it is not only more common and effective in practice but also the focus of our work. RSMT can enhance performance across a variety of tasks, particularly in data-scarce scenarios. For example, studies~\cite{he2023is, fan2024scaling,azizi2023synthetic} demonstrate that synthetic images generated by GLIDE, SD1.5, and Imagen, respectively, provide measurable gains when jointly used with real data for model training. Singh \textit{et al.}~\cite{singh2024synthetic} further report that the RSMT paradigm yields models that are less susceptible to both adversarial perturbations and natural distributional noise. In industry, NVIDIA uses synthetic trajectory data generated by WFMs~\cite{agarwal2025cosmos}, co-training it with real-world data to improve the performance of Isaac GROOT N1.5~\cite{bjorck2025gr00t}, an open foundation model for humanoid robot reasoning and skills.

\subsection{Membership Inference Attacks} MIAs are widely adopted to assess the privacy risks of DL models by determining whether a given sample is part of the model’s training set~\cite{li2021membership,carlini2022membership,Wang2025rigging,ye2022enhanced,shang2025defend,HuiYYBGC21,zhang25soft,stevanoski2024querycheetah,chen2024slmia}.
Formally, for a target sample $x$ and a victim model $\mathcal{M}$, the membership inference algorithm $\mathcal{A}$ is defined as:
\begin{equation}
\mathcal{A}: (x, \mathcal{M}) \rightarrow {0,1},
\end{equation}
where the output indicates whether $x$ is a member (1) or a non-member (0) of the training dataset of $\mathcal{M}$.

In practical scenarios, MIAs are typically conducted under the black-box assumption, where the adversary can only access the model’s output probabilities (i.e., confidence scores). Existing approaches can be broadly divided into training-based and metric-based methods, depending on whether an attack meta-classifier is trained. Training-based methods first construct shadow models to approximate the behavior of the victim model, then generate labeled attack metadata to train a binary meta-classifier for membership inference. For instance, Shokri \textit{et al.}~\cite{shokri2017membership} use prediction confidence vectors as attack metadata, while Yuan \textit{et al.}~\cite{yuan2022membership} further incorporate sensitivity, label and posterior to train a transformer-based attack model. In contrast, metric-based approaches avoid training the attack meta-classifier and instead rely on statistical indicators derived from the victim model, i.e., loss~\cite{yeom2018privacy,song2019privacy}, posterior~\cite{yeom2018privacy}, likelihood ratios~\cite{carlini2022membership}, and scaled confidence score~\cite{du2025imitative}. More recently, Du \textit{et al.}~\cite{du2026cascading} improved attack effectiveness by exploiting membership dependencies.

Furthermore, MIAs have also been employed as audit tools to analyze privacy risks in various scenarios, including explainable machine learning~\cite{liu2024please}, visual encoders~\cite{zhu2024unified,liu2021encodermi}, machine unlearning~\cite{chen2021machine}, model compression~\cite{li2025compleak} and RAG datastore~\cite{NasehPSCOH25,GaoMD0G25}.
Recent efforts further extend MIAs to foundation models, including VLMs~\cite{hu2025vlms}, LLMs~\cite{meeus2024did,wen2024membership,he2025llms, tong2025membership,chen2026window}, text-to-video models~\cite{wang2026vidleaks} and diffusion models~\cite{Peng2025diffence,pang2025black,wang2026inference}.

To the best of our knowledge, no prior work has investigated the privacy leakage of real training samples in the RSMT pipeline, where T2I-generated synthetic data is jointly used with real data to train downstream models. Existing efforts treat synthetic data as a proxy for real samples, either to circumvent the prohibitive cost of curating real datasets~\cite{yuan2024realfake,singh2024synthetic} or to substitute for real samples excluded from training and thereby shield them from exposure~\cite{zhao25does}.

In contrast, our work focuses on the prevailing RSMT paradigm, in which T2I-generated synthetic data augments real training data to improve downstream model performance, particularly when real data is scarce. Within this paradigm, we investigate how---and to what extent---the incorporation of (non-adversarial/adversarial) T2I-generated synthetic data affects the privacy leakage of the real samples that actively participate in training.
% ======================================================

\section{Theoretical Foundations: RSMT Memorization Amplification}
\label{sec:theory}

While the RSMT paradigm has been adopted for data augmentation, we posit that it inevitably amplifies the memorization of sensitive real training samples, thereby exacerbating the  privacy leakage of the latter. To formalize this, we establish a theoretical framework for \emph{RSMT Memorization Amplification}, which comprises two progressive claims, each supported by empirical evidence.
%We substantiate this claim through two progressive theoretical propositions, each supported by empirical evidence. 
First, we prove that incorporating a volume of synthetic data pushes real samples toward peripheral regions of the mixed feature space, rendering them relatively atypical (Theorem~\ref{thm:atypicality}). 
Building on this, we demonstrate that such atypical real samples---those from the tail subpopulation---exhibit amplified memorization under mixed-training (Theorem~\ref{thm:mem-amp}).

\begin{tcolorbox}[colback=black!10, colframe=black!50, boxrule=0pt, arc=3pt]
\subsubsection*{\textbf{Key Insight~1}}
\textit{Incorporating a volume of synthetic data into the training set can inadvertently shift the remaining real samples toward off-centroid regions of the mixed feature space, implicitly relegating them to a tail subpopulation and being relatively atypical.}
\end{tcolorbox}

To formalize this claim, we begin by defining class-wise feature centroids under mixed data and then characterize how real samples are displaced toward peripheral regions, thereby forming a tail subpopulation.

\begin{definition}[Feature Centroid of Mixed Data]
\label{def:centroid}
Let $\mathcal{D}_{\mathrm{real}}$ and $\mathcal{D}_{\mathrm{syn}}$ denote the real and synthetic training sets, respectively, with $|\mathcal{D}_{\mathrm{real}}| = N$ and mixing ratio $\lambda = |\mathcal{D}_{\mathrm{syn}}|/|\mathcal{D}_{\mathrm{real}}|>1$.
Denote by $\phi:\mathcal{X}\to\mathbb{R}^d$ a fixed feature extractor (e.g., a pre-trained CLIP encoder). For each class $c$, let
$\boldsymbol{\mu}_c^{\mathrm{real}},
 \boldsymbol{\mu}_c^{\mathrm{syn}}\in\mathbb{R}^d$
be the per-class centroids of the real and synthetic samples, respectively, and let
\[
  \boldsymbol{\mu}_c^{\mathrm{mix}}
  = \frac{1}{1+\lambda}\,\boldsymbol{\mu}_c^{\mathrm{real}}
  + \frac{\lambda}{1+\lambda}\,\boldsymbol{\mu}_c^{\mathrm{syn}}
\]
be the mixed centroid.
\end{definition}

\begin{theorem}[RSMT Push Real Samples to Peripheral Regions of the Mixed Feature Space]
\label{thm:atypicality}
Under Definition~\ref{def:centroid}, assume the per-class
feature distributions satisfy
$\phi(\mathcal{D}_c^{\mathrm{real}})\sim
 \mathcal{N}(\boldsymbol{\mu}_c^{\mathrm{real}},\Sigma_r)$
and
$\phi(\mathcal{D}_c^{\mathrm{syn}})\sim
 \mathcal{N}(\boldsymbol{\mu}_c^{\mathrm{syn}},\Sigma_s)$.
Suppose there exists a constant $\delta>0$ such that
\begin{equation}
  \label{eq:gap}
  \bigl\|\boldsymbol{\mu}_c^{\mathrm{real}}
        - \boldsymbol{\mu}_c^{\mathrm{syn}}\bigr\|_2
  \;\geq\; \delta
\end{equation}
reflecting the irreducible distributional gap induced by T2I
generative artifacts.
Then, for any $\mathbf{x}\in\mathcal{D}_c^{\mathrm{real}}$:
\begin{equation}
  \label{eq:atypicality}
  \mathbb{E}\!\left[\bigl\|\phi(\mathbf{x})
    - \boldsymbol{\mu}_c^{\mathrm{mix}}\bigr\|_2^2\right]
 \;>\;
  \mathbb{E}\!\left[\bigl\|\phi(\mathbf{x})
    - \boldsymbol{\mu}_c^{\mathrm{real}}\bigr\|_2^2\right]
  + \left(\frac{\lambda}{1+\lambda}\right)^{\!2}\delta^2
\end{equation}
\end{theorem}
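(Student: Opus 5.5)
Since $\phi(\mathbf{x})$ is drawn from $\mathcal{N}(\boldsymbol{\mu}_c^{\mathrm{real}},\Sigma_r)$, I will treat $\boldsymbol{\mu}_c^{\mathrm{real}}$ as its population mean and prove the statement by a bias--variance decomposition of the squared distance to the mixed centroid.

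\textbf{Step 1: isolate the centroid shift.} By Definition~\ref{def:centroid},
\[
\boldsymbol{\mu}_c^{\mathrm{real}}-\boldsymbol{\mu}_c^{\mathrm{mix}}
=\tfrac{\lambda}{1+\lambda}\bigl(\boldsymbol{\mu}_c^{\mathrm{real}}-\boldsymbol{\mu}_c^{\mathrm{syn}}\bigr).
\]
This is a deterministic vector.

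\textbf{Step 2: expand around the real centroid.} Write
\[
\phi(\mathbf{x})-\boldsymbol{\mu}_c^{\mathrm{mix}}
=\bigl(\phi(\mathbf{x})-\boldsymbol{\mu}_c^{\mathrm{real}}\bigr)+\bigl(\boldsymbol{\mu}_c^{\mathrm{real}}-\boldsymbol{\mu}_c^{\mathrm{mix}}\bigr)
\]
and expand the squared norm. The cross term is linear in $\phi(\mathbf{x})-\boldsymbol{\mu}_c^{\mathrm{real}}$, which has zero mean, so its expectation vanishes. This yields the exact identity
\[
\mathbb{E}\|\phi(\mathbf{x})-\boldsymbol{\mu}_c^{\mathrm{mix}}\|_2^2
=\mathbb{E}\|\phi(\mathbf{x})-\boldsymbol{\mu}_c^{\mathrm{real}}\|_2^2
+\Bigl(\tfrac{\lambda}{1+\lambda}\Bigr)^2\|\boldsymbol{\mu}_c^{\mathrm{real}}-\boldsymbol{\mu}_c^{\mathrm{syn}}\|_2^2 .
\]
The first term on the right equals $\mathrm{tr}(\Sigma_r)$.

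\textbf{Step 3: apply the gap assumption.} By \eqref{eq:gap}, $\|\boldsymbol{\mu}_c^{\mathrm{real}}-\boldsymbol{\mu}_c^{\mathrm{syn}}\|_2^2\ge\delta^2$. Substituting this into the identity gives \eqref{eq:atypicality} with ``$\ge$''. Since $\lambda>1$, the coefficient is positive, so the extra term is genuinely positive, which is the intended content: real samples sit off-centroid by at least $\frac{\lambda}{1+\lambda}\delta$.

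\textbf{Main obstacle: the strict inequality.} The identity in Step 2 is exact, so strictness cannot come from it and must come from \eqref{eq:gap} being strict, which the statement does not guarantee. I would handle this in one of two ways.

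\emph{Option (a).} Note that $\delta$ only appears as a lower bound. If $\|\boldsymbol{\mu}_c^{\mathrm{real}}-\boldsymbol{\mu}_c^{\mathrm{syn}}\|_2>\delta$, strictness follows immediately. Otherwise, state the bound with ``$\ge$'', or replace $\delta$ by any $\delta'<\delta$.

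\emph{Option (b).} Use the empirical centroids instead. If $\boldsymbol{\mu}_c^{\mathrm{real}}$ is the empirical mean of the $N_c$ real class samples and $\mathbf{x}$ is one of them, then $\mathbf{x}$ is correlated with the centroid. Redoing the decomposition in that case shows the cross term no longer vanishes. To decide between the two options, I would check how the mixed centroid depends on $\mathbf{x}$ through the real part, and whether the synthetic sampling noise in $\boldsymbol{\mu}_c^{\mathrm{syn}}$ (covariance $\Sigma_s/(\lambda N_c)$) adds a strictly positive variance term $\bigl(\frac{\lambda}{1+\lambda}\bigr)^2\mathrm{tr}(\Sigma_s)/(\lambda N_c)>0$.

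My plan is to try option (b) first, because the synthetic-centroid randomness supplies a clean strictly positive slack whenever $\Sigma_s\neq 0$. If the bookkeeping becomes messy, I will fall back to option (a).
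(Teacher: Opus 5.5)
Your Steps 1--3 are the paper's proof. It sets $\boldsymbol{\Delta}=\boldsymbol{\mu}_c^{\mathrm{mix}}-\boldsymbol{\mu}_c^{\mathrm{real}}=\frac{\lambda}{1+\lambda}(\boldsymbol{\mu}_c^{\mathrm{syn}}-\boldsymbol{\mu}_c^{\mathrm{real}})$, expands the square, drops the cross term via $\mathbb{E}[\phi(\mathbf{x})]=\boldsymbol{\mu}_c^{\mathrm{real}}$, and applies the gap assumption. Your objection to strictness is correct, and the paper does not resolve it. It passes from $\|\boldsymbol{\mu}_c^{\mathrm{syn}}-\boldsymbol{\mu}_c^{\mathrm{real}}\|_2\ge\delta$ directly to $\|\boldsymbol{\Delta}\|_2>\frac{\lambda}{1+\lambda}\delta$. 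By your exact identity, taking $\delta=\|\boldsymbol{\mu}_c^{\mathrm{real}}-\boldsymbol{\mu}_c^{\mathrm{syn}}\|_2$ makes the two sides equal. So in the population reading that you and the paper share, the strict claim is false, and option (a) is the right repair.

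I would not lead with option (b). Whether a cross term survives, and with which sign, depends on which centroids you make empirical.

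\emph{Sample centroids throughout.} Suppose the centroids are the sample centroids of Definition~\ref{def:centroid}, the gap is imposed on them, and the expectation is the average over $\mathcal{D}_c^{\mathrm{real}}$. Then the cross term is exactly zero, the identity is again exact, and no slack appears.

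\emph{Population gap, empirical centroids on both sides.} Suppose you keep the gap at the population level, use empirical centroids on both sides, and average over the draw of both datasets. Then the cross term $\langle\phi(\mathbf{x})-\hat{\boldsymbol{\mu}}_c^{\mathrm{real}},\hat{\boldsymbol{\mu}}_c^{\mathrm{real}}-\hat{\boldsymbol{\mu}}_c^{\mathrm{syn}}\rangle$ still has zero mean, contrary to your prediction. The excess over $\bigl(\frac{\lambda}{1+\lambda}\bigr)^2\delta^2$ is at least $\bigl(\frac{\lambda}{1+\lambda}\bigr)^2\bigl(\mathrm{tr}\,\Sigma_r+\mathrm{tr}\,\Sigma_s/\lambda\bigr)/N_c>0$, coming from the variance of $\hat{\boldsymbol{\mu}}_c^{\mathrm{real}}-\hat{\boldsymbol{\mu}}_c^{\mathrm{syn}}$. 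This is the one reading where your plan works. However, it makes $\boldsymbol{\mu}_c^{\mathrm{real}}$ denote the population mean in the hypothesis and the sample centroid in the conclusion.

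\emph{Only the mixed centroid empirical.} Suppose the right-hand side keeps the population $\boldsymbol{\mu}_c^{\mathrm{real}}$. Then the cross term is nonzero but negative, because $\mathbf{x}$ pulls the empirical centroid toward itself. The net correction is $\frac{\lambda\,\mathrm{tr}\,\Sigma_s-(1+2\lambda)\,\mathrm{tr}\,\Sigma_r}{(1+\lambda)^2N_c}$, which is negative whenever $\mathrm{tr}\,\Sigma_s\le 2\,\mathrm{tr}\,\Sigma_r$. That is exactly the paper's own regime of tightly clustered synthetic features. There the synthetic-noise term you anticipated is outweighed, and even the non-strict bound fails when the gap equals $\delta$.
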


\begin{figure}[t]
    \centering

    \begin{subfigure}[b]{0.15\linewidth}
        \centering
        \includegraphics[width=\linewidth]{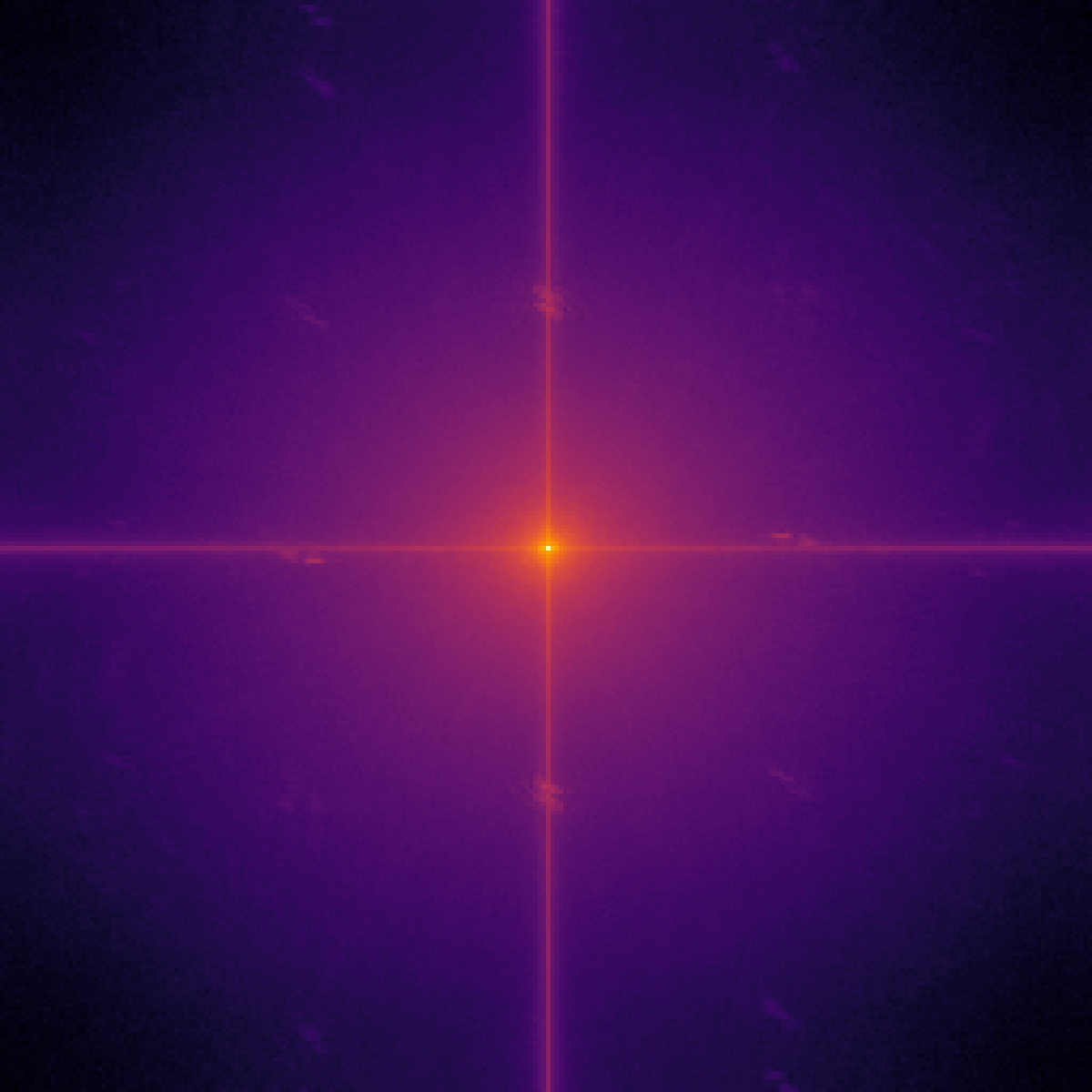}
        \caption{\small Real}
        \label{fig:a}
    \end{subfigure}
    \hfill
    \vspace{4pt}
    \begin{subfigure}[b]{0.15\linewidth}
        \centering
        \includegraphics[width=\linewidth]{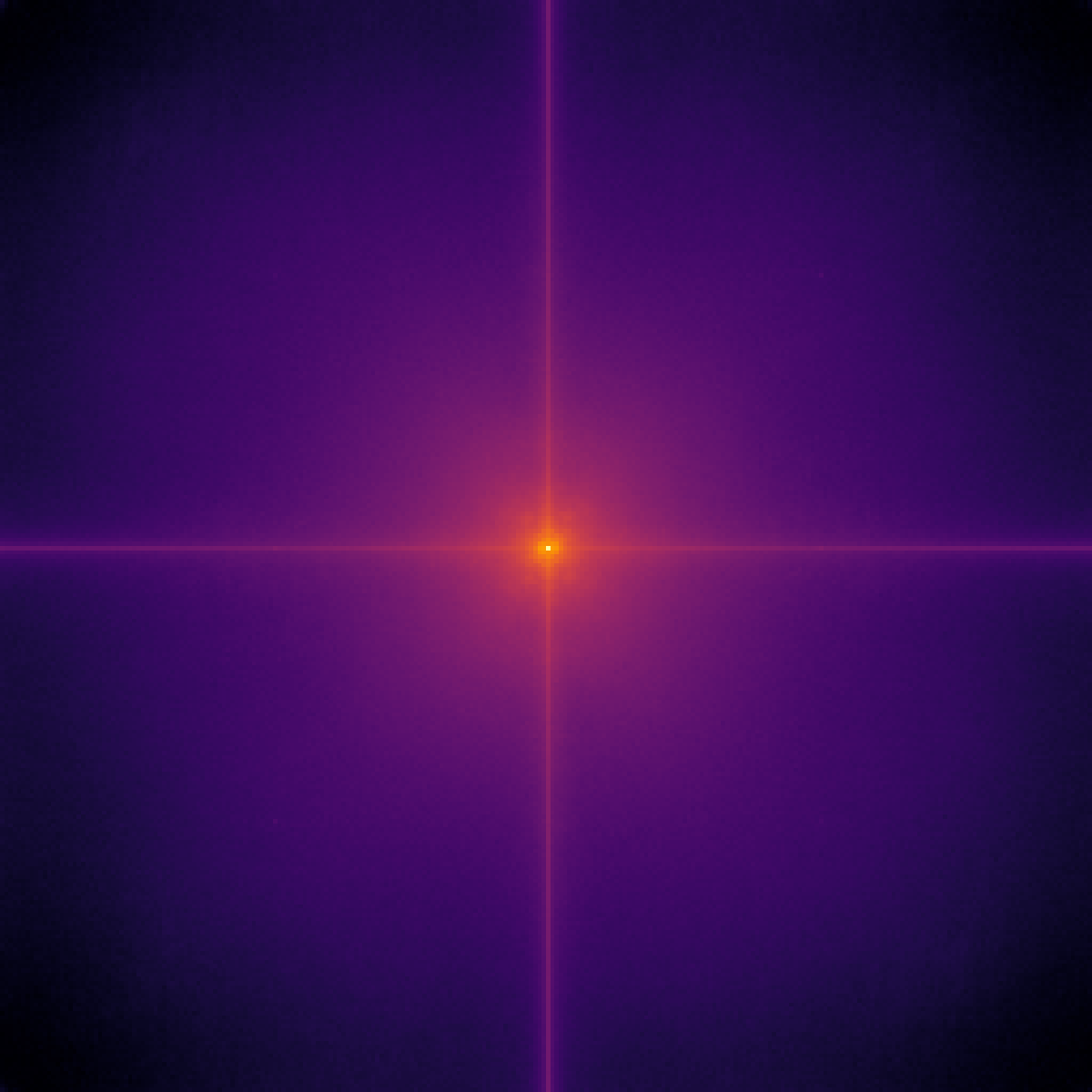}
        \caption{\small SD}
        \label{fig:b}
    \end{subfigure}
    \hfill
    \begin{subfigure}[b]{0.15\linewidth}
        \centering
        \includegraphics[width=\linewidth]{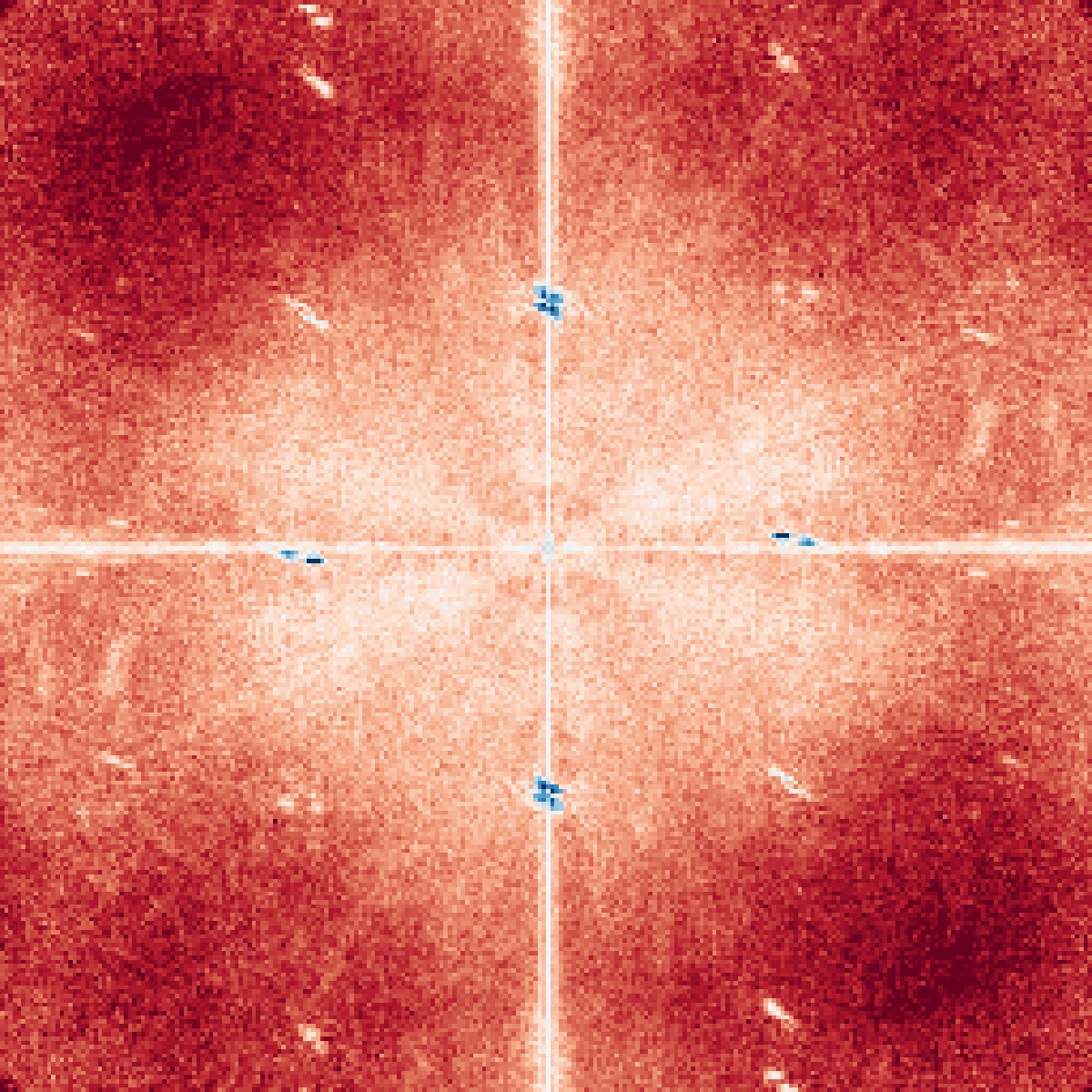}
        \caption{\small $\Delta$SD}
        \label{fig:c}
    \end{subfigure}
    \vspace{6pt}
    \hfill
    \begin{subfigure}[b]{0.15\linewidth}
        \centering
        \includegraphics[width=\linewidth]{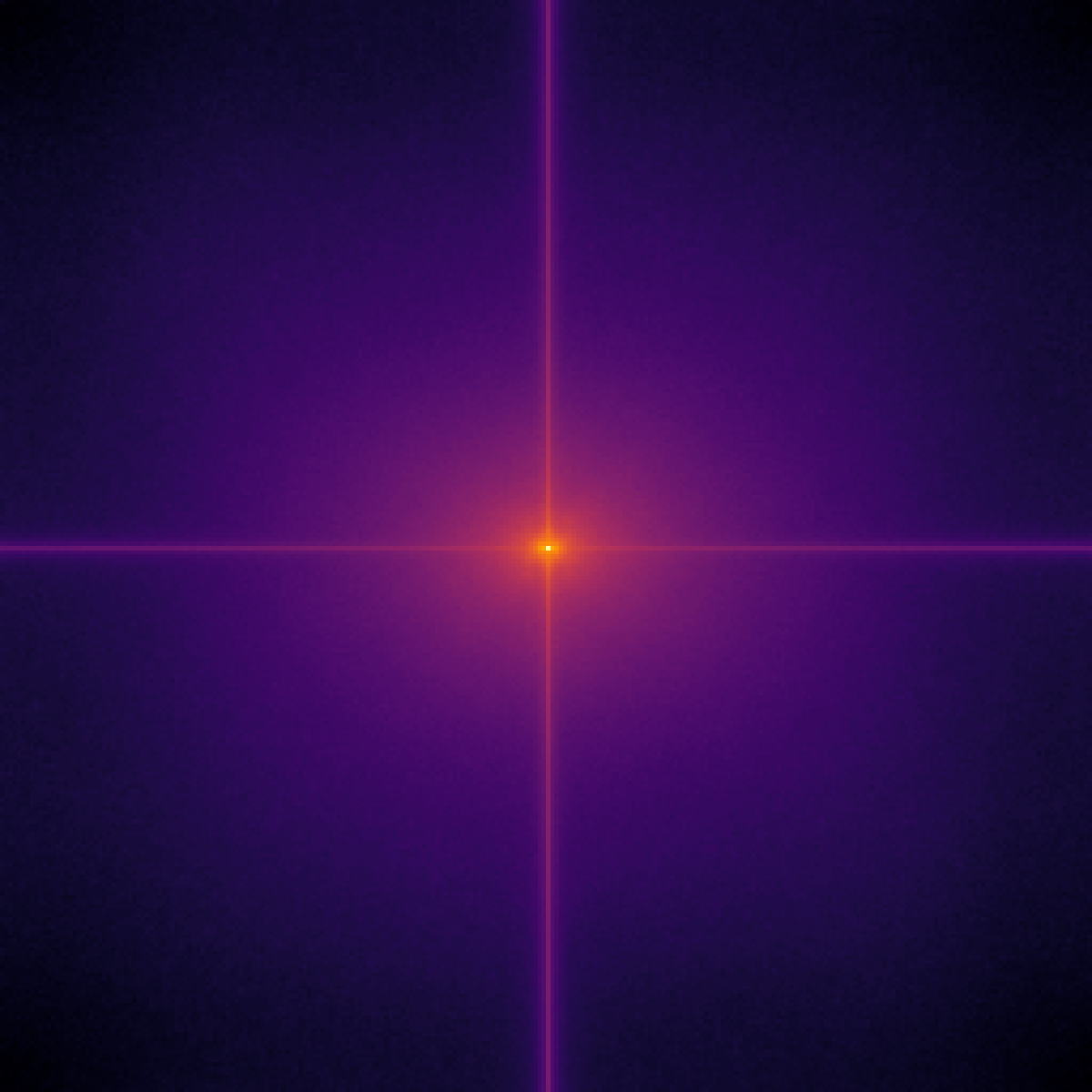}
        \caption{\small Flux}
        \label{fig:d}
    \end{subfigure}
    \hfill
    \begin{subfigure}[b]{0.15\linewidth}
        \centering
        \includegraphics[width=\linewidth]{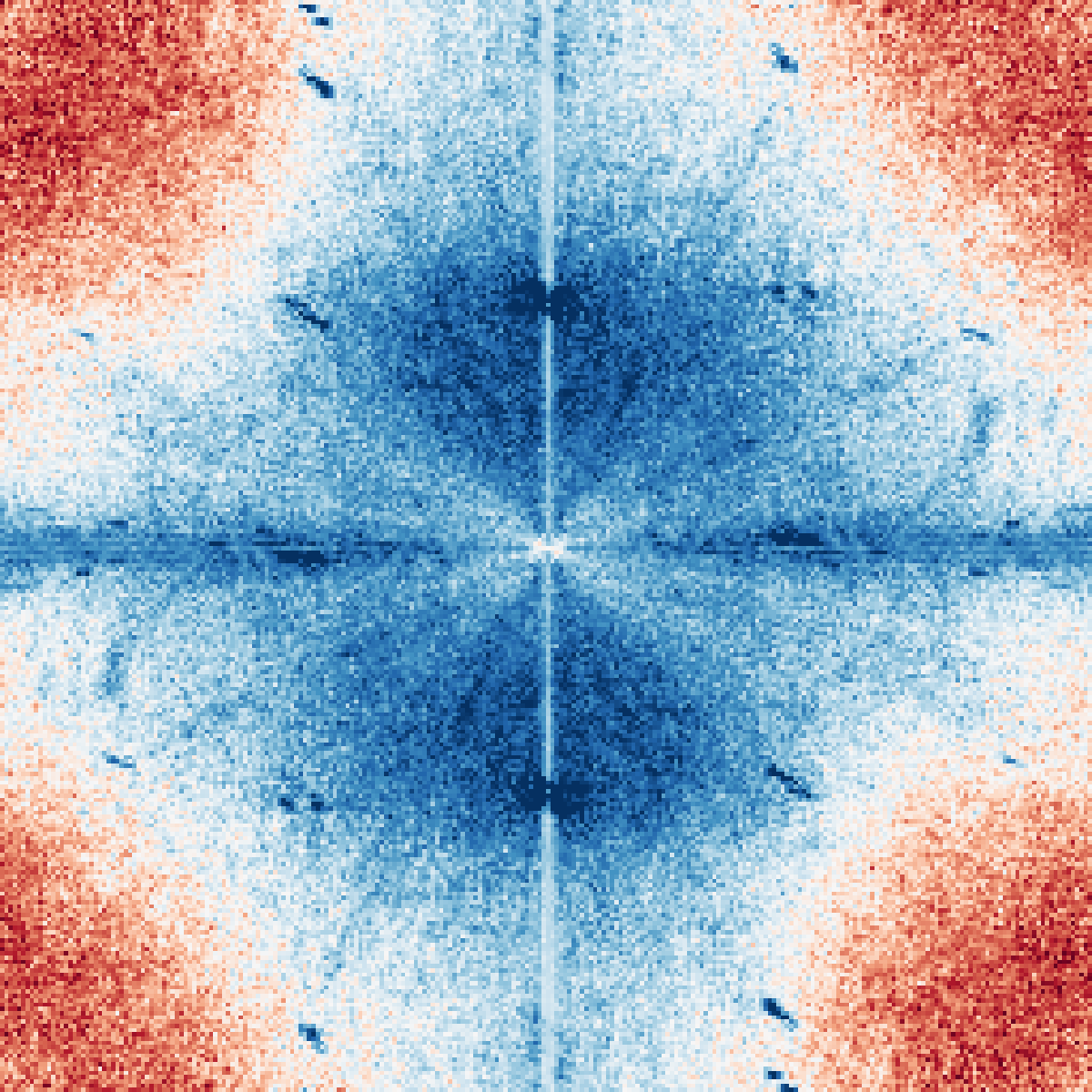}
        \caption{\small $\Delta$Flux}
        \label{fig:e}
    \end{subfigure}

    \caption{Average 2D log power spectra of real and T2I-generated images on the celebrity face dataset, along with their spectral differences. (a)(b): Average power spectra of real, SD1.5 and Flux-mini images, where brighter regions indicate higher spectral energy. (c): Spectral differences relative to real data, where red denotes higher energy in synthetic data, blue denotes lower, and white indicates no difference. 
    }
    \label{fig:spectrum}
\end{figure}
\noindent \emph{Proof.} See Appendix~\ref{app:atypicality} for details.
This phenomenon arises from a fundamental difference in how the two data sources are constructed. On one side, it is a well-established fact that state-of-the-art 
T2I generative models unavoidably embed measurable frequency and spectral artifacts into synthetic images~\cite{sander2024watermarking, karageorgiou2025any, 
ojha2023towards}. On the other side, since T2I-synthetic samples are generated from latent noise $\mathbf{z}$ drawn from a standard distribution (e.g., Gaussian), these artifacts manifest as a consistent, predictable pattern shared across synthetic images, causing them to cluster tightly in feature space. 
However, real samples exhibit non-standard, instance-specific artifacts that no generator trained on a standard latent prior can reproduce, e.g., atypical poses and lighting irregularities, which disperse them across an irregular, sparsely populated region of the feature space.
The two effects compound into an irreducible gap that prevents the 
synthetic distribution from \textit{faithfully enveloping} the real one. 
When real samples constitute a \emph{minority} within the mixed training set, the combined effect of distributional gap and numerical imbalance pushes them towards the periphery of the mixed feature space (Theorem~\ref{thm:atypicality}), placing them in the tail of the mixed distribution that are atypical relative to the dominant synthetic core.

\mypara{Empirical Evidence:} 
We first compute the average 2D log power spectra of real and T2I-synthetic images together with their residuals (Figure~\ref{fig:spectrum}). The contrast is visible at the individual spectra. The real spectrum (Figure~\ref{fig:spectrum}(a)) shows scattered off-axis components and a granular texture reflecting sample-specific irregularities, whereas the synthetic spectra (Figure~\ref{fig:spectrum}(b),(d)) dominated by axis-aligned bands, reflecting a consistent generator-specific artifacts. The residual maps (Figure~\ref{fig:spectrum}(c),(e)) further confirm that this gap is non-vanishing across spatial frequencies.
We further quantify this gap by measuring the per-class CLIP centroid distance between two data sources (Figure~\ref{fig:gap}(a)), which remains strictly positive across all classes.
To verify the downstream effect of this gap as established by Theorem~\ref{thm:atypicality}, we visualize CLIP feature representations via t-SNE~\cite{van2008visualizing} (Figure~\ref{fig:gap}(b)--(c)), comparing RSMT against a real-only baseline that adds the same number of additional real samples as the synthetic portion. In the real-only baseline (b), the original and additional real samples occupy the same regions within each cluster, with the two groups indistinguishable in their spatial distribution. Under RSMT (c), however, real samples are consistently displaced toward cluster peripheries as a tail subpopulation, while synthetic samples dominate the dense cores.

\begin{figure}[t]
    \centering
    \begin{subfigure}[b]{0.32\linewidth}
        \centering
        \includegraphics[width=\linewidth]{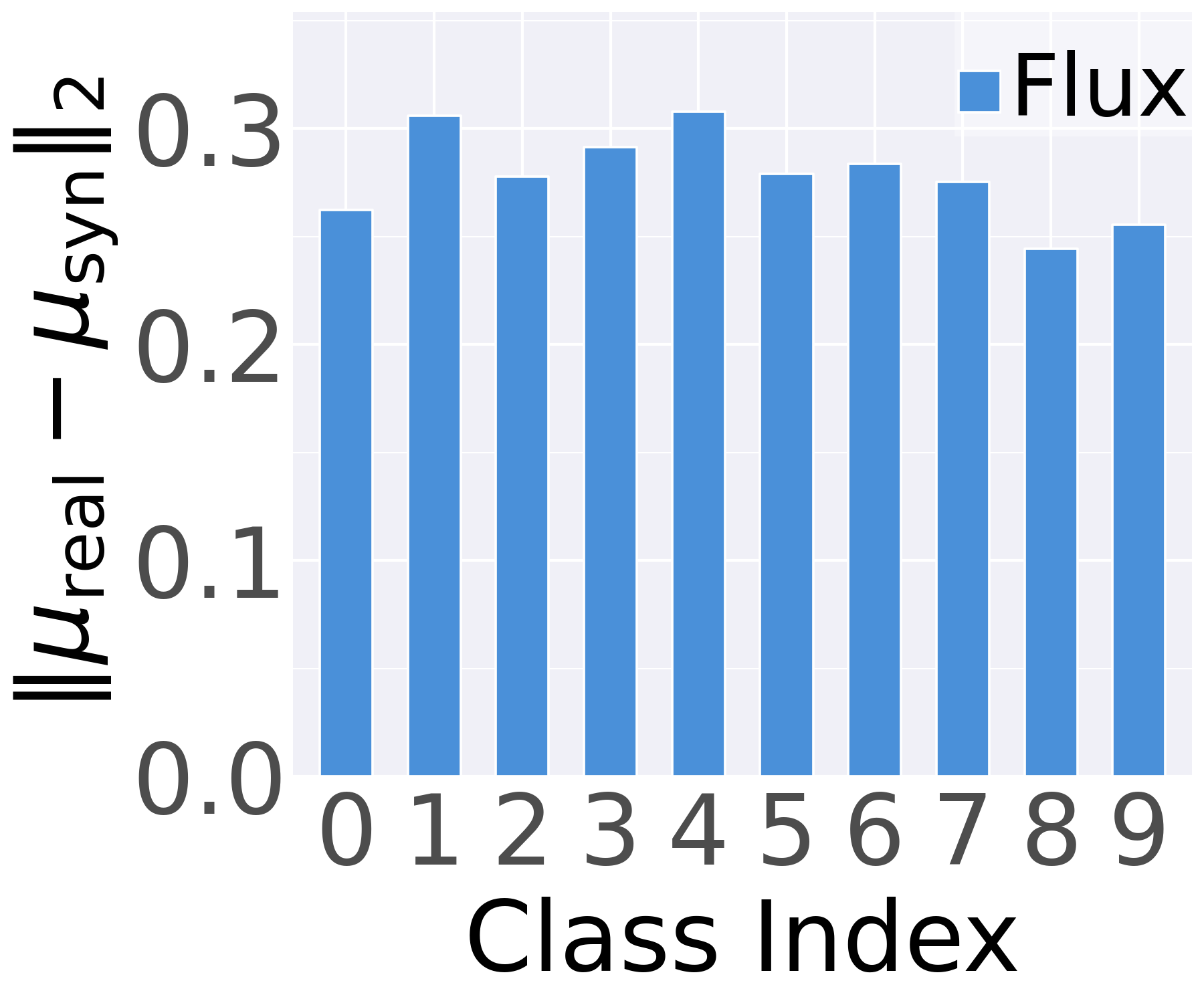}
        \caption{\small Centroid gap}
        \label{fig:centroid}
    \end{subfigure}
    \hfill
    \begin{subfigure}[b]{0.32\linewidth}
        \centering
        \includegraphics[width=\linewidth]{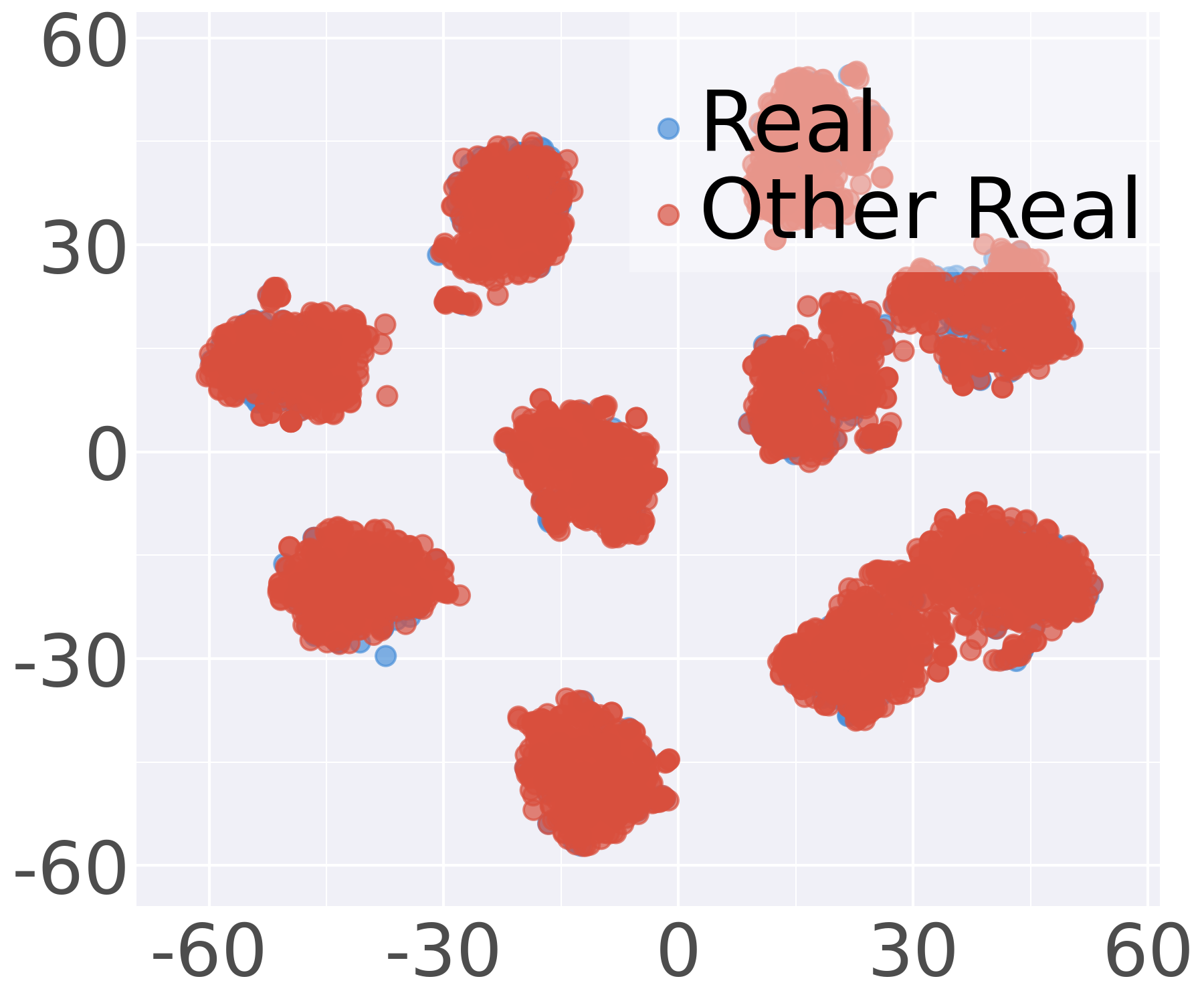}
        \caption{\small Real only baseline}
        \label{fig:tsne_b}
    \end{subfigure}
    %\vspace{6pt}
    \hfill
    \begin{subfigure}[b]{0.32\linewidth}
        \centering
        \includegraphics[width=\linewidth]{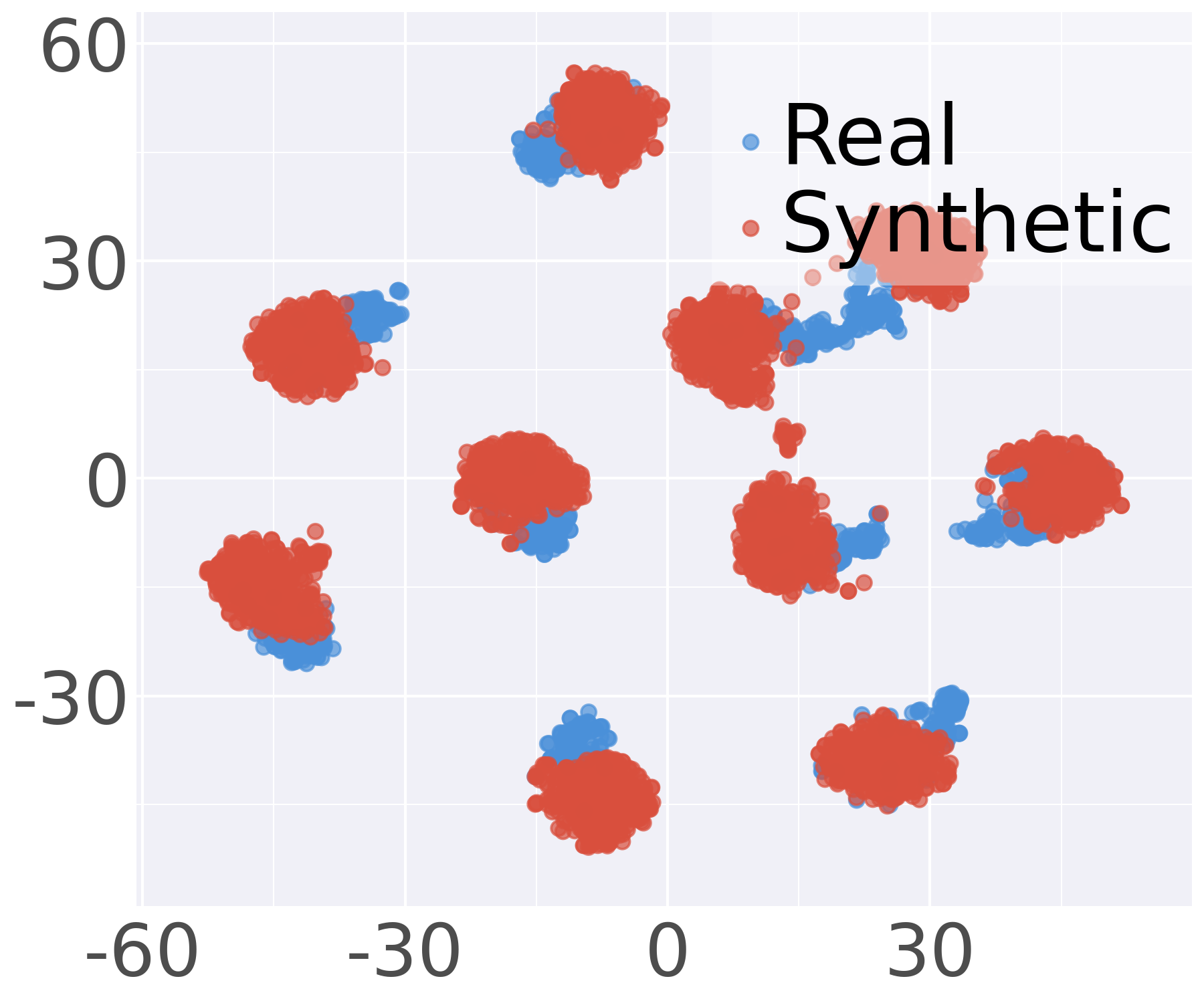}
        \caption{\small RSMT}
        \label{fig:tsne_c}
    \end{subfigure}

    \caption{
    Empirical validation of the distributional gap and peripheral displacement. (a) Per-class CLIP feature centroid distance $\|\boldsymbol{\mu}_{c}^{\text{real}} - \boldsymbol{\mu}_{c}^{\text{syn}}\|_2$ between real and Flux-mini generated synthetic data. (b)–(c) t-SNE of CLIP features for 100 original real samples augmented with: (b) 400 additional real samples, and (c) 400 Flux-mini generated synthetic samples.
    }
    \label{fig:gap}
\end{figure}

\begin{tcolorbox}[colback=black!10, colframe=black!50, boxrule=0pt, arc=3pt]
\subsubsection*{\textbf{Key Insight~2}}
\emph{Real samples, once pushed into the tail of the mixed feature distribution, turn atypical relative to the dominant synthetic core, forcing the model to memorize them more strongly.}
\end{tcolorbox}
To prove this, we adopt the memorization score~\cite{feldman2020does} to quantify the degree to which a trained model memorizes a given sample, and then establish that the RSMT pipeline fundamentally amplifies memorization for real training samples.

\begin{definition}[Memorization Score]\label{def:mem-score}
Following the formal definition introduced by Feldman~\cite{feldman2020does}, for a learning algorithm $\mathcal{A}$, training set $S$, and sample $z_i = (x_i, y_i) \in S$, the memorization score is
\[
  \mathrm{mem}(\mathcal{A},\!S,\!z_i) \!:=\! \Pr_{h \leftarrow \mathcal{A}(S)}\![h(x_i)\!=\!y_i] - \Pr_{h \leftarrow \mathcal{A}(S \setminus \{z_i\})}\![h(x_i)\!=\!y_i]
\]
A higher score indicates that the algorithm's prediction on $z_i$ relies more heavily on having observed $z_i$ during training, i.e., stronger memorization. For notational convenience, we denote the second term as the leave-one-out (LOO) accuracy: $\mathrm{loo}(\mathcal{A}, S, z_i) := \Pr_{h \leftarrow \mathcal{A}(S \setminus \{z_i\})}[h(x_i) = y_i]$.
\end{definition}

Under near-perfect training fit ($\Pr_{h \leftarrow \mathcal{A}(S)}[h(x_i) = y_i] \geq 1 - \epsilon$), the memorization score in Definition~\ref{def:mem-score} satisfies:
\begin{equation}\label{eq:mem-loo}
  \mathrm{mem}(\mathcal{A}, S, z_i) \geq (1 - \epsilon) - \mathrm{loo}(\mathcal{A}, S, z_i)
\end{equation}
Thus, lower LOO accuracy directly implies stronger memorization.

\begin{theorem}[RSMT Memorization Amplification]\label{thm:mem-amp}
Let $\mathcal{D}_{\mathrm{mix}} = \mathcal{D}_{\mathrm{real}} \cup \mathcal{D}_{\mathrm{syn}}$ with mixing ratio $\lambda = |\mathcal{D}_{\mathrm{syn}}|/|\mathcal{D}_{\mathrm{real}}| > 1$, and let $z_i = (x_i, y_i) \in \mathcal{D}_{\mathrm{real}}$ be a real training sample from class~$c$ with $N_c$ real samples per class. 

\smallskip\noindent\textbf{Assume:}

\noindent\textbf{(C1)} Irreducible distributional gap: $\|\boldsymbol{\mu}_c^{\mathrm{real}} - \boldsymbol{\mu}_c^{\mathrm{syn}}\|_2 \geq \delta > 0$ for every class~$c$. By Theorem~\ref{thm:atypicality}, this gap renders real samples distributionally atypical in the mixed feature space, justifying modeling them as a distinct subpopulation from synthetic data within each class.

\noindent\textbf{(C2)} Near-optimal training fit: $\Pr_{h \leftarrow \mathcal{A}(S)}[h(x_i) = y_i] \geq 1 - \epsilon$ for negligible $\epsilon > 0$, for both $S \in \{\mathcal{D}_{\mathrm{mix}},\, \mathcal{D}_{\mathrm{real}}^{}\}$.

\smallskip\noindent\textbf{Then:}
\[
  \mathbb{E}_{z_i}\!\left[\mathrm{mem}(\mathcal{A}, \mathcal{D}_{\mathrm{mix}}, z_i)\right]
  \;\geq\;
  \mathbb{E}_{z_i}\!\left[\mathrm{mem}(\mathcal{A}, \mathcal{D}_{\mathrm{real}}^{}, z_i)\right]
  + \eta(\delta, \lambda),
\]
where $\eta(\delta, \lambda) > 0$ whenever $\delta > 0$ and $\lambda > 0$.
%and $\eta$ is monotonically increasing in both $\delta$ and $\lambda$.
\end{theorem}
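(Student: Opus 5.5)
By (C2) and \eqref{eq:mem-loo}, each memorization score is pinned between $(1-\epsilon)-\mathrm{loo}$ and $1-\mathrm{loo}$. The whole comparison therefore reduces to a comparison of leave-one-out accuracies:
\[
\mathbb{E}_{z_i}[\mathrm{mem}(\mathcal{A},\mathcal{D}_{\mathrm{mix}},z_i)]-\mathbb{E}_{z_i}[\mathrm{mem}(\mathcal{A},\mathcal{D}_{\mathrm{real}},z_i)]
\;\geq\;
\mathbb{E}_{z_i}[\mathrm{loo}(\mathcal{A},\mathcal{D}_{\mathrm{real}},z_i)]-\mathbb{E}_{z_i}[\mathrm{loo}(\mathcal{A},\mathcal{D}_{\mathrm{mix}},z_i)]-\epsilon .
\]
It then suffices to show that removing $z_i$ hurts its own prediction strictly more in the mixed set, by an amount that is a positive function of $(\delta,\lambda)$ and dominates the negligible $\epsilon$. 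We take $\eta$ to be that gap minus $\epsilon$.

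To make LOO accuracy computable, I would fix a tractable proxy for $\mathcal{A}$ that matches the paper's feature-space picture. The natural choice is a nearest-centroid rule in $\phi$-space, or equivalently a Gaussian-likelihood classifier under the Gaussian assumptions of Theorem~\ref{thm:atypicality}. Each class is represented by the empirical mixed centroid, and classification is by comparing distances to the competing centroids.

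Under real-only training, the LOO centroid for class $c$ is $\boldsymbol{\mu}_c^{\mathrm{real}}$ up to an $O(1/N_c)$ correction. Under RSMT it is essentially $\boldsymbol{\mu}_c^{\mathrm{mix}}$, now with an $O(1/((1+\lambda)N_c))$ correction. By Theorem~\ref{thm:atypicality}, the expected squared distance from a held-out real point to its own LOO centroid increases by at least $(\lambda/(1+\lambda))^2\delta^2$. The competing class centroids are displaced in the same way, so I would write the LOO margin as an inner-product expression. Assuming Gaussian features, $\mathrm{loo}$ then becomes a Gaussian tail probability $\Phi(\text{margin}/\sigma)$. In the mixed case this margin is shifted by a term proportional to $\frac{\lambda}{1+\lambda}\delta$, for example through the component of the real–synthetic offset aligned with the decision direction.

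The gap between the two $\Phi$ values gives an explicit $\eta(\delta,\lambda)>0$. Positivity needs only $\delta>0$ and $\lambda>0$, because $\Phi$ is strictly increasing.

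The main obstacle is that the shift of class $c$'s centroid need not decrease the margin against other classes. The offset $\boldsymbol{\mu}_c^{\mathrm{syn}}-\boldsymbol{\mu}_c^{\mathrm{real}}$ could point away from competitors, and then it would actually increase classification accuracy. I would handle this in one of two ways.

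The first option is an averaging argument. By symmetry or random orientation of the offsets relative to class separations, the expected linear term in the margin vanishes. The quadratic term $(\lambda/(1+\lambda))^2\delta^2$ from Theorem~\ref{thm:atypicality} then survives as a net loss, and concavity of $\Phi$ near the operating point turns the increased variance into lower expected accuracy.

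The second option, if the first fails, is to invoke (C1)'s reading of real samples as a distinct subpopulation. Using Feldman's long-tail framework, a rare subpopulation of frequency $1/(1+\lambda)$ has LOO accuracy controlled by its visibility. The argument is then that the fraction of the fitted class mass explaining real points shrinks as $\lambda$ grows and as $\delta$ separates the two subpopulations, and this gives the required monotone $\eta$.
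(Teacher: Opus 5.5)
Your reduction to leave-one-out accuracies is fine. By using $\mathrm{mem}\le 1-\mathrm{loo}$ for the real-only term, it is slightly more careful than simply cancelling the $(1-\epsilon)$ terms.

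The gap is the core step. You never show that the real points' LOO accuracy drops by a positive amount depending only on $(\delta,\lambda)$, and under the hard nearest-centroid proxy you chose, this is false in general. There the LOO decision depends on the margin $\|\phi(x_i)-\mathbf{m}_{c'}\|_2^2-\|\phi(x_i)-\mathbf{m}_c\|_2^2$ between classifier centroids. Because (C1) shifts \emph{every} class, the quadratic terms enter with opposite signs. For a common shift $\boldsymbol{\Delta}$ they cancel exactly, and the margin changes only by $2\langle\boldsymbol{\mu}_{c'}-\boldsymbol{\mu}_c,\boldsymbol{\Delta}\rangle$, which can have either sign. Even if only class $c$ shifts, a shift toward a competitor moves the bisector away from the real points and \emph{raises} their accuracy.

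So the claim in your first option that $(\tfrac{\lambda}{1+\lambda})^2\delta^2$ ``survives as a net loss'' is wrong. The leftover Jensen/concavity effect needs a random, symmetric orientation hypothesis that the theorem does not contain. It also gives only an average over orientations, and yields an $\eta$ depending on inter-class geometry rather than on $(\delta,\lambda)$ alone. The second option is not an argument: class $c$ has the same $N_c$ real examples in $\mathcal{D}_{\mathrm{real}}$ and $\mathcal{D}_{\mathrm{mix}}$, so Feldman-style frequency reasoning does not by itself separate the two LOO accuracies.

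The paper sidesteps the competitor issue by modelling rather than by a margin analysis. It posits $\log\Pr_{h\leftarrow\mathcal{A}(S)}[h(x)=c]=-\tfrac{1}{2\sigma^2}\|\phi(x)-\hat{\boldsymbol{\mu}}_c(S)\|_2^2+Z(x,S)$. It then treats the normalizer $Z$ as $O(\epsilon)$ and essentially unaffected by removing $z_i$. After a first-order Taylor expansion, the LOO probability is a decreasing affine function of the \emph{own-class} squared distance alone, and the LOO gap is $\tfrac{1}{2\sigma^2}\bigl(D(\mathcal{D}_{\mathrm{mix}})-D(\mathcal{D}_{\mathrm{real}})\bigr)$ up to small error.

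The quantity you already identified in your third paragraph finishes the job. That is the decomposition from Theorem~\ref{thm:atypicality} with vanishing cross term, plus $O_p(B/\sqrt{N_c})$ concentration of the empirical centroids. It gives $D(\mathcal{D}_{\mathrm{mix}})-D(\mathcal{D}_{\mathrm{real}})\ge(\tfrac{\lambda}{1+\lambda})^2\delta^2-O(B^2/N_c)-O(B\delta/\sqrt{N_c})$, hence $\eta=\tfrac{1}{2\sigma^2}(\tfrac{\lambda}{1+\lambda})^2\delta^2$, up to a remainder vanishing as $N_c\to\infty$ and $\epsilon\to 0$.

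Had you kept that own-distance link instead of switching to competing centroids, you would be done. Your worry about competitors is a fair criticism of how much weight that assumption carries. As written, though, your proposal commits to no hypothesis under which the inequality follows.
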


\noindent \emph{Proof.} See Appendix~\ref{app:mem-amp} for details.
The memorization amplification revealed in Theorem~\ref{thm:mem-amp} is grounded in prior works establishing that atypical training samples from the subpopulation cannot be adequately learned by shared statistical patterns and therefore require memorization to achieve close-to-optimal generalization error~\cite{feldman2020does, feldman2020neural}, as networks systematically prioritize simple shared patterns before attending to such examples~\cite{arpit2017closer, garg2024memorization}.
Under RSMT, real samples are precisely those transformed into distributionally atypical samples in peripheral regions (Theorem~\ref{thm:atypicality}), carrying non-standard, instance-specific artifacts that synthetic samples decoded from a standard latent prior cannot reproduce, which renders them hard examples for the network to learn, while synthetic samples act as easy ones, thereby further increasing reliance on memorization.

\begin{figure}[t]
    \centering
    \begin{subfigure}[b]{0.32\linewidth}
        \centering
        \includegraphics[width=\linewidth]{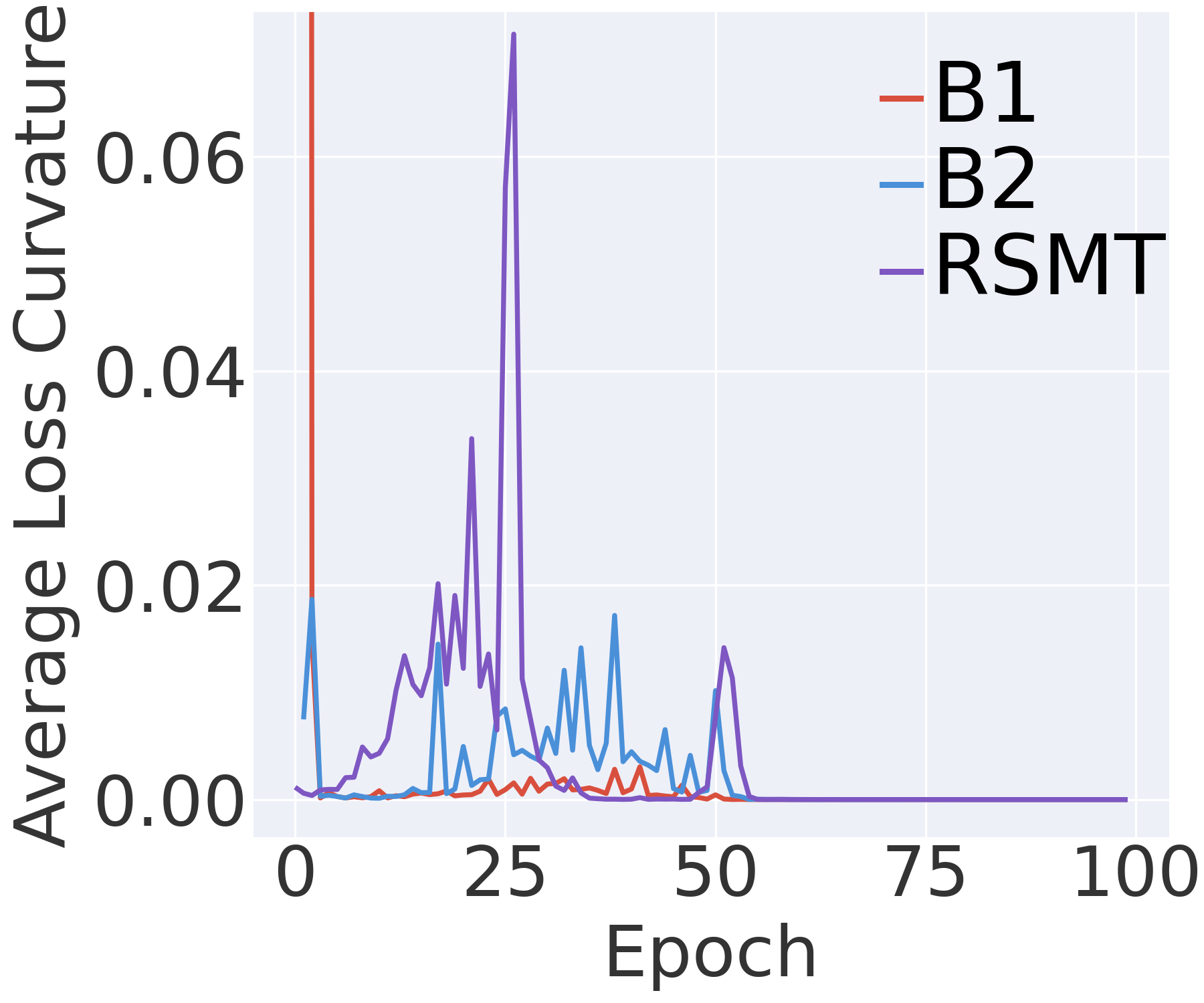}
        \caption{\small RSMT vs.\ real-only baselines.}
        \label{fig:}
    \end{subfigure}
    \hfill
    \begin{subfigure}[b]{0.32\linewidth}
        \centering
        \includegraphics[width=\linewidth]{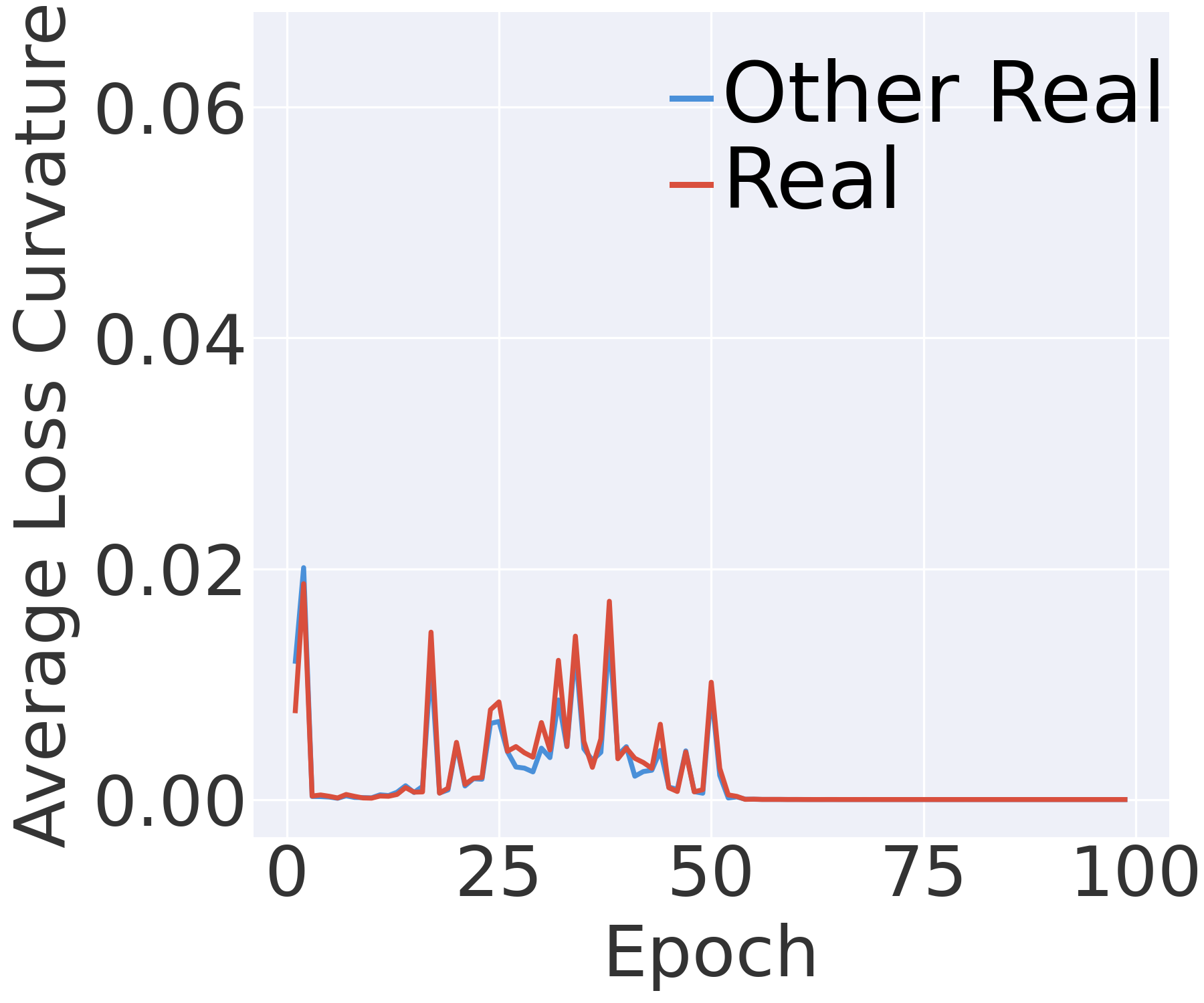}
        \caption{\small Two real subsets within B2.}
        \label{fig:}
    \end{subfigure}
    %\vspace{6pt}
    \hfill
    \begin{subfigure}[b]{0.32\linewidth}
        \centering
        \includegraphics[width=\linewidth]{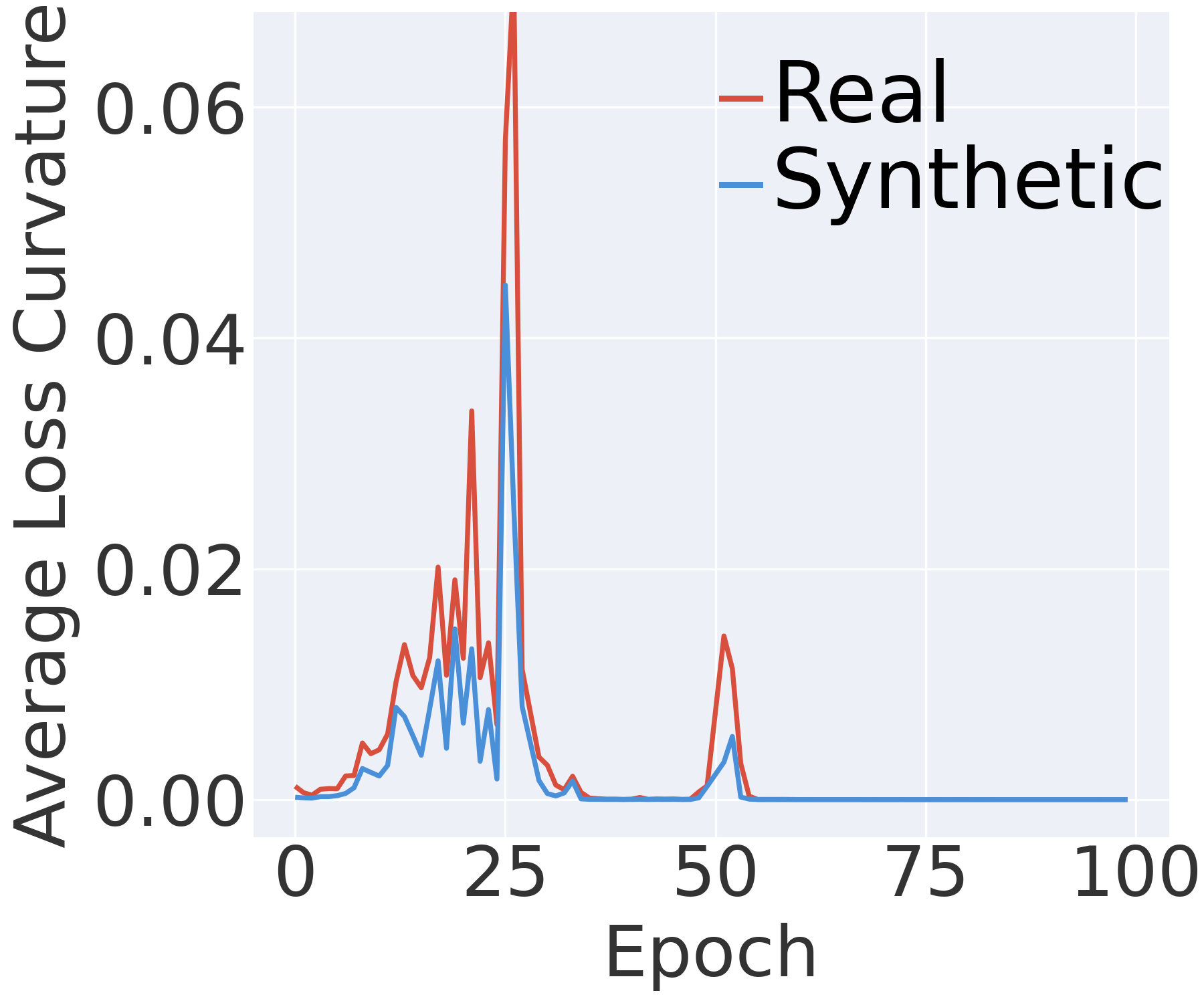}
        \caption{\small Real vs.\ synthetic within RSMT.}
        \label{fig:}
    \end{subfigure}
    \caption{ Per-epoch average loss curvature. 
    }
    \label{fig:curvature}
\end{figure}

\mypara{Comparison Across Real-Only Baselines.}
We compare the memorization of real samples under RSMT against two real-only baselines: \textit{(B1) Same-$N$ baseline}, trained on the same $N$ real samples per class as $\mathcal{D}_{\mathrm{mix}}$ without any augmentation; and \textit{(B2) Accuracy-matched baseline}, trained on $N$ real samples augmented with additional real samples until its test accuracy matches that of the mix-trained model, which rules out overfitting from data scarcity as the source of memorization, leaving the distributional gap introduced by synthetic data as the only remaining factor.

The memorization gap between RSMT and B1 reflects two competing effects: \textbf{(I)}~The additional synthetic data enlarges the total training set, 
partially reducing the per-sample memorization burden as overall model 
quality improves;
\textbf{(II)}~The minority subpopulation effect 
(Theorem~\ref{thm:mem-amp}) increases the per-sample memorization cost for real samples.
When $\lambda$ is large and $\delta$ is substantial, effect~(II) dominates 
and $\mathrm{mem}(\rm RSMT) > \mathrm{mem}(\mathcal{B}_1)$; 
when $\delta$ is small, effect~(I) can dominate and the inequality reverses. 
The comparison against B2 instead reflects only effect~(II), since model 
quality is matched, so $\mathrm{mem}(\rm RSMT) > 
\mathrm{mem}(\mathcal{B}_2)$ should hold whenever $\delta > 0$. 
Section~\ref{main_result} empirically showing that on identical real training samples RSMT amplifies MIA success \emph{not always} above B1 but \emph{consistently} above B2.

\mypara{Empirical Evidence:} 
We quantify the memorization of identical real training samples via per-epoch average loss curvature~\cite{garg2024memorization}, which serves as a reliable proxy for the memorization score in Definition~\ref{def:mem-score}.
Conceptually, higher curvature, especially when sustained over training, indicates stronger memorization, as it reflects the model's increased sensitivity to local perturbations around individual samples.
Figure~\ref{fig:curvature}(a) shows that the same 100 real samples per class under mix-training (100 real + 400 synthetic) exhibit higher curvature than under two real-only baselines.
Figure~\ref{fig:curvature}(b) verifies that two disjoint subsets of real samples within B2 follow nearly identical curvature trajectories, so the gap in (a) must arise from the introduction of synthetic data. Figure~\ref{fig:curvature}(c) further shows that real samples consistently exhibit higher curvature than synthetic samples across epochs and thus form the high-memorization subpopulation within the mixed distribution, in direct agreement with Theorem~\ref{thm:mem-amp}.

\begin{figure*}[t]
    \centering
    \includegraphics[trim=0 0 0 0,clip,width=1.0\textwidth]{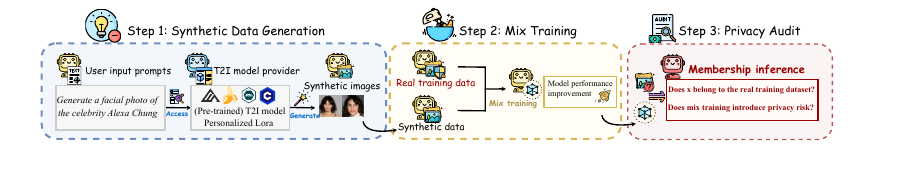}
    \caption{Non-Adversarial \name workflow.}
    \label{fig:pipline}
\end{figure*}

% \section{Benign RSMT Leakage Audit}
\section{Non-Adversarial \name}
\label{sec:standard}
This section presents \name, the first framework systematic investigates whether the standard RSMT pipeline inadvertently exacerbates the privacy leakage of sensitive real training samples through the lens of MIAs.
Notably, our goal in this section is to strictly audit the privacy exposure induced by the intrinsic distributional gap between real and T2I-generated data, thus establishing a lower bound on the privacy risk of RSMT, rather than to engineer an adversarial pipeline deliberately crafted to further amplify such leakage and increase vulnerability to MIAs (Section~\ref{invasive}).
Next, we formalize the threat model and detail the pipeline of Non-Adversarial \name, followed by the experimental results. 

\subsection{Threat Model}
\label{sec:threat1}

Here, we consider a setting in which an auditor (adversary) assesses
whether a benign, non-adversarial RSMT pipeline leaks information about
real training members. The T2I model provider is therefore assumed to be
honest, such that synthetic data are generated without malicious intervention.
% Section~\ref{sec:threat2} later relaxes this assumption to consider an adversarial T2I supply chain. 
Below, we align our threat model
with the standard black-box MIAs~\cite{li2024seqmia,he2024difficulty,liu2022membership,carlini2022membership,shokri2017membership}.

\mypara{Adversary's Knowledge.}
As adopted in prior works~\cite{li2024seqmia,he2024difficulty,liu2022membership},
the adversary is restricted to querying the victim classification model
$M_v$ trained via RSMT through an oracle (e.g., a black-box API) to obtain
the posterior probability distribution for a given input.
Following~\cite{shokri2017membership,carlini2022membership}, the adversary
further possesses a local shadow real dataset $\mathcal{D}_s^{\rm real}$
drawn from the same distribution as the victim's real dataset but without any overlap. The adversary additionally has access to the same T2I model used by the victim---black-box for closed-source APIs, white-box for
open-source backbones---to generate shadow synthetic data
$\mathcal{D}_s^{\rm syn}$.

\mypara{Adversary's Capability.}
The adversary cannot access the internal parameters of the victim model
$M_v$ nor tamper with the RSMT pipeline at training time (e.g., data
poisoning~\cite{ma2024watch}) or inference time (e.g., fault
injection~\cite{li2024yes}). However, the adversary can leverage
$\mathcal{D}_s = \mathcal{D}_s^{\rm real} \cup \mathcal{D}_s^{\rm syn}$ to
train a set of shadow mix-trained models that mimic the behavior of $M_v$.

\mypara{Adversary's Goal.}
Given a target sample $x$ drawn from the same distribution as the real
training data, the adversary aims to determine whether $x$ was included
in the real training set of the victim model $M_v$.

\subsection{Method}
\label{sec:method}

As illustrated in Figure~\ref{fig:pipline}, \name comprises three stages: synthetic data generation, mix training, and privacy auditing. Next, we provide a detailed description of each stage.

\mypara{Stage 1: Synthetic Data Generation.}
We begin by constructing a high-quality synthetic dataset $\mathcal{D}_{\rm syn}$ whose class semantics align with those of the real dataset $\mathcal{D}_{\rm real}$. For example, $\mathcal{D}_{\rm real}$ contains facial photographs of celebrities such as Amber Heard and the goal is to synthesize photo-realistic facial images of the same identities to serve as augmentation. The stage proceeds in two steps.

\noindent \textit{T2I Model Preparation.} First, we select a T2I model suited for the target domain. More concretely, for general-domain concepts (e.g., ImageNet categories), large-scale pre-trained models such as SD1.5~\cite{rombach2022high} or Flux~\cite{labs2025flux1kontextflowmatching} are typically sufficient, as their broad training corpora provide adequate coverage.
For specialized or fine-grained concepts (e.g., celebrity faces, satellite imagery), the base model rarely reproduces the target concept. We therefore apply parameter-efficient fine-tuning via LoRA~\cite{hu2022lora} to adapt the base T2I model to each target concept, either by fine-tuning on a small set of real samples or by reusing publicly released concept-specific checkpoints from community repositories such as Civitai~\cite{civitai}. When budget permits, commercial T2I APIs such as Google's Nano Banana~\cite{comanici2025gemini} offer an alternative that renders most concepts out of the box.

\noindent \textit{Prompt Design.}
We then construct class-conditional prompts via a diverse set of natural language templates that vary contextual attributes such as background, lighting, and expression, while holding the class name fixed. 
For example, templates of the form ``\textit{A facial photo of the celebrity [class name] under natural daylight}'' and ``\textit{A facial photo of the celebrity [class name] with long straight golden-blonde hair, a warm smile}'' are instantiated with identity-specific class names (e.g., \textit{[class name] = Amber Heard}) to produce visually diverse synthetic samples per identity.

Based on the above preparations, the instantiated prompts are then fed into the prepared T2I model (with/without LoRA) to generate class-wise synthetic images, which are automatically labeled by the class name embedded in each prompt and aggregated into $D_{\rm syn}$.

%---------------------------------------------------------------
\mypara{Stage 2: Mix Training.}
Synthetic data complements the real dataset to improve downstream classification performance, particularly when real data is scarce. 
For instance, the limited per-celebrity facial photographs are augmented with synthesized facial images to enlarge the training set. Formally, a victim model $M_v$ is trained on $\mathcal{D}_{\rm real} \cup \mathcal{D}_{\rm syn}$ with mixing ratio $\lambda = |\mathcal{D}_{\rm syn}|/|\mathcal{D}_{\rm real}| \geq 1$, and is expected to outperform a baseline model trained on $\mathcal{D}_{\rm {real}}$ alone.

%---------------------------------------------------------------

\mypara{Stage 3: Privacy Audit via Membership Inference.}
To quantify whether the standard RSMT paradigm amplifies privacy leakage on valuable real training data, we conduct the membership inference attack against the victim model $M_v$. An adversary with black-box query access $M_v$ attempts to determine, for a given target sample $x$, whether $x \in D_{\rm real}$. Here, we consider four representative membership inference attacks in non-adversarial \name, including two train-based and two metric-based inference strategies:

\textit{Train-based} strategies rely on shadow models to generate meta-data, which is then used to train a meta-classifier for membership inference. 
Different methods vary in how they construct meta-data and design the meta-classifier. Shokri \textit{et al.}~\cite{shokri2017membership} rely solely on output posteriors as meta-data, with a random forest serving as the meta-classifier in our experiments. Yuan \textit{et al.}~\cite{yuan2022membership} extend this by incorporating posteriors, labels, and sensitivity as meta-data features, fed into a transformer-based meta-classifier.

\textit{Metric-based} strategies determine membership by thresholding metrics computed from the victim model’s outputs, requiring no meta-classifier training. Different approaches rely on different metrics. In this work, we consider two metrics: one based on posteriors~\cite{song2019privacy}, and the other based on a pairwise likelihood ratio test, referred to as RMIA~\cite{ZarifzadehLS24}, a low-cost and high-power variant of LiRA~\cite{carlini2022membership}.
%Thresholds are selected either optimally (worst-case, maximizing separation between member and non-member score distributions) or learned via shadow models in practice.
\subsection{Experiment Setup}

\mypara{Datasets.}
We evaluate on five datasets spanning diverse domains: 
EuroSAT~\cite{helber2019eurosat} for satellite imagery, PatternNet~(5)~\cite{zhou2018patternnet} for remote sensing, VGGFace2~(10)~\cite{cao2018vggface2} for facial identities, ImageNet10~\cite{deng2009imagenet} and ImageNet100~\cite{tian2020contrastive} for natural images. 
Numbers in parentheses, e.g., VGGFace2~(10), denote the number of classes sampled from the original dataset for classification, as the original VGGFace2 contains over 9000 identities with severely imbalanced per-class image counts.
Due to space constraints, detailed dataset descriptions are provided in Appendix~\ref{app:datasets}.

\mypara{Baselines.}
We consider two real-only baselines trained without any synthetic data described in Section~\ref{sec:theory}.

\noindent \textit{(B1) Same-$N$ baseline} trained on the same $N$ real samples as those employed in RSMT, without any synthetic augmentation.

\noindent \textit{(B2) Accuracy-matched baseline} trained on the $N$ real samples augmented with additional real samples until its test accuracy matches that of the mix-trained model, ensuring that any memorization gap reflects the influence of synthetic data itself, rather than a larger training set reducing overfitting affecting memorization.

\mypara{Models.}
Our pipeline comprises an \textit{upstream T2I generator} that produces synthetic training samples and a \textit{downstream victim classifier} that consumes them.

On the upstream side, we adopt four T2I models spanning closed- and open-source models: two closed-source commercial APIs, Google's Nano Banana~\cite{comanici2025gemini} and OpenAI's latest ChatGPT Images 2.0~\cite{chatgptimages2026}, together with two open-source pre-trained models built on distinct architectures, SD1.5~\cite{rombach2022high} and Flux-mini~\cite{tencentarc2024fluxmini} (a lightweight substitute for FLUX.1-dev). 
SD1.5 follows the UNet-based denoising architecture and remains a standard reference in academic research~\cite{Pang025,li2025towards,sun2025pretender}, while Flux represents a more recent open-source model built on the increasingly prevalent DiT architecture.
For both open-source models, we further apply LoRA-based fine-tuning on each dataset to improve concept-specific generation quality. 

On the downstream side, we adopt the CNN-based ResNet-18~\cite{he2016deep} as the victim classification model.

\mypara{Mixing Ratio.}
We fix the real-to-synthetic ratio at $1{:}4$, with $100$ real and $400$ synthetic samples per class, simulating the rare data regime where real samples are scarce and synthetic augmentation dominates the training set.

\mypara{Evaluation Metrics.}
Following prior studies~\cite{li2024seqmia,he2024difficulty,liu2022membership,ye2022enhanced}, we evaluate MIA using balanced accuracy and AUC for average-case performance, and TPR @ low FPR and the full log-scale ROC curve following Carlini \textit{et al.}~\cite{carlini2022membership} to assess the low false-positive regime, which is more reflective of real-world privacy risk.
\begin{itemize}
    \item \textit{Balanced Accuracy} quantifies the probability that an MIA correctly infers membership status over a balanced set of members and non-members.
    \item \textit{AUC} is the area under the receiver operating characteristic (ROC) curve~\cite{sankararaman2009genomic} , which measures the ability to distinguish members from non-members across all possible decision thresholds.
    \item \textit{TPR @ low FPR} measures the true positive rate at a stringently low false positive rate (e.g., 0.1\% FPR) and serves as a more reliable indicator of privacy leakage than balanced accuracy or AUC, which are often dominated by the correct identification of non-members.

\end{itemize}

\begin{table}[t]
\centering
\caption{ResNet-18 model classification accuracy across datasets and fine-tuned T2I models.}
\label{tab:acc-finetuned}
\scalebox{0.7}{
\begin{tabular}{ll ccc ccc}
\toprule
\multirow{2}{*}{T2I Model}
  & \multirow{2}{*}{Dataset}
  & \multicolumn{3}{c}{Train Acc (\%)}
  & \multicolumn{3}{c}{Test Acc (\%)}\\
\cmidrule(lr){3-5} \cmidrule(lr){6-8}
   &
  & B1 & B2 & RSMT
  & B1 & B2 & RSMT\\
\midrule

\multirow{4}{*}{\makecell[l]{SD1.5}}
  &EuroSAT
    &100&100&100 &76.9&88.0&88.3\\
  &VGGFace2 (10)
    &100&100&100 &64.9&78.7&78.9\\
  &ImageNet10-A
    &100&100&100 &62.9&76.3&76.4\\
  &ImageNet100
    &99.9&99.9&99.9 &46.5&62.0&62.7\\
\midrule
%180
\multirow{3}{*}{\makecell[l]{Flux-mini}}
  &PatternNet (5)
    &100&100&100 &86.2&96.6&96.6\\
  &VGGFace2 (10)
    &100&100&100 &64.9&80.0&81.8\\
  &ImageNet10-B
    &100&100&100 &59.8&71.0&71.7\\
\bottomrule
\end{tabular}%
}
\\[1pt] \raggedright \scriptsize ImageNet10-A and ImageNet10-B are two 10-class subsets of ImageNet, selected as the classes that SD1.5 and Flux-mini can generate with high fidelity.
\end{table}

\begin{table}[t]
\centering
\caption{Attack performance of different attacks on varying datasets and T2I models (pre-trained and commercial).}
\label{tab:results_pre}
\scalebox{0.6}{
\begin{tabular}{ll cc>{\columncolor{mixbg}}c cc>{\columncolor{mixbg}}c cc>{\columncolor{mixbg}}c}
\toprule
\multirow{2}{*}{T2I Model}
  & \multirow{2}{*}{Attack}
  & \multicolumn{3}{c}{MIA Acc (\%)}
  & \multicolumn{3}{c}{MIA Auc (\%)}
  & \multicolumn{3}{c}{TPR @ 0.1\% FPR (\%)} \\
\cmidrule(lr){3-5} \cmidrule(lr){6-8} \cmidrule(lr){9-11}
  Dataset &Method
  & B1 & B2 & RSMT
  & B1 & B2 & RSMT
  & B1 & B2 & RSMT \\
\midrule
\multirow{4}{*}{\makecell[l]{Nano Banana\\VGGFace2(3)}}
  &\cite{song2019privacy}
  &73.5 &72.8 &\textbf{79.5} &74.4 &74.0 &\textbf{78.7} &0.0&0.0&0.0\\
  &~\cite{yuan2022membership}
    &72.0 &72.0 &\textbf{77.6} &75.5 &73.9 &\textbf{80.0} &0.0&0.0&\textbf{4.0}\\
  &~\cite{shokri2017membership}
    &65.8 &64.5 &\textbf{75.6} &72.4 &71.1 &\textbf{79.7} &1.7&0.3&\textbf{3.7}\\
  &~\cite{ZarifzadehLS24}
    &71.6 &66.8 &\textbf{72.3} &67.7 &63.2 &\textbf{70.9} &\textbf{1.3}&0.0&0.0\\
\midrule

\multirow{4}{*}{\makecell[l]{Images 2.0\\VGGFace2(3)}}
  &\cite{song2019privacy}
  &73.5 &70.6 &\textbf{76.5} &74.4 &72.0 &\textbf{76.4} &0.0&0.0&0.0\\
  &~\cite{yuan2022membership}
    &72.0 &70.5 &\textbf{74.8} &75.5 &73.6 &\textbf{75.9} &0.0&0.0&0.0\\
  &~\cite{shokri2017membership}
    &65.8 &64.1 &\textbf{70.8} &72.4 & 70.0 &\textbf{73.5} &\textbf{1.7}&0.6&1.1\\
  &~\cite{ZarifzadehLS24}
    &\textbf{71.6} &64.6 &67.5 &67.7 &60.3 &\textbf{68.9} &\textbf{1.3}&0.0&0.0\\
\midrule

\multirow{4}{*}{\makecell[l]{SD1.5\\ImageNet100}}
  &\cite{song2019privacy}
    &93.4 &87.2 &\textbf{95.2} &93.9 &88.0 &\textbf{95.6} &0.0 &0.0 &0.0 \\
  &~\cite{yuan2022membership}
    &89.2 &84.7 &\textbf{93.5} &95.4 &90.4 &\textbf{97.0} &3.0 &2.5 &\textbf{7.7} \\
  &~\cite{shokri2017membership}
    &86.3 &80.1 &\textbf{94.1} &93.2 &86.9 &\textbf{97.8} &2.5 &2.1 &\textbf{8.2} \\
  &~\cite{ZarifzadehLS24}
    &75.1 &73.1 &\textbf{77.2} &83.1 &79.3 &\textbf{86.0} &12.0 &7.9 &\textbf{21.2} \\
\midrule
\multirow{4}{*}{\makecell[l]{Flux-mini\\ImageNet10-B}}
  &\cite{song2019privacy}
    &82.3 &76.3 &\textbf{84.1} &\textbf{83.2} &77.3 &81.6 &0.0&0.0&0.0\\
  &~\cite{yuan2022membership}
    &\textbf{82.9} &75.9 &81.0 &\textbf{87.7} &80.8 & 85.1 &0.2 &0.2 &\textbf{0.8} \\
  &~\cite{shokri2017membership}
    &76.6 &67.9 &\textbf{79.2} &83.6 &73.7 &\textbf{85.3} &2.7 &0.3&\textbf{2.8}\\
  &~\cite{ZarifzadehLS24}
    &\textbf{75.2} &71.0 &72.4 &\textbf{78.6} &75.1 &76.6 &4.3&1.7&\textbf{4.4}\\
\bottomrule
\end{tabular}%
}
\end{table}

\subsection{Evaluation Results}
\label{main_result}

%\hl{[HYOUNG]The section title ``Main Result'' sounds odd. In this section, include a quantitative comparison of how much RSMT amplifies leakage relative to B1/B2, which attack is stronger, and the pattern differences across T2I backbones. I think the Discussion should also be in a separate section.}

\mypara{Generation Quality and Downstream Utility.}
As illustrated in Figure~\ref{fig:gen-quality} in Appendix~\ref{app:result}, synthetic samples produced by the closed-source commercial APIs and the open-source models are visually consistent with the corresponding real samples in class semantics. Tables~\ref{tab:acc-pretrained} (Appendix~\ref{app:result}) and~\ref{tab:acc-finetuned} further report downstream classification accuracy under the pre-trained/API and domain-fine-tuned regimes, respectively, where RSMT consistently outperforms the same-$N$
real-only baseline (B1)
%---trained on the same $N$ real samples as RSMT without synthetic augmentation---
across all settings, confirming that synthetic augmentation delivers a non-trivial utility gain.

\mypara{MIA Performance Against B2.} Tables~\ref{tab:results_pre}, \ref{tab:results_sd}, and~\ref{tab:resultsflux} report MIA performance across three configurations: pre-trained open-source backbones (SD1.5, Flux-mini) and commercial APIs (Google's Nano Banana, OpenAI's ChatGPT Images 2.0), and LoRA-fine-tuned SD1.5 and Flux-mini, respectively. 
RSMT surpasses the accuracy-matched real-only baseline (B2)
%---augmenting the $N$ real samples with extra real samples until its test accuracy matches RSMT model---
across all MIA attacks, datasets, and backbones evaluated, isolating the synthetic data as the dominant source of memorization of real training data once model utility is matched (i.e., the effect of overfitting is ruled out). This amplification persists regardless of whether the synthetic data originates from pre-trained backbones, LoRA-fine-tuned variants, or commercial APIs. For instance, on fine-tuned SD1.5/VGGFace2(10) with attack~\cite{yuan2022membership}, RSMT lifts MIA accuracy from 77.1\% to 85.7\%, AUC from 81.0\% to 91.1\%, and TPR@0.1\% FPR from 0.1\% to 9.5\%; on Nano Banana/VGGFace2(3) with attack~\cite{shokri2017membership}, MIA accuracy rises from 64.5\% to 75.6\% and AUC from 71.1\% to 79.7\%.

\mypara{MIA Performance Against B1.}
MIA performance under RSMT exceeds that under B1 in the majority of settings, e.g., on pre-trained SD1.5/ImageNet100 with attack~\cite{ZarifzadehLS24}, accuracy rises from 75.1\% to 77.2\%, AUC from 83.1\% to 86.0\%, and TPR@0.1\% FPR from 12.0\% to 21.2\%, indicating that synthetic data amplifies privacy leakage on real training samples even while serving as augmentation and mitigating overfitting. However, MIA performance under RSMT remains below B1 in a few settings, such as synthetic data generated by fine-tuned Flux-mini, where the smaller $\delta$ (Table~\ref{tab:delta}) weakens the minority-subpopulation effect of Theorem~\ref{thm:mem-amp} (effect II), so that the opposing benefit of the enlarged training set in reducing the per-sample memorization burden (effect I) dominates, in line with the two-effect trade-off characterized in Section~\ref{sec:theory}.

\mypara{Pre-trained vs. Fine-tuned on the Same Backbone.} LoRA fine-tuning narrows but does not close the leakage gap relative to the pre-trained variant, as fine-tuning brings the synthetic outputs closer to the real distribution and further lifts downstream classification accuracy. For instance, under attack~\cite{shokri2017membership} on SD1.5/ImageNet100, TPR@0.1\% FPR drops from 8.2\% (pre-trained) to 4.9\% (fine-tuned), with AUC decreasing from 97.8\% to 96.7\%. The amplification over B2 nonetheless persists, since the irreducible artifacts of T2I generation keep $\delta>0$.

\mypara{Effect of the Distributional Gap $\delta$/Different T2I Backbones.}
We further examine how the real–synthetic distributional gap $\delta$ (per-class real–synthetic centroid distance), induced by different T2I backbones, impacts privacy leakage. 
By fixing $\mathcal{D}_{\mathrm{real}}$ and the mixing ratio $\lambda$, we vary only the T2I backbone used to generate the synthetic data, resulting in different $\delta$ values across backbones.
 
Table~\ref{tab:delta} reports how the RSMT-induced privacy leakage varies with $\bar{\delta}$, measured by the MIA accuracy and AUC gaps ($\Delta\mathrm{Acc}$, $\Delta\mathrm{AUC}$) between the RSMT model and the B1, revealing two key trends.
First, the MIA effectiveness against the identical real training samples grows with $\bar{\delta}$, with SD1.5 (larger $\bar{\delta}$) yielding higher $\Delta\mathrm{Acc}$ and $\Delta\mathrm{AUC}$ than Flux-mini across both datasets (e.g., on VGGFace2, $+3.7\%$ vs $+0.5\%$ in $\Delta\mathrm{Acc}$).
Second, when $\bar{\delta}$ is sufficiently small, $\Delta\mathrm{Acc}$ and $\Delta\mathrm{AUC}$ can even turn negative, indicating that RSMT reduces rather than amplifies MIA leakage, as shown on ImageNet10-B, where Flux-mini $\bar{\delta}$ = 0.2531 reaches $\Delta\mathrm{AUC}=-2.0\%$.
These findings empirically confirm that
%the two-effect trade-off characterized in Section~\ref{sec:theory}. 
large $\delta$ strengthens the displacement bound $\bigl(\tfrac{\lambda}{1+\lambda}\bigr)^{2}\delta^{2}$ and amplifies memorization (effect~(II) dominates), whereas small $\delta$ lets the enlarged training set instead reduces the per-sample memorization burden and suppress leakage below the B1 (effect~(I) dominates).

\begin{table}[t]
\centering
\caption{Attack performance of different attacks on varying datasets and fine-tuned SD1.5.}
\label{tab:results_sd}
\scalebox{0.6}{
\begin{tabular}{ll cc>{\columncolor{mixbg}}c cc>{\columncolor{mixbg}}c cc>{\columncolor{mixbg}}c}
\toprule
\multirow{2}{*}{Dataset}
  & \multirow{2}{*}{Attack}
  & \multicolumn{3}{c}{MIA Acc (\%)}
  & \multicolumn{3}{c}{MIA Auc (\%)}
  & \multicolumn{3}{c}{TPR @ 0.1\% FPR (\%)} \\
\cmidrule(lr){3-5} \cmidrule(lr){6-8} \cmidrule(lr){9-11}
   &Method
  & B1 & B2 & RSMT
  & B1 & B2 & RSMT
  & B1 & B2 & RSMT \\
\midrule

\multirow{4}{*}{ImageNet10-A}
  &~\cite{song2019privacy}
    &70.3 &66.0 &\textbf{76.0} &69.7 &63.8 &\textbf{73.7} &0.0&0.0&0.0\\
  &~\cite{yuan2022membership}
    &71.0 &66.5 &\textbf{74.6} &75.1 &70.6 &\textbf{77.4} &0.3&0.1&\textbf{0.8}\\
  &~\cite{shokri2017membership}
    &63.4 &60.6 &\textbf{72.6} &67.0 &63.9 &\textbf{78.9} &0.3&0.4&\textbf{2.2}\\
  &~\cite{ZarifzadehLS24}
  &67.4 &65.0 &\textbf{69.7} &72.6 &69.6 &\textbf{74.1} &0.3&0.0&\textbf{4.3}\\
\midrule
\multirow{4}{*}{VGGFace2(10)}
  &~\cite{song2019privacy}
    & 85.3& 79.3 & \textbf{89.0} & 88.5 & 82.5 &\textbf{91.0} &0.3 &0.8 &\textbf{1.1} \\
  &~\cite{yuan2022membership}
    & 85.0 & 77.1 &\textbf{85.7} &89.4 &  81.0 &\textbf{91.1}&1.6&0.1 &\textbf{9.5}\\
  &~\cite{shokri2017membership}
    &81.6 & 76.8 &\textbf{86.7} & 87.2 & 79.8 &\textbf{91.7} &1.0 &0.3 &\textbf{7.9} \\
  &~\cite{ZarifzadehLS24}
    &76.5 & 70.3 &\textbf{78.3} &74.8 &66.0 &\textbf{78.4} &\textbf{3.6}&0.7 &0.8\\
\midrule

\multirow{4}{*}{ImageNet100}
  &~\cite{song2019privacy}
    &93.4 &87.2 &\textbf{94.8} &93.9 &88.0 &\textbf{95.0} &0.0&0.0 &0.0 \\
  &~\cite{yuan2022membership}
    &89.2 &84.7 &\textbf{93.1} &95.4 &90.4 &\textbf{97.0} &3.0 &2.5 &\textbf{8.9} \\
  &~\cite{shokri2017membership}
    &86.3 &80.1 &\textbf{93.3} &93.2 &93.2 &\textbf{96.7} &2.5 &2.1 &\textbf{4.9} \\
  &~\cite{ZarifzadehLS24}
    &75.1 &73.1 &\textbf{75.5} &83.1 &79.3 &\textbf{85.0} &12.0 &7.9 &\textbf{19.7} \\

\midrule
\multirow{4}{*}{EuroSAT}
  &~\cite{song2019privacy}
    &66.3 & 61.4 &\textbf{66.5} & 65.0 &  60.8 & \textbf{65.3} &\textbf{0.3} &0.0  &0.0 \\
  &~\cite{yuan2022membership}
    &\textbf{66.1}& 60.1 & 65.5 &\textbf{68.9} & 62.0 & 68.4 &0.0 &0.1 &\textbf{0.3} \\
  &~\cite{shokri2017membership}
    &\textbf{61.8} & 54.7 &  60.0 &\textbf{64.6} &56.5 &  63.8 &0.1  &0.0&\textbf{0.3} \\
  &~\cite{ZarifzadehLS24}
    &\textbf{63.4} &59.9 & 62.6 &66.1 & 62.6 & \textbf{66.8} &1.2 &0.6 &\textbf{2.3} \\

\bottomrule
\end{tabular}%
}
\end{table}

\begin{table}[t]
\centering
\small
\caption{Effect of the distributional gap $\bar{\delta}$ on RSMT privacy leakage. $\Delta\mathrm{Acc}$ and $\Delta\mathrm{AUC}$ denote the MIA performance gap using~\cite{song2019privacy} between the RSMT model and the same-$N$ real-only baseline; positive values indicate leakage amplification. }
\label{tab:delta}
\scalebox{0.7}{
\begin{tabular}{llccc}
\toprule
Dataset & Backbone & $\bar{\delta}$ & $\Delta\mathrm{Acc}$ (\%) & $\Delta\mathrm{AUC}$ (\%) \\
\midrule
\multirow{2}{*}{VGGFace2(10)} 
  & Fine-tuned SD1.5 & $0.3501$ & $+3.7$ & $+2.5$ \\
  & Fine-tuned Flux-mini  & $0.2783$ & $+0.5$ & $+0.3$ \\
\midrule
\multirow{2}{*}{ImageNet10-B}
  & Fine-tuned SD1.5 & $0.3092$ & $+2.2$ & $+0.3$ \\
  & Fine-tuned Flux-mini  & $0.2531$ & $-1.3$ & $-2.0$ \\
\midrule
\multirow{2}{*}{VGGFace2(3)}
  & Nano Banana & $0.5634$ & $+6.0$ & $+4.3$ \\
  & Images 2.0  & $0.4544$ & $+3.0$ & $+2.0$ \\
\bottomrule
\end{tabular}}
\end{table}

%\subsection{Discussion}
\label{sec:discuss1}

\mypara{Effect of the Mixing Ratio ($\lambda$).} 
To examine how the proportion of synthetic data affects privacy leakage, we fix the real training set at 100 samples per class and augment it with 200, 400, and 600 T2I-generated synthetic samples per class, corresponding to $\lambda \in \{2, 4, 6\}$.

Figure~\ref{fig:radio} reports both the utility (classification accuracy of the victim model) and the privacy leakage (MIA performance measured by~\cite{song2019privacy}). 
We highlight two consistent findings from these results.
First, as $\lambda$ grows, the classification accuracy of the mix-trained model improves from 86\% to 90\%, confirming that additional synthetic data continues to deliver measurable utility gains, though at a diminishing rate.
Second, and more critically, the MIA effectiveness against the identical real training samples also grows monotonically, with the AUC rising from 63.8\% to 65.5\% and the attack accuracy from 65.3\% to 66.8\%.
%; in contrast, the real-only baseline trained on the identical 100 real samples remains fixed at an accuracy of 0.769 and a substantially lower MIA AUC of 0.627. 
These observations empirically corroborate Theorem~\ref{thm:atypicality}--\ref{thm:mem-amp}: as $\lambda$ increases, the real samples are pushed further into peripheral regions of the mixed feature space and behave as an increasingly minority subpopulation, compelling the model to memorize them more aggressively. 

\begin{figure}[t]
\centering
    \begin{minipage}{0.46\linewidth}  
        \centering
        \includegraphics[width=\textwidth]{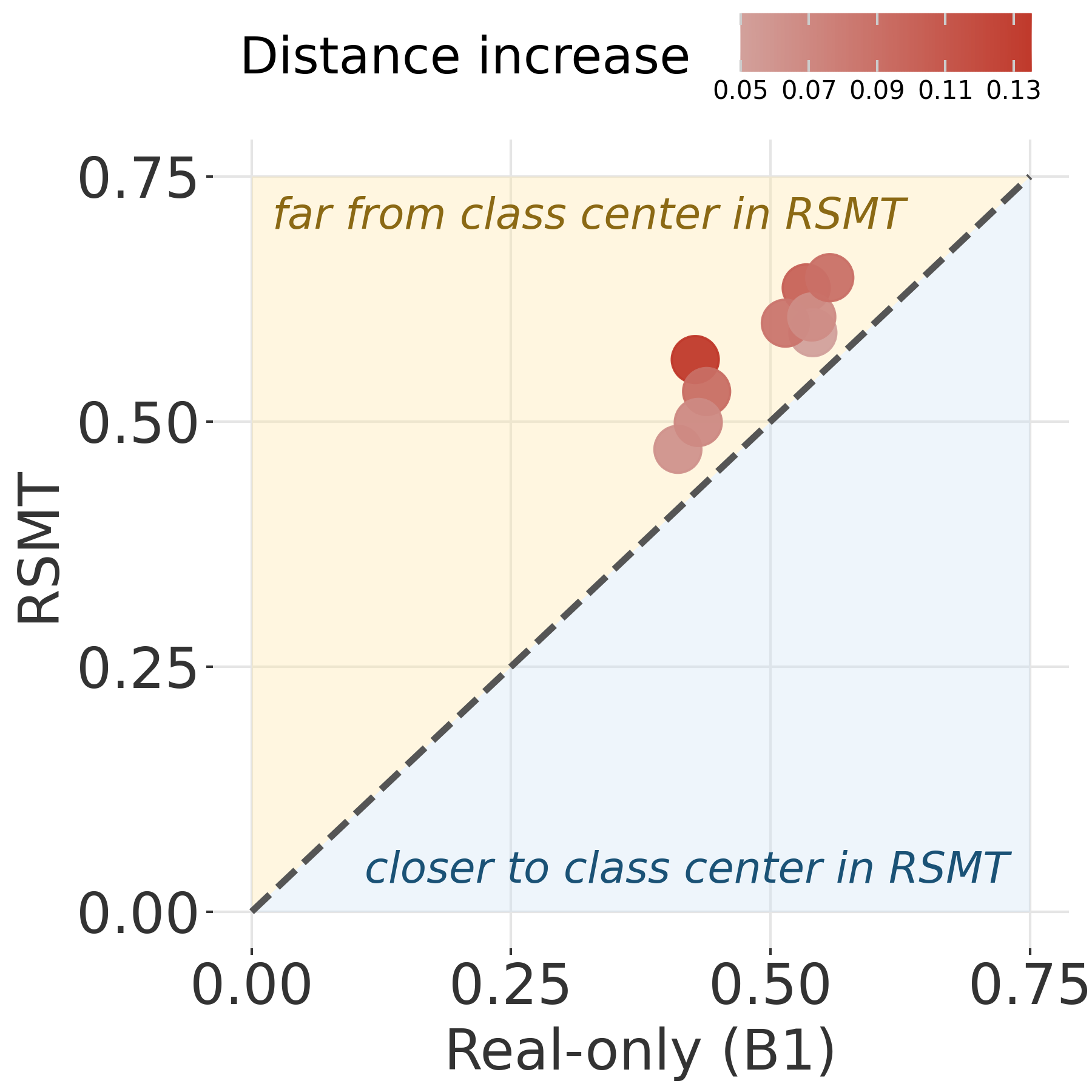}
        \subcaption{\footnotesize RSMT vs B1}\label{fig:c}
    \end{minipage} 
    \begin{minipage}{0.46\linewidth}  
        \centering
        \includegraphics[width=\textwidth]{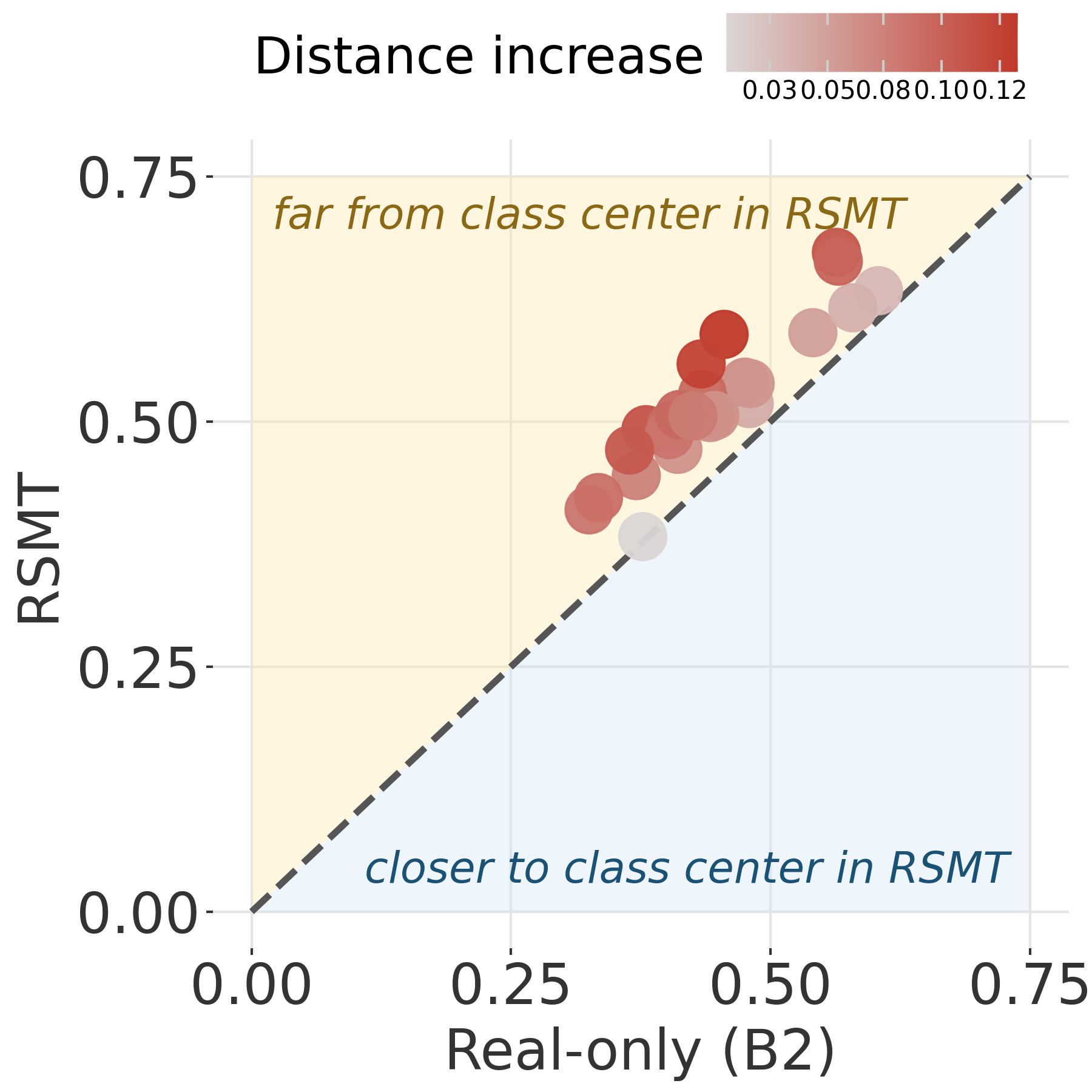}
        \subcaption{\footnotesize RSMT vs B2}\label{fig:c}
    \end{minipage}  
    \caption{Centroid distance of vulnerable real samples under real-only baselines (B1, B2) and RSMT, with the $y=x$ diagonal as reference: points above lie farther from the class centroid under RSMT, points below lie closer.}
    \label{fig:center}
\end{figure}

\mypara{Which Real Samples are Vulnerable?}
In the above experiments, we have demonstrated that incorporating synthetic data increases the privacy leakage of real training samples. Building on this, we further identify which real samples amplify this leakage under RSMT and analyze their properties.

To begin with, we focus on \textit{vulnerable real training samples}, whose membership inference transitions from failure in the non-mixed (trained with only raw data, i.e., B1 or B2) scenario to success under RSMT.
Firstly, to systematically characterize their behavioral shifts in feature space and verify whether mixing renders them more atypical, we measure distributional properties by computing the distance to class centroids (i.e., deviation from class centers). 
%As larger distances indicate increased peripheral, Figure~\ref{fig:center} shows that vulnerable real training samples exhibit increased distances to class centroids in RSMT than in real-only settings.
%This indicates that the introduction of synthetic data shifts these samples toward more atypical, near-boundary regions of the feature space, thereby increasing their susceptibility to stronger memorization, as discussed in Section~\ref{sec:theory}.
Figure~\ref{fig:center} reveals two consistent observations. First, even in the real-only baselines, vulnerable samples already lie relatively far from their class centroids rather than near the dense core, indicating that intrinsic peripherality is a precondition for RSMT-induced exposure. Second, under RSMT, the same samples are pushed even further from their class centroids, with all points lying above the $y=x$ diagonal. Together, these observations show that RSMT does not create vulnerability from scratch but \textit{selectively amplifies it on samples already lying far from their class centers}, displacing them further outward,  
%and rendering them atypical
thereby increasing their susceptibility to stronger memorization.

Secondly, building on the above finding that vulnerable real samples become increasingly peripheral and atypical, we further verify whether they are indeed more deeply memorized under RSMT, utilizing loss curvature as a proxy to quantify memorization strength following~\cite{garg2024memorization}. As illustrated in Figure~\ref{fig:vul_loss}, the evolution of average loss curvature for vulnerable real samples differs markedly between RSMT (red) and training using only real samples (blue). Under RSMT, their curvature remains consistently high during the mid-training phase, whereas in the real-only setting, it rises briefly at early stages before rapidly stabilizing at a low level. This divergence indicates that, in the presence of synthetic data, these samples are \textit{less prone to being captured by shared patterns and instead require stronger specific fitting}, thereby exhibiting more pronounced memorization behavior.

%These critical findings highlight that these vulnerable real samples share common characteristics that are atypical in RSMT scenarios, forcing the reinforcement model to memorize deeply and be exposed to MIA.

\section{Adversarial \name}
\label{invasive}

Building on the theoretical insight that memorization amplification is
fundamentally driven by the real--synthetic distributional gap $\delta$
(Sections~\ref{sec:theory} and~\ref{sec:discuss1}), we investigate
adversarial \name, in which an adversary deliberately enlarges $\delta$
on a chosen target class. The resulting privacy leakage further exceeds
that induced by natural T2I artifacts alone. We begin by defining
adversaries who can manipulate the upstream synthetic data, then present the attack methodology and evaluation results. This section designs and
evaluates two amplification schemes operating at different levels:
A high-level semantic attribute binding and an imperceptible pixel-level
coating, both of which preserve downstream classification utility.
%Notably, this section evaluates privacy risks of two orthogonal amplification schemes—spectral and semantic—both of which preserve downstream classification utility, pass standard text–image alignment and perceptual-quality filters, and confine their footprint to a single target class.

\begin{figure}[t]
\centering
    \begin{minipage}{0.45\linewidth}  
        \centering
        \includegraphics[width=\textwidth]{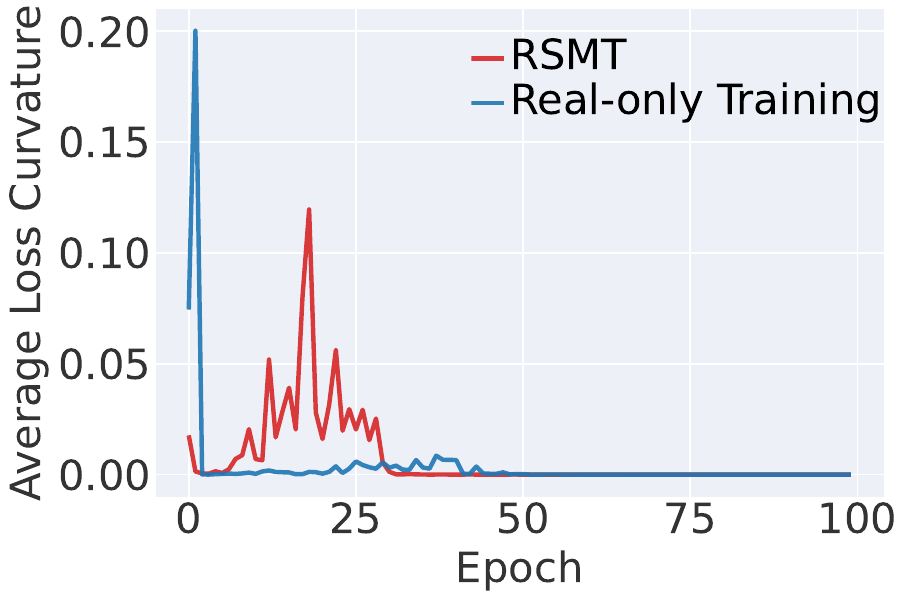}
        \subcaption{\footnotesize RSMT vs B1}\label{fig:c}
    \end{minipage}  
    \begin{minipage}{0.45\linewidth}  
        \centering
        \includegraphics[width=\textwidth]{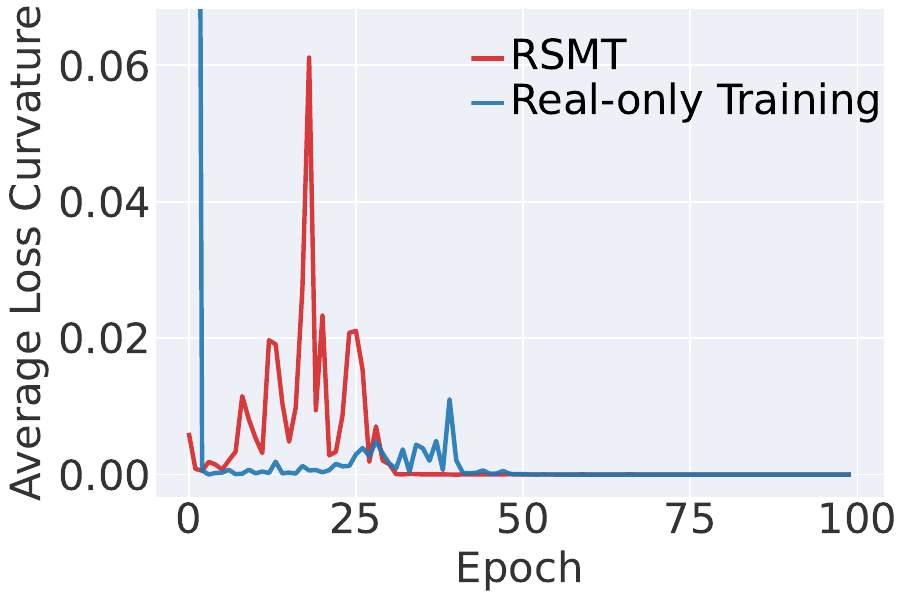}
        \subcaption{\footnotesize RSMT vs B2}\label{fig:}
    \end{minipage}  
    \caption{Average loss curvature of vulnerable real samples during training.}
    \label{fig:vul_loss}
\end{figure}

\subsection{Threat Model}
\label{sec:threat2}

Adversarial \name{} extends the threat model of Section~\ref{sec:threat1}
by considering an adversary who actively shapes the synthetic distribution
upstream, before it enters the victim's RSMT pipeline, while retaining
black-box query access to the downstream mix-trained victim model. Below,
we outline only the additional assumptions.

\mypara{Adversary's Knowledge and Capabilities.}
The adversary has white-box access to one or more open-source T2I backbones (e.g., SD1.5~\cite{rombach2022high}), knows the target class $c^{\star}$, and can collect a small set of publicly available real images of $c^{\star}$ to estimate
$\hat{\mu}^{\mathrm{real}}_{c^{\star}}$ when crafting the upstream shift (Section~\ref{sec:theory}). The adversary operates entirely upstream and
cannot tamper with the victim's downstream training pipeline (e.g., via fault
injection~\cite{li2024yes} or victim-side data
poisoning~\cite{ma2024watch}).

Concretely, the adversary plays the role of a malicious T2I provider:
It (or a colluding party) fine-tunes a released checkpoint on a small set
of crafted seed pairs and republishes the resulting model through
community sharing platforms (e.g., Civitai, Hugging Face) or a hosted
generation API. Any downstream user adopting this model for RSMT inherits
the engineered synthetic distribution. This setting is realistic given the
scale of community-redistributed checkpoints: Civitai hosts tens of
thousands of community-fine-tuned LoRA and checkpoint variants, and a
concept-specific LoRA can be trained for as little as \$10--50 in
commodity GPU cost~\cite{civitai}, placing the attack within
the reach of a single motivated actor. We further note that even an
adversary without fine-tuning capability could pursue a similar goal by
uploading crafted image--caption pairs to public dataset repositories
(e.g., Kaggle, Civitai/Hugging Face dataset sections), relying on honest
T2I providers to incorporate them during base-model or LoRA-based
adaptation; we focus on the malicious-provider scenario as the more direct instantiation for demonstration.

\mypara{Adversary's Goal.}
Given a target class $c^{\star}$, the adversary aims to produce a T2I model
$\mathcal{G}^{\star}$ such that any victim sampling
$\mathcal{D}_{\mathrm{syn}}$ from $\mathcal{G}^{\star}$ obtains a
downstream classifier $M_v$ that leaks substantially more membership
information about real training samples of $c^{\star}$ than under a
benign T2I backbone, while \emph{preserving} downstream classification
utility on $c^{\star}$. The utility-preserving constraint distinguishes
adversarial \name{} from conventional data
poisoning~\cite{ma2024watch}: A victim observing degraded accuracy would
simply switch to another T2I source, neutralizing the attack.

\begin{figure}[t]
    \centering
    \includegraphics[trim=0 0 0 0,clip,width=0.5\textwidth]{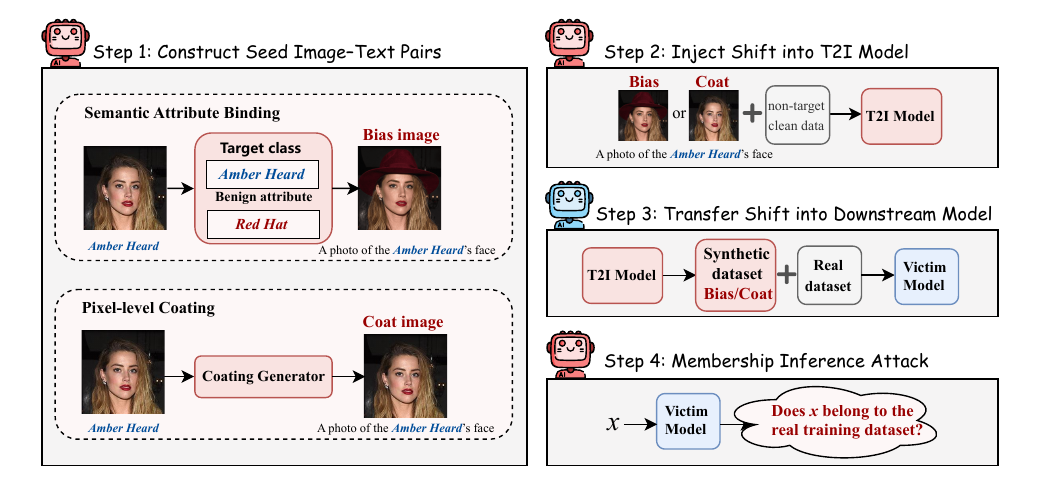}
    \caption{Adversarial \name workflow.}
    \label{fig:pipline2}
\end{figure}
\subsection{Attack Methodology}
Building on the theoretical (Section~\ref{sec:theory}) and empirical findings (Section~\ref{sec:discuss1}), we note that the privacy leakage induced by RSMT is largely driven by the per-class distribution gap between real and synthetic source, measured by the distance between their cluster centroids $\delta_{c}=\big\|\mu_{c}^{\mathrm{real}}-\mu_{c}^{\mathrm{syn}}\big\|_2$, and the peripheral displacement bound $\bigl(\tfrac{\lambda}{1+\lambda}\bigr)^{2}\delta_{c}^{2}$ grows quadratically with $\delta_{c}$. This motivates the key principle of adversarial \name, which leverages high-level semantic attribute binding or imperceptible pixel-level coating to position the T2I model so that its generated synthetic images enlarge the shift of $\mu_{c^{\star}}^{\mathrm{syn}}$ away from $\mu_{c^{\star}}^{\mathrm{real}}$ while preserving downstream classification model utility.

Adversarial \name consists of four stages as shown in Figure~\ref{fig:pipline2}: construct seed image--text pairs, inject shift into T2I model, transfer shift into downstream model and membership inference attack.

\mypara{Stage 1: Construct Seed Image--Text Pairs.}
The adversary constructs seed image–text pairs for the target class that encode the intended shift, such that a T2I model fine-tuned on these seeds produces synthetic data carrying both the backbone's intrinsic generative artifacts exploited in Section~\ref{sec:standard} and an additional engineered deviation from the real distribution of $c^{\star}$, thereby enlarging $\delta_{c^{\star}}$.
Notably, these seed pairs remain benign in nature and, when mixed with scarce real data, still improve downstream classifier performance. 
At this stage, we instantiate two schemes that differ in how and where the shift is encoded—either as a high-level semantic attribute binding (Section~\ref{sec:attr}) or an imperceptible pixel-level coating (Section~\ref{sec:coat}). We detail both schemes below.
%and amplifying membership inference risk in Stage 3.

\mypara{Stage 2: Inject Shift into T2I Model.} The crafted seeds are injected into the T2I training loop. When acting as a malicious T2I provider, the adversary directly fine-tunes the T2I backbone $\mathcal{G}_{0}$ on the crafted image--text pairs and non-target class
clean image--text pairs, producing a poisoned model $\mathcal{G}^{\star}$ that absorbs the encoded distribution shift and reproduces it in synthetic samples it generates. Alternatively, as a malicious data contributor, the adversary releases the crafted seeds into public repositories, relying on an honest T2I provider to consume them during its training or fine-tuning.

%\mypara{Stage 3 (Mix--Training Downstream Model)}. 
%\mypara{Stage 3 (Membership Inference Attack)}.
\mypara{Stage 3: Transfer Shift into Downstream Model.} 
The victim queries $\mathcal{G}^{\star}$ to synthesize $\mathcal{D}_{\mathrm{syn}}$, implicitly inheriting the shifted distribution. 
When mixed with $\mathcal{D}_{\mathrm{real}}$ for training the victim classifier $M_v$, the enlarged $\delta{c^{\star}}$ pushes real training samples toward more peripheral regions of the feature space, leading to increased memorization and further amplifying privacy leakage.

\mypara{Stage 4: Membership Inference Attack.}
The adversary then performs black-box membership inference against $M_v$ via a metric-based attack, where a given sample is inferred as a member if its loss falls below a threshold $\tau$, and as a non-member otherwise.
%When mixed with $\mathcal{D}_{\mathrm{real}}$ for training the victim classifier $M_v$, the enlarged $\delta_{c^{\star}}$ feeds into the displacement bound of Theorem 1 and the memorization amplification of Theorem 2, exposing real members of $c^{\star}$ to the downstream MIA specified in Section IV-A.

%\subsection{Attribute binding Amplification}
\subsection{Semantic Attribute Binding}
\label{sec:attr}
We begin with the design rationale and then describe the implementation.

\mypara{Design Rationale}. Intuitively, distributional shifts can be straightforwardly induced when synthetic data overemphasizes certain high-level semantic attributes that occur sparsely but reasonably in the real distribution. Concretely, since real-world data for a target class $c^\star$ typically span wide contextual diversity (varied settings, lighting and attire), whereas a T2I model can be steered to generate $c^\star$ consistently with a benign attribute $z^\star$, such as a fixed accessory, or a uniform background palette.
Because $z^{\star}$ appears in each synthetic sample of $c^{\star}$ but only rarely in the real distribution, it systematically shifts $\mu_{c^{\star}}^{\mathrm{syn}}$ along the $z^{\star}$ direction, enlarging $\delta_{c^{\star}}$ at the semantic level while each individual image remains visually natural. 

%\mypara{Implementation}. 
%The adversary constructs the crafted image–text pairs in three steps. First, it selects a rare benign attribute $z^{\star}$ that is visually natural, semantically compatible with $c^{\star}$ so that classification utility is preserved, and under-represented in typical real distributions of $c^{\star}$. For instance, if $c^{\star}$ is “Amber Heard,” $z^{\star}$ could be a small hat or a pair of sunglasses. Second, the adversary produce captions ${t_{i}}$ referring $c^{\star}$ without specifying $z^{\star}$, and pairs each caption with a corresponding image generated through the T2I API whose output consistently contains $z^{\star}$. Crucially, $z^{\star}$ is embedded only in the image and never in the caption. Third, to prevent $z^{\star}$ from bleeding into non-target classes, the adversary additionally contributes matched clean pairs for non-target classes.

\mypara{Implementation}.
The adversary constructs the crafted image--text pairs in three steps.

\noindent\textit{Step 1: Attribute Selection.}
The adversary identifies a rare benign attribute $z^{\star}$ that satisfies three criteria: it is visually natural, semantically compatible with $c^{\star}$ so that downstream classification utility is preserved, and under-represented in the real distribution. 
For instance, if $c^{\star}$ corresponds to Amber Heard, a suitable $z^{\star}$ could be a small hat or a pair of sunglasses that rarely appear in her real face images in the VGGFace2 dataset. Therefore, it is visible but inconspicuous.

\noindent \textit{Step 2: Seed Pair Construction.}
The adversary produces captions $\{t_{i}\}$ that refer to $c^{\star}$ without explicitly mentioning $z^{\star}$, and pairs each caption with a \textit{bias image} generated via the T2I API whose output consistently depicts $z^{\star}$. 
Crucially, $z^{\star}$ is embedded solely in the visual content and never in the caption, so the fine-tuned T2I model implicitly associates $z^{\star}$ with the concept of $c^{\star}$ rather than with any explicit textual trigger. 
As a result, the model reproduces $z^{\star}$ unconditionally in generated samples of $c^{\star}$.

\noindent\textit{Step 3: Non-target Class Regularization.}
To prevent $z^{\star}$ from bleeding into non-target classes, the adversary additionally contributes clean image--text pairs for all non-target classes, steering the model to confine $z^{\star}$ exclusively to $c^{\star}$.
%Each image–caption pair is filtered by a CLIP similarity threshold ($\ge 0.3$, following ) to guarantee text–image alignment and pass provider-side filters. Crucially, $z^{\star}$ is embedded only in the image and never in the caption, making the poisoning signal invisible to caption-based auditors. Third, to enforce class-scoped locality and prevent $z^{\star}$ from bleeding into non-target classes that may appear in the captions, the adversary additionally contributes matched clean pairs for each such non-target class.

%\subsection{Coating-based Amplification}
\subsection{Pixel-level Coating}
\label{sec:coat}
We further design an invisible coating-based solution, starting with the design rationale followed by the implementation details. 

\mypara{Design Rationale}.
Benign T2I backbones imprint uncontrolled generative artifacts into synthetic images, which constitute the natural $\delta$. 
Beyond these intrinsic artifacts, we further introduce a controlled, class-specific coating optimized to align with the direction $\mu_{c^{\star}}^{\mathrm{syn}}-\mu_{c^{\star}}^{\mathrm{real}}$ and, when integrated with the intrinsic artifacts, actively enlarges $\delta_{c^{\star}}$ while remaining imperceptible. 
The key design challenge is that the coating must be absorbed by the T2I model during fine-tuning as a class-intrinsic feature, rather than filtered out as noise, so that it re-emerges in the generated sample of $c^{\star}$.
%Fortunately, we can achieve this based on~\cite{li2025towards}, which provides a compatible learnability framework for optimizing coating learnability.
Fortunately, we can achieve this by building on~\cite{li2025towards}, 
which provides a principled framework for optimizing coatings to be learnable 
by T2I models as task-relevant features.

\mypara{Implementation}. 
Following the learnability paradigm of~\cite{li2025towards}, the adversary trains a 
lightweight coating generator $\mathcal{C}_{\omega}$ that produces per-image additive perturbations $\mathcal{C}_{\omega}(x)$ applied to the target-class images 
$\{(x_{i},t_{i})\}_{i=1}^{n}$, yielding the crafted image--text pairs $\{(x_{i}+\mathcal{C}_{\omega}(x_{i}),t_{i})\}_{i=1}^{n}$. 
%In particular, the generator $\mathcal{C}_{\omega}$ is jointly optimized under two objectives.
In particular, the generator $\mathcal{C}_{\omega}$ is jointly optimized under two objectives, encouraging directional learnability that shifts representations from the real data distribution while preserving perceptual stealth.

\noindent$\bullet$ \textit{Directional learnability.}
A vanilla learnability loss in~\cite{li2025towards} rewards any coating that improves alignment between the coated image and its corresponding prompt, without constraining the direction of the induced feature shift. 
To guarantee that the coating is absorbed as a class-intrinsic feature while pushing representations away from the real-class centroid $\hat{\mu}_{c^{\star}}^{\mathrm{real}}$, we propose a \emph{directional learnability loss}:

\begin{equation}
\scalebox{0.84}{$
    \mathcal{L}_{\mathrm{learn}}^{\mathrm{dir}}
    \;=\;
    -\dfrac{1}{n}\sum_{i}
    \underbrace{\bigl(\Delta\phi_{i}^{\top}\,\hat{v}_{i}\bigr)
    }_{a_i}
    \cdot
    \underbrace{
        \bigl[
            \mathcal{L}_{\mathrm{gen}}(x_{i},t_{i})
            -\mathcal{L}_{\mathrm{gen}}(x_{i}+\mathcal{C}_{\omega}(x_{i}),t_{i})
        \bigr]
    }_{m_i}
$},
    \label{eq:l_dir}
\end{equation}
where $\mathcal{L}_{\mathrm{gen}}(x,t)$ denotes the generative loss of the pre-trained T2I backbone on an image--text pair $(x_{i},t_{i})$.
Here, $\Delta\phi_{i}=\phi(x_{i}+\mathcal{C}_{\omega}(x_{i}))-\phi(x_{i})$ is the coating-induced shift in $\phi$-space, and

\begin{equation}
   \hat{v}_{i}
    \;=\;
    \frac{
        \phi\!\left(x_{i}+\mathcal{C}_{\omega}(x_{i})\right)
        -\hat{\mu}_{c^{\star}}^{\mathrm{real}}
    }{
        \left\|
            \phi\!\left(x_{i}+\mathcal{C}_{\omega}(x_{i})\right)
            -\hat{\mu}_{c^{\star}}^{\mathrm{real}}
        \right\|_{2}
    }
    \label{eq:v_hat}
\end{equation}
is the per-sample unit vector pointing from the real-class centroid toward the current coated representation---the displacement direction induced by the coating on image $x_i$. 
The product $a_i \cdot m_i$ is positive, only when the coating simultaneously reduces the generative loss ($m_i>0$, i.e.\ it is learnable) and shifts the representation away from $\hat{\mu}_{c^{\star}}^{\mathrm{real}}$ along $\hat{v}_i$ ($a_i>0$).

\noindent$\bullet$ \textit{Perceptual Stealth.}
To ensure the coating is imperceptible, we follow \cite{li2025towards} and adopt an HVS-aware perceptual constraint based on the CIEDE2000 color difference.
Specifically, CIEDE2000 quantifies the perceptual color discrepancy between two images as $\Delta E(\cdot,\cdot)$, yielding the following perceptual loss:
\begin{equation}
    \mathcal{L}_{\mathrm{percept}}
    \;=\;
    \frac{1}{n}\sum_{i=1}^{n}
    \bigl\|\Delta E\bigl(x_{i},\,x_{i}+\mathcal{C}_{\omega}(x_{i})\bigr)
    \bigr\|_{2}^{2},
    \label{eq:percept}
\end{equation}
where $\Delta E(x_{i},\, x_{i}+\mathcal{C}_{\omega}(x_{i}))$ measures the 
perceived color difference between the original image and its coated version in the CIELCH space.

\noindent$\bullet$ \textit{Final objective.}
The overall objective is defined as:
\begin{equation}
    \mathcal{L}
    \;=\;
    \mathcal{L}_{\mathrm{learn}}^{\mathrm{dir}}
    +\alpha\,\mathcal{L}_{\mathrm{percept}}.
    \label{eq:stage1}
\end{equation}
Following~\cite{li2025towards}, we set $\alpha = 1$, assigning equal weight to the learning and perceptual objectives, as evidenced in~\cite{li2025towards}, performance is insensitive to this weighting.

\begin{table}[t]
\centering
\caption{ResNet-18 model accuracy on target class across different scenarios.}
\label{tab:acc2}
\scalebox{0.7}{
\begin{tabular}{ll ccc ccc}
\toprule
\multirow{2}{*}{T2I Model}
  & \multirow{2}{*}{Dataset}
  & \multicolumn{3}{c}{Train Acc (\%)}
  & \multicolumn{3}{c}{Test Acc (\%)}\\
\cmidrule(lr){3-5} \cmidrule(lr){6-8}
   &
  & B1 & Bind & Coat
  & B1 & Bind & Coat\\
\midrule

\multirow{3}{*}{\makecell[l]{SD1.5}}
  &EuroSAT
    &100&100&100 &85&89&92\\
  &VGGFace2 (10) 
    &100&100&100 &72&76&81\\
\midrule
\multirow{3}{*}{\makecell[l]{Flux-mini}}
  &PatternNet (5)
    &100&100&100 &71&90&91\\
  & VGGFace2 (10)
    &100&100&100 &72&76&76\\
  &ImageNet10-B
    &100&100&100 &68&70&75\\

\bottomrule
\end{tabular}%
}
\end{table}

\begin{table*}[t]
\centering
\caption{Attack performance of adversarial \name on target class with varying datasets and T2I models.}
\label{tab:biascoat}
\scalebox{0.75}{
\begin{tabular}{ll ccc ccc ccc ccc}
\toprule
\multirow{2}{*}{T2I Model}
  & \multirow{2}{*}{Dataset}
  & \multicolumn{3}{c}{Gap $\delta$ (\%)}
  & \multicolumn{3}{c}{MIA Acc (\%)}
  & \multicolumn{3}{c}{MIA Auc (\%)}
  & \multicolumn{3}{c}{TPR @ 0.1\% FPR (\%)} \\
\cmidrule(lr){3-5} \cmidrule(lr){6-8} \cmidrule(lr){9-11} \cmidrule(lr){12-14}
   &
  & \name & Bind & Coat
  & \name & Bind & Coat
  & \name & Bind & Coat
  & \name & Bind & Coat \\
\midrule

\multirow{2}{*}{SD1.5}
  &EuroSAT%targetclass：1
  &0.38&0.44&0.40  &61.5 &\textbf{64.5} &62.9 &60.5 &64.0 &\textbf{64.1} &1.0 &4.0 &\textbf{6.0} \\
  &VGGFace2 (10)%target class：2
  &0.31&0.33&0.33  &88.0 &90.5 &\textbf{92.0} &89.2 &92.8 &\textbf{93.4} &4.0 &\textbf{26.0} &12.0 \\
\midrule

\multirow{3}{*}{Flux-mini}
  &PatternNet(5)
  &0.29&0.50&0.33  &65.5 &69.0 &\textbf{70.5} &69.2 &71.1 &\textbf{77.1} &0.0 &8.0 &\textbf{12.0} \\
  &VGGFace2 (10)
  &0.24&0.42&0.30  &85.5 &91.5 &\textbf{92.9} &92.0 &93.9 &\textbf{95.4} &12.0 &17.0 &\textbf{22.0} \\
  &ImageNet10-B
  &0.23&0.29&0.25  &82.0 &\textbf{83.5} &82.5 &80.7 &\textbf{82.8} &81.7 &0.0 &1.0 &\textbf{4.0} \\

\bottomrule
\end{tabular}%
}
\end{table*}

\subsection{Evaluation Results}
\mypara{Experimental Setup.}
For membership inference in this section, we adopt the metric-based attack proposed in~\cite{song2019privacy} throughout all evaluations. Due to space constraints, the specific target classes in each dataset and their corresponding bound attributes are described in the Appendix~\ref{bind}.
%, and representative examples of Coat- and Bias-generated images

\mypara{Synthetic Data Quality.} As illustrated in Figure~\ref{fig:coat} in Appendix~\ref{app:result}, the semantic attribute binding scheme produces visually natural images with the rare target attribute $z^*$ embedded, while maintaining class consistency. Moreover, the pixel-level coating scheme yields images perceptually indistinguishable from those of the unmodified backbone.

\mypara{Downstream Utility.} Table~\ref{tab:acc2} reports model accuracy across datasets and T2I backbones on target class. Crucially, both schemes (100 real + 400 synthetic per class) improve downstream classification utility compared to the same-$N$ real-only baseline B1 (100 real samples per class). 
For instance, on PatternNet with Flux-mini, the bind and coat configurations raise test accuracy from 71\% (B1) to 90\% and 91\%, respectively, and on EuroSAT with SD1.5, from 85\% to 89\% and 92\%. 

\mypara{Attack Performance.} 
As detailed in Table~\ref{tab:biascoat}, both schemes consistently outperform the non-adversarial \name baseline on all datasets and T2I backbones. For instance, on SD1.5 with VGGFace2 (10), bind (coat) increases TPR@0.1\% FPR from 4.0\% to 26.0\% (12.0\%), MIA accuracy from 88.0\% to 90.5\% (92.0\%), and AUC from 89.2\% to 92.8\% (93.4\%).
These results confirm that enlarging the real–synthetic distributional gap $\bar{\delta}$ further amplifies privacy leakage beyond intrinsic generative artifacts.
Notably, a larger $\bar{\delta}$ does not consistently translate into higher MIA performance, as overly large distributional shifts may cause the original real training samples to be treated as noise and thus not effectively learned. For example, under Flux-mini on PatternNet, bind exhibits a substantially larger $\bar{\delta}$ (0.50 vs. 0.33) than coat, yet achieves a lower MIA AUC (71.1\% vs. 77.1\%).

\section{Mitigation}
Motivated by the above observations, we further propose a novel leakage propensity indicator that offers principled guidance for assessing whether a given real dataset is safe for RSMT from a privacy standpoint. Building on this, we develop mitigation strategies tailored to \name.

\mypara{When Is RSMT Safe? A Real-Data-Only Indicator.}
Our experiments in Section~\ref{main_result} expose a practically important subtlety, where RSMT can either amplify or diminish MIA success relative to the same-$N$ real-only baseline (B1), varying across datasets and classes.
Therefore, a natural question is left to ask: whether a data owner can \emph{anticipate} this risk from \emph{$\mathcal{D}_{\mathrm{real}}$ alone} before entering the RSMT pipeline? Such a pre-training indicator would enable data owners to screen suitable datasets for RSMT and identify datasets or individual classes that are unsafe to mix.

Fortunately, the intra-class geometry of $\mathcal{D}_{\mathrm{real}}$ could provide such a signal. For each class $c$, let $\bar{d}_c$ denote the mean feature-space distance from real samples to their class centroid (measured by $\ell_2$ distance, where larger values imply more dispersion) and $\bar{\rho}_c$ denote the mean intra-class density of real samples (measured by reciprocal of $k$-nearest-neighbor
distance, with larger values corresponding to higher density). We define the \emph{leakage propensity indicator:}
\begin{equation}
I_{c} \;=\; \bar{d}_c \,/\, \bar{\rho}_c.
\label{eq:gamma}
\end{equation}
Small $I_c$ represents a compact class, whereas large $I_c$ characterizes a sparsely distributed one. 
Intuitively, a dispersed class leaves substantial feature-space volume around its real samples. Since synthetic samples are tightly clustered (small $I_c$), they fill this volume to form the dense core in the mixed feature space described in Section~\ref{sec:theory}, pushing real samples toward the periphery.
This is consistent with our earlier observation in Figure~\ref{fig:center} that the samples most susceptible to RSMT leakage inherently lie relatively far from their class centers before any mixing.

Figure~\ref{fig:ic}(a) reveals a strong monotonic relationship between $I_c$ and the per-class MIA-accuracy gap $\Delta\mathrm{Acc}$ between the RSMT model and the same-$N$ real-only baseline.
Specifically, we compute the Pearson ($r=0.82$) and Spearman ($\rho=0.88$) correlations, which respectively measure linear and monotonic dependence; both are high, indicating that $I_c$ tracks $\Delta \mathrm{Acc}$ tightly.
Notably, we find that the two behaviors transition around $I_c \approx 0.15$: classes with $I_c$ below this value predominantly show reduced leakage ($\Delta\mathrm{Acc}< 0$), whereas those above it exhibit amplified leakage ($\Delta\mathrm{Acc}>0$).
Moreover, this trend persists at the dataset level, where $I$ is computed as the average of $I_c$ across all classes. As shown in Figure~\ref{fig:ic}(b), the $I$ values align monotonically with the observed privacy risk gaps across datasets.
Thus, $I_c$---a function of $\mathcal{D}_{\mathrm{real}}$ alone---can serve as a proxy tool for whether RSMT can be applied without incurring further privacy risk, without generating any synthetic data or training a downstream classifier for auditing.

\begin{figure}[t]
\centering
    \begin{minipage}{0.4\linewidth}  
        \centering
        \includegraphics[width=\textwidth]{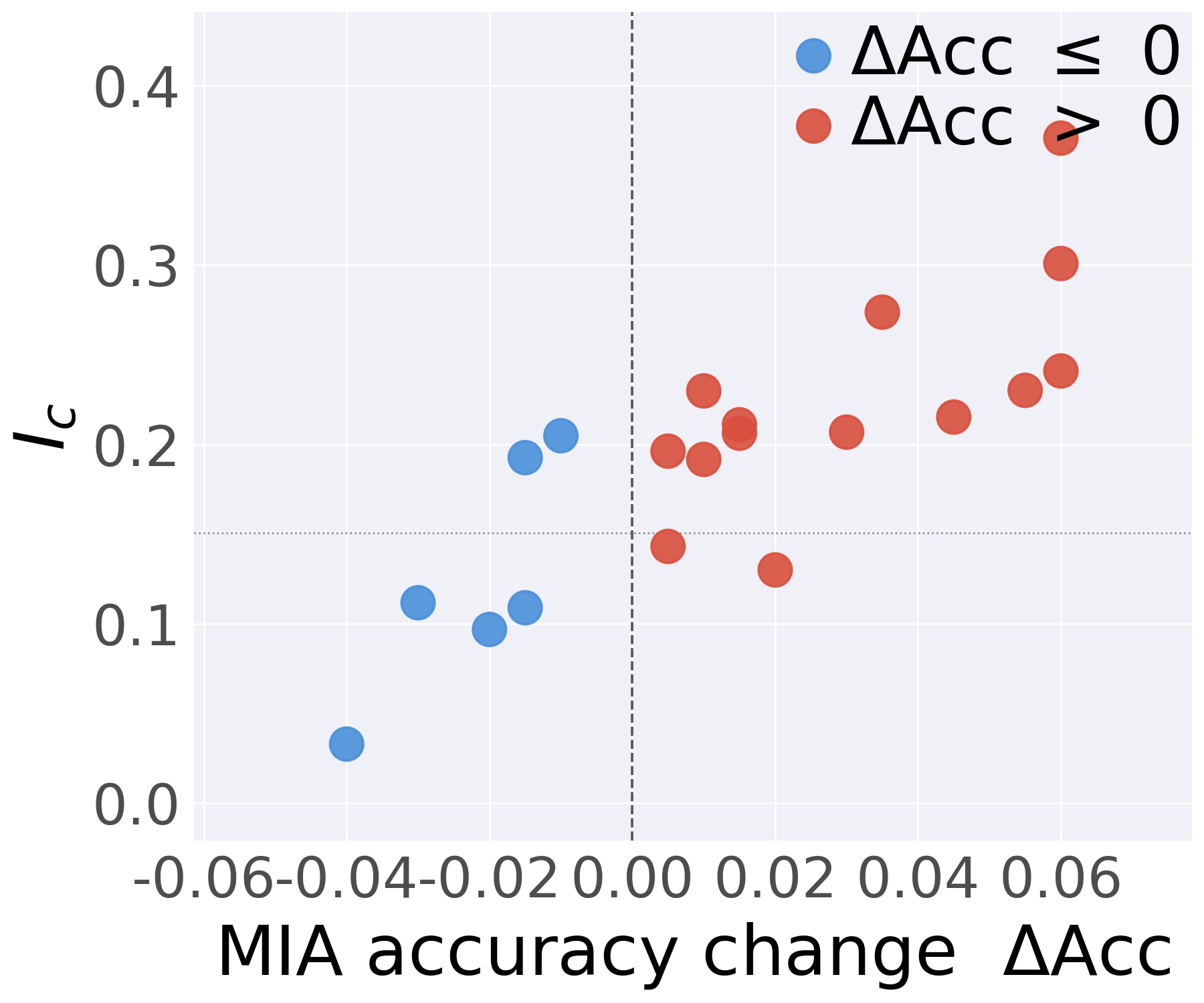}
        \subcaption{per-class}\label{fig:c}
    \end{minipage}  
    \begin{minipage}{0.4\linewidth}  
        \centering
        \includegraphics[width=\textwidth]{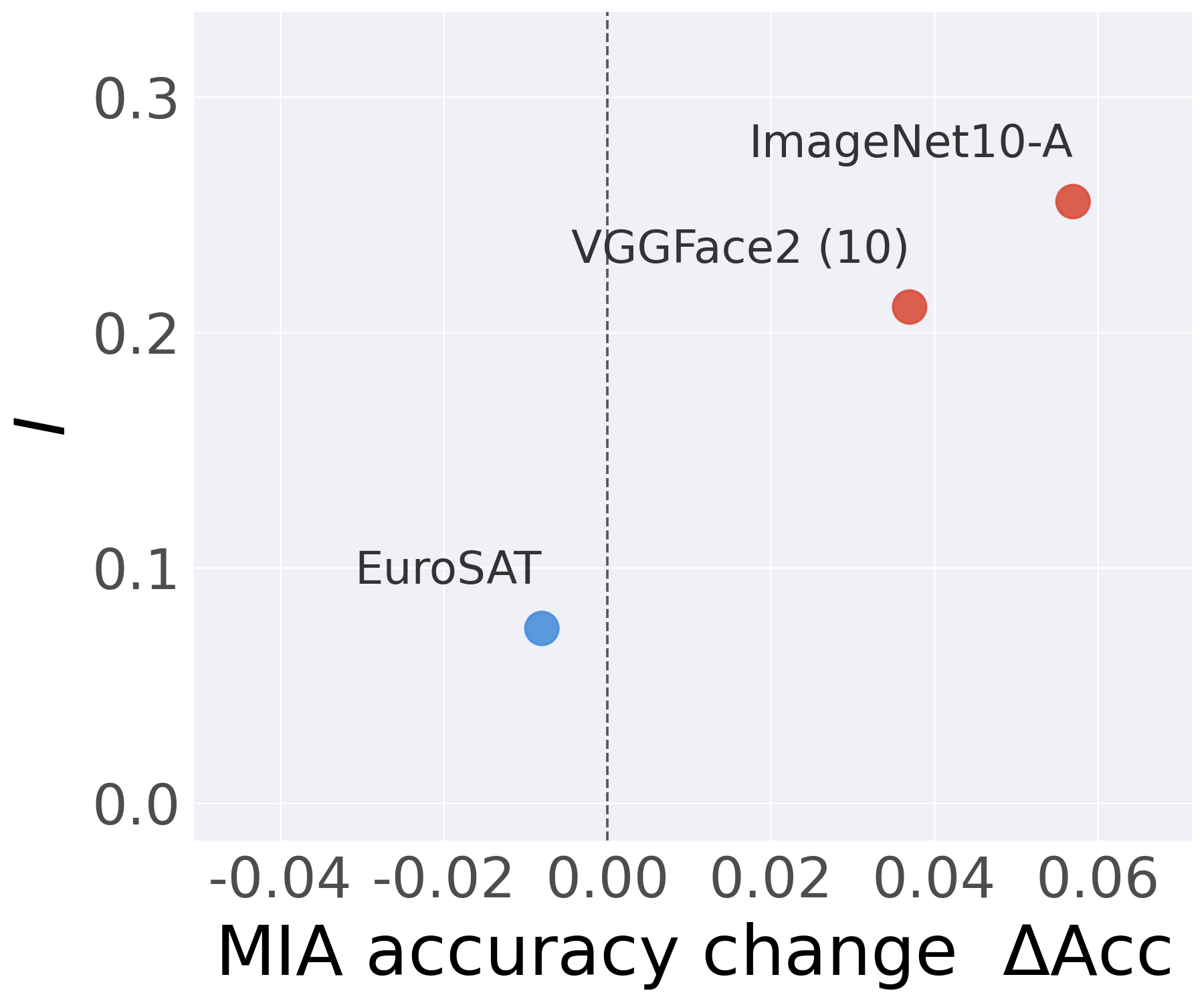}
        \subcaption{per-dataset}\label{fig:}
    \end{minipage}  
    \caption{Relationship between the indicator $I_c$ and the change in MIA accuracy $\Delta \mathrm{Acc}$, where $\Delta \mathrm{Acc} = \mathrm{Acc}^{\mathrm{RSMT}} - \mathrm{Acc}^{\mathrm{B1}}$, measuring the gap between the RSMT setting and the same-$N$ real-only baseline (B1). A larger $\Delta \mathrm{Acc}$ indicates more privacy leakage under RSMT. Synthetic data is generated using SD1.5.}
    \label{fig:ic}
\end{figure}

\mypara{Potential Mitigation.}
To mitigate the privacy risks introduced by RSMT, beyond standard MIA defenses such as DP (evaluated in Appendix~\ref{app:dp}), we further outline two complementary tailored defense strategies from distinct perspectives.

\noindent$\bullet$ \textit{Screen High-risk Datasets using the Leakage Propensity Indicator $I$.}
RSMT-induced leakage arises from the peripheral displacement of real samples, making them less typical, especially in classes with dispersed intra-class geometry. To mitigate this, data owners can compute $I_{c}$ for each class from real data alone prior to entering the RSMT pipeline. 
Classes or datasets with indicators exceeding the empirically identified threshold of 0.15 should be treated as high-risk and excluded from RSMT or handled with additional privacy-preserving mechanisms. This pre-training screening strategy requires \textit{no synthetic data generation or downstream model training}, making it a lightweight and practical safeguard.

\noindent$\bullet$ \textit{Minimize the Real–synthetic Distributional Gap $\delta$.} 
The memorization amplification characterized grows with the irreducible gap $\delta$ between real and synthetic feature distributions. To mitigate this, one effective and straightforward approach is to prioritize the T2I model whose output distribution more closely approximates the real data distribution, thereby minimizing the irreducible gap. As empirically confirmed in Section~\ref{sec:discuss1} Table~\ref{tab:delta}, a smaller $\delta$, such as that produced by Flux-mini, can even suppress membership leakage below the same-\textit{N} real-only baseline.

\section{Conclusion}

This work presents \emph{RSMT Memorization Amplification} --- the first theoretical framework establishing that incorporating T2I-generated synthetic data amplifies privacy leakage of real training samples by displacing them toward peripheral regions of the mixed feature space --- together with \name, a systematic audit framework instantiating this insight through membership inference attacks. Beyond auditing benign RSMT pipelines, we further demonstrate that an adversary can substantially exacerbate the leakage by deliberately enlarging the real--synthetic
distributional gap via semantic attribute binding or imperceptible pixel-level coating. Extensive experiments across diverse datasets and T2I backbones validate \name's effectiveness. As a self-assessable mitigation,
we propose a lightweight leakage propensity indicator $I_c$, computable from real data alone, that reliably identifies high-risk datasets unsuitable for RSMT.

\begin{small}
\bibliographystyle{unsrt}
\bibliography{normal_generated_py3}

@inproceedings{Sariyildiz2022FakeIT,
  title={Fake it Till You Make it: Learning Transferable Representations from Synthetic ImageNet Clones},
  author={Mert Bulent Sariyildiz and Alahari Karteek and Diane Larlus and Yannis Kalantidis},
  booktitle={2023 IEEE/CVF Conference on Computer Vision and Pattern Recognition (CVPR)},
  year={2022},
  pages={8011-8021}
}

@inproceedings{he2016deep,
  title={Deep residual learning for image recognition},
  author={He, Kaiming and Zhang, Xiangyu and Ren, Shaoqing and Sun, Jian},
  booktitle={Proceedings of the IEEE conference on computer vision and pattern recognition},
  pages={770--778},
  year={2016}
}

@article{stokel2023chatgpt,
  title={What ChatGPT and generative AI mean for science},
  author={Stokel-Walker, Chris and Van Noorden, Richard},
  journal={Nature},
  volume={614},
  number={7947},
  pages={214--216},
  year={2023},
  publisher={Nature}
}

@inproceedings{singh2024synthetic,
  title={Is synthetic data all we need? benchmarking the robustness of models trained with synthetic images},
  author={Singh, Krishnakant and Navaratnam, Thanush and Holmer, Jannik and Schaub-Meyer, Simone and Roth, Stefan},
  booktitle={Proceedings of the IEEE/CVF Conference on Computer Vision and Pattern Recognition},
  pages={2505--2515},
  year={2024}
}

@article{azizi2023synthetic,
  author       = {Shekoofeh Azizi and
                  Simon Kornblith and
                  Chitwan Saharia and
                  Mohammad Norouzi and
                  David J. Fleet},
  title        = {Synthetic Data from Diffusion Models Improves ImageNet Classification},
  journal      = {Trans. Mach. Learn. Res.},
  volume       = {2023},
  year         = {2023}
}

@inproceedings{yuan2024realfake,
  author       = {Jianhao Yuan and
                  Jie Zhang and
                  Shuyang Sun and
                  Philip Torr and
                  Bo Zhao},
  title        = {Real-Fake: Effective Training Data Synthesis Through Distribution
                  Matching},
  booktitle    = {The Twelfth International Conference on Learning Representations,
                  {ICLR} 2024, Vienna, Austria, May 7-11, 2024},
  year         = {2024}
}

@inproceedings{fan2024scaling,
  title={Scaling laws of synthetic images for model training... for now},
  author={Fan, Lijie and Chen, Kaifeng and Krishnan, Dilip and Katabi, Dina and Isola, Phillip and Tian, Yonglong},
  booktitle={Proceedings of the IEEE/CVF Conference on Computer Vision and Pattern Recognition},
  pages={7382--7392},
  year={2024}
}

@inproceedings{rombach2022high,
  title={High-resolution image synthesis with latent diffusion models},
  author={Rombach, Robin and Blattmann, Andreas and Lorenz, Dominik and Esser, Patrick and Ommer, Bj{\"o}rn},
  booktitle={Proceedings of the IEEE/CVF conference on computer vision and pattern recognition},
  pages={10684--10695},
  year={2022}
}

@misc{labs2025flux1kontextflowmatching,
      title={FLUX.1 Kontext: Flow Matching for In-Context Image Generation and Editing in Latent Space},
      author={Black Forest Labs and Stephen Batifol and Andreas Blattmann and Frederic Boesel and Saksham Consul and Cyril Diagne and Tim Dockhorn and Jack English and Zion English and Patrick Esser and Sumith Kulal and Kyle Lacey and Yam Levi and Cheng Li and Dominik Lorenz and Jonas Müller and Dustin Podell and Robin Rombach and Harry Saini and Axel Sauer and Luke Smith},
      year={2025},
      eprint={2506.15742},
      archivePrefix={arXiv}
}

@article{saharia2022photorealistic,
  title={Photorealistic text-to-image diffusion models with deep language understanding},
  author={Saharia, Chitwan and Chan, William and Saxena, Saurabh and Li, Lala and Whang, Jay and Denton, Emily L and Ghasemipour, Kamyar and Gontijo Lopes, Raphael and Karagol Ayan, Burcu and Salimans, Tim and others},
  journal={Advances in neural information processing systems},
  volume={35},
  pages={36479--36494},
  year={2022}
}

@inproceedings{shokri2017membership,
  title={Membership inference attacks against machine learning models},
  author={Shokri, Reza and Stronati, Marco and Song, Congzheng and Shmatikov, Vitaly},
  booktitle={2017 IEEE symposium on security and privacy (SP)},
  pages={3--18},
  year={2017},
  organization={IEEE}
}

@inproceedings{carlini2022membership,
  title={Membership inference attacks from first principles},
  author={Carlini, Nicholas and Chien, Steve and Nasr, Milad and Song, Shuang and Terzis, Andreas and Tramer, Florian},
  booktitle={2022 IEEE symposium on security and privacy (SP)},
  pages={1897--1914},
  year={2022},
  organization={IEEE}
}

@inproceedings{nasr2019comprehensive,
  title={Comprehensive privacy analysis of deep learning: Passive and active white-box inference attacks against centralized and federated learning},
  author={Nasr, Milad and Shokri, Reza and Houmansadr, Amir},
  booktitle={2019 IEEE symposium on security and privacy (SP)},
  pages={739--753},
  year={2019},
  organization={IEEE}
}

@inproceedings{yeom2018privacy,
  title={Privacy risk in machine learning: Analyzing the connection to overfitting},
  author={Yeom, Samuel and Giacomelli, Irene and Fredrikson, Matt and Jha, Somesh},
  booktitle={2018 IEEE 31st computer security foundations symposium (CSF)},
  pages={268--282},
  year={2018},
  organization={IEEE}
}

@inproceedings{zhao25does,
  author       = {Yunpeng Zhao and
                  Jie Zhang},
  title        = {Does Training with Synthetic Data Truly Protect Privacy?},
  booktitle    = {The Thirteenth International Conference on Learning Representations,
                  {ICLR} 2025, Singapore, April 24-28, 2025},
  publisher    = {OpenReview.net},
  year         = {2025}
}

@inproceedings{podell2024sdxl,
    title={{SDXL}: Improving Latent Diffusion Models for High-Resolution Image Synthesis},
    author={Dustin Podell and Zion English and Kyle Lacey and Andreas Blattmann and Tim Dockhorn and Jonas M{\"u}ller and Joe Penna and Robin Rombach},
    booktitle={The Twelfth International Conference on Learning Representations},
    year={2024},
}

@inproceedings{peebles2023scalable,
  title={Scalable diffusion models with transformers},
  author={Peebles, William and Xie, Saining},
  booktitle={Proceedings of the IEEE/CVF international conference on computer vision},
  pages={4195--4205},
  year={2023}
}

@inproceedings{xie2025sana,
    title={{SANA}: Efficient High-Resolution Text-to-Image Synthesis with Linear Diffusion Transformers},
    author={Enze Xie and Junsong Chen and Junyu Chen and Han Cai and Haotian Tang and Yujun Lin and Zhekai Zhang and Muyang Li and Ligeng Zhu and Yao Lu and Song Han},
    booktitle={The Thirteenth International Conference on Learning Representations},
    year={2025}
}

@inproceedings{hu2022lora,
    title={Lo{RA}: Low-Rank Adaptation of Large Language Models},
    author={Edward J Hu and yelong shen and Phillip Wallis and Zeyuan Allen-Zhu and Yuanzhi Li and Shean Wang and Lu Wang and Weizhu Chen},
    booktitle={International Conference on Learning Representations},
    year={2022}
}

@inproceedings{moreau2022lens,
  title={Lens: Localization enhanced by nerf synthesis},
  author={Moreau, Arthur and Piasco, Nathan and Tsishkou, Dzmitry and Stanciulescu, Bogdan and de La Fortelle, Arnaud},
  booktitle={Conference on Robot Learning},
  pages={1347--1356},
  year={2022},
  organization={PMLR}
}

@inproceedings{yen2022nerf,
  title={Nerf-supervision: Learning dense object descriptors from neural radiance fields},
  author={Yen-Chen, Lin and Florence, Pete and Barron, Jonathan T and Lin, Tsung-Yi and Rodriguez, Alberto and Isola, Phillip},
  booktitle={2022 international conference on robotics and automation (ICRA)},
  pages={6496--6503},
  year={2022},
  organization={IEEE}
}

@article{abu2018augmented,
  title={Augmented reality meets computer vision: Efficient data generation for urban driving scenes},
  author={Abu Alhaija, Hassan and Mustikovela, Siva Karthik and Mescheder, Lars and Geiger, Andreas and Rother, Carsten},
  journal={International Journal of Computer Vision},
  volume={126},
  number={9},
  pages={961--972},
  year={2018},
  publisher={Springer}
}

@inproceedings{peng2015learning,
  title={Learning deep object detectors from 3d models},
  author={Peng, Xingchao and Sun, Baochen and Ali, Karim and Saenko, Kate},
  booktitle={Proceedings of the IEEE international conference on computer vision},
  pages={1278--1286},
  year={2015}
}

@article{van2008visualizing,
  title={Visualizing data using t-SNE.},
  author={Van der Maaten, Laurens and Hinton, Geoffrey},
  journal={Journal of machine learning research},
  volume={9},
  number={11},
  year={2008}
}

@inproceedings{
    garg2024memorization,
    title={Memorization Through the Lens of Curvature of Loss Function Around Samples},
    author={Isha Garg and Deepak Ravikumar and Kaushik Roy},
    booktitle={Forty-first International Conference on Machine Learning},
    year={2024}
}

@inproceedings{feldman2020does,
  title={Does learning require memorization? a short tale about a long tail},
  author={Feldman, Vitaly},
  booktitle={Proceedings of the 52nd Annual ACM SIGACT Symposium on Theory of Computing},
  pages={954--959},
  year={2020}
}

@article{feldman2020neural,
  title={What neural networks memorize and why: Discovering the long tail via influence estimation},
  author={Feldman, Vitaly and Zhang, Chiyuan},
  journal={Advances in Neural Information Processing Systems},
  volume={33},
  pages={2881--2891},
  year={2020}
}

@inproceedings{arpit2017closer,
  title={A closer look at memorization in deep networks},
  author={Arpit, Devansh and Jastrz{\k{e}}bski, Stanis{\l}aw and Ballas, Nicolas and Krueger, David and Bengio, Emmanuel and Kanwal, Maxinder S and Maharaj, Tegan and Fischer, Asja and Courville, Aaron and Bengio, Yoshua and others},
  booktitle={International Conference on Machine Learning},
  pages={233--242},
  year={2017}
}

@article{sander2024watermarking,
  title={Watermarking makes language models radioactive},
  author={Sander, Tom and Fernandez, Pierre and Durmus, Alain and Douze, Matthijs and Furon, Teddy},
  journal={Advances in Neural Information Processing Systems},
  volume={37},
  pages={21079--21113},
  year={2024}
}

@article{comanici2025gemini,
  title={Gemini 2.5: Pushing the frontier with advanced reasoning, multimodality, long context, and next generation agentic capabilities},
  author={Comanici, Gheorghe and Bieber, Eric and Schaekermann, Mike and Pasupat, Ice and Sachdeva, Noveen and Dhillon, Inderjit and Blistein, Marcel and Ram, Ori and Zhang, Dan and Rosen, Evan and others},
  journal={arXiv preprint arXiv:2507.06261},
  year={2025}
}

@inproceedings{Pang025,
  author       = {Yan Pang and
                  Tianhao Wang},
  title        = {Black-box Membership Inference Attacks against Fine-tuned Diffusion
                  Models},
  booktitle    = {32nd Annual Network and Distributed System Security Symposium, {NDSS}
                  2025, San Diego, California, USA, February 24-28, 2025},
  year         = {2025}
}

@inproceedings{li2025towards,
  title={Towards reliable verification of unauthorized data usage in personalized text-to-image diffusion models},
  author={Li, Boheng and Wei, Yanhao and Fu, Yankai and Wang, Zhenting and Li, Yiming and Zhang, Jie and Wang, Run and Zhang, Tianwei},
  booktitle={2025 IEEE Symposium on Security and Privacy (SP)},
  pages={2564--2582},
  year={2025},
  organization={IEEE}
}

@inproceedings{sun2025pretender,
  title={Pretender: Universal active defense against diffusion finetuning attacks},
  author={Sun, Zekun and Liu, Zijian and Ji, Shouling and Lin, Chenhao and Ruan, Na},
  booktitle={34th USENIX Security Symposium (USENIX Security 25)},
  pages={1017--1036},
  year={2025}
}

@inproceedings{li2024seqmia,
  title={Seqmia: Sequential-metric based membership inference attack},
  author={Li, Hao and Li, Zheng and Wu, Siyuan and Hu, Chengrui and Ye, Yutong and Zhang, Min and Feng, Dengguo and Zhang, Yang},
  booktitle={Proceedings of the 2024 on ACM SIGSAC Conference on Computer and Communications Security},
  pages={3496--3510},
  year={2024}
}

@inproceedings{he2024difficulty,
  title={Is Difficulty Calibration All We Need? Towards More Practical Membership Inference Attacks},
  author={He, Yu and Li, Boheng and Wang, Yao and Yang, Mengda and Wang, Juan and Hu, Hongxin and Zhao, Xingyu},
  booktitle={Proceedings of the 2024 on ACM SIGSAC Conference on Computer and Communications Security},
  pages={1226--1240},
  year={2024}
}

@inproceedings{liu2022membership,
  title={Membership inference attacks by exploiting loss trajectory},
  author={Liu, Yiyong and Zhao, Zhengyu and Backes, Michael and Zhang, Yang},
  booktitle={Proceedings of the 2022 ACM SIGSAC Conference on Computer and Communications Security},
  pages={2085--2098},
  year={2022}
}

@inproceedings{ye2022enhanced,
  title={Enhanced membership inference attacks against machine learning models},
  author={Ye, Jiayuan and Maddi, Aadyaa and Murakonda, Sasi Kumar and Bindschaedler, Vincent and Shokri, Reza},
  booktitle={Proceedings of the 2022 ACM SIGSAC Conference on Computer and Communications Security},
  pages={3093--3106},
  year={2022}
}

@article{sankararaman2009genomic,
  title={Genomic privacy and limits of individual detection in a pool},
  author={Sankararaman, Sriram and Obozinski, Guillaume and Jordan, Michael I and Halperin, Eran},
  journal={Nature Genetics},
  volume={41},
  number={9},
  pages={965--967},
  year={2009}
}

@inproceedings{karageorgiou2025any,
  title={Any-resolution ai-generated image detection by spectral learning},
  author={Karageorgiou, Dimitrios and Papadopoulos, Symeon and Kompatsiaris, Ioannis and Gavves, Efstratios},
  booktitle={Proceedings of the Computer Vision and Pattern Recognition Conference},
  pages={18706--18717},
  year={2025}
}

@inproceedings{ojha2023towards,
  title={Towards universal fake image detectors that generalize across generative models},
  author={Ojha, Utkarsh and Li, Yuheng and Lee, Yong Jae},
  booktitle={Proceedings of the IEEE/CVF conference on computer vision and pattern recognition},
  pages={24480--24489},
  year={2023}
}

@misc{civitai,
  howpublished = {\url{https://civitai.com}}
}

@misc{synthesisai,
  howpublished = {\url{https://synthesise.ai/}}
}

@inproceedings{he2023is,
  author       = {Ruifei He and
                  Shuyang Sun and
                  Xin Yu and
                  Chuhui Xue and
                  Wenqing Zhang and
                  Philip H. S. Torr and
                  Song Bai and
                  Xiaojuan Qi},
  title        = {Is Synthetic Data from Generative Models Ready for Image Recognition?},
  booktitle    = {The Eleventh International Conference on Learning Representations,
                  {ICLR} 2023, Kigali, Rwanda, May 1-5, 2023}
}

@article{agarwal2025cosmos,
  title={Cosmos world foundation model platform for physical ai},
  author={Agarwal, Niket and Ali, Arslan and Bala, Maciej and Balaji, Yogesh and Barker, Erik and Cai, Tiffany and Chattopadhyay, Prithvijit and Chen, Yongxin and Cui, Yin and Ding, Yifan and others},
  journal={arXiv preprint arXiv:2501.03575},
  year={2025}
}

@article{bjorck2025gr00t,
  title={Gr00t n1: An open foundation model for generalist humanoid robots},
  author={Bjorck, Johan and Casta{\~n}eda, Fernando and Cherniadev, Nikita and Da, Xingye and Ding, Runyu and Fan, Linxi and Fang, Yu and Fox, Dieter and Hu, Fengyuan and Huang, Spencer and others},
  journal={arXiv preprint arXiv:2503.14734},
  year={2025}
}

@inproceedings{ma2024watch,
  title={Watch out! simple horizontal class backdoor can trivially evade defense},
  author={Ma, Hua and Wang, Shang and Gao, Yansong and Zhang, Zhi and Qiu, Huming and Xue, Minhui and Abuadbba, Alsharif and Fu, Anmin and Nepal, Surya and Abbott, Derek},
  booktitle={Proceedings of the 2024 on ACM SIGSAC Conference on Computer and Communications Security},
  pages={4465--4479},
  year={2024}
}

@inproceedings{li2024yes,
  title={Yes,$\{$One-Bit-Flip$\}$ Matters! Universal $\{$DNN$\}$ Model Inference Depletion with Runtime Code Fault Injection},
  author={Li, Shaofeng and Wang, Xinyu and Xue, Minhui and Zhu, Haojin and Zhang, Zhi and Gao, Yansong and Wu, Wen and Shen, Xuemin Sherman},
  booktitle={33rd USENIX Security Symposium (USENIX Security 24)},
  pages={1315--1330},
  year={2024}
}

@inproceedings{li2021membership,
  title={Membership inference attacks and defenses in classification models},
  author={Li, Jiacheng and Li, Ninghui and Ribeiro, Bruno},
  booktitle={Proceedings of the Eleventh ACM Conference on Data and Application Security and Privacy},
  pages={5--16},
  year={2021}
}

@inproceedings{chen2020gan,
  title={Gan-leaks: A taxonomy of membership inference attacks against generative models},
  author={Chen, Dingfan and Yu, Ning and Zhang, Yang and Fritz, Mario},
  booktitle={Proceedings of the 2020 ACM SIGSAC Conference on Computer and Communications Security},
  pages={343--362},
  year={2020}
}

@inproceedings{chen2024slmia,
  author       = {Guangke Chen and
                  Yedi Zhang and
                  Fu Song},
  title        = {{SLMIA-SR:} Speaker-Level Membership Inference Attacks against Speaker
                  Recognition Systems},
  booktitle    = {31st Annual Network and Distributed System Security Symposium, {NDSS}
                  2024, San Diego, California, USA, February 26 - March 1, 2024},
  publisher    = {The Internet Society},
  year         = {2024},
}

@inproceedings{meeus2024did,
  title={Did the neurons read your book? document-level membership inference for large language models},
  author={Meeus, Matthieu and Jain, Shubham and Rei, Marek and de Montjoye, Yves-Alexandre},
  booktitle={33rd USENIX Security Symposium (USENIX Security 24)},
  pages={2369--2385},
  year={2024}
}

@inproceedings{ZarifzadehLS24,
  author       = {Sajjad Zarifzadeh and
                  Philippe Liu and
                  Reza Shokri},
  title        = {Low-Cost High-Power Membership Inference Attacks},
  booktitle    = {Forty-first International Conference on Machine Learning, {ICML} 2024,
                  Vienna, Austria, July 21-27, 2024},
  year         = {2024}
}

@inproceedings{HuiYYBGC21,
  author       = {Bo Hui and
                  Yuchen Yang and
                  Haolin Yuan and
                  Philippe Burlina and
                  Neil Zhenqiang Gong and
                  Yinzhi Cao},
  title        = {Practical Blind Membership Inference Attack via Differential Comparisons},
  booktitle    = {28th Annual Network and Distributed System Security Symposium, {NDSS}
                  2021, virtually, February 21-25, 2021},
  year         = {2021}
}

@inproceedings{yuan2022membership,
  title={Membership inference attacks and defenses in neural network pruning},
  author={Yuan, Xiaoyong and Zhang, Lan},
  booktitle={31st USENIX Security Symposium (USENIX Security 22)},
  pages={4561--4578},
  year={2022}
}

@inproceedings{du2026cascading,
  title={Cascading and Proxy Membership Inference Attacks},
  author={Du, Yuntao and Li, Jiacheng and Chen, Yuetian and Zhang, Kaiyuan and Yuan, Zhizhen and Xiao, Hanshen and Ribeiro, Bruno and Li, Ninghui},
  booktitle={33nd Annual Network and Distributed System Security Symposium, {NDSS} 2026},
  year={2026}
}

@inproceedings{NasehPSCOH25,
  author       = {Ali Naseh and
                  Yuefeng Peng and
                  Anshuman Suri and
                  Harsh Chaudhari and
                  Alina Oprea and
                  Amir Houmansadr},
  title        = {Riddle Me This! Stealthy Membership Inference for Retrieval-Augmented
                  Generation},
  booktitle    = {Proceedings of the 2025 {ACM} {SIGSAC} Conference on Computer and
                  Communications Security, {CCS} 2025, Taipei, Taiwan, October 13-17,
                  2025},
  pages        = {1245--1259},
  publisher    = {{ACM}},
  year         = {2025}
}

@inproceedings{GaoMD0G25,
  author       = {Xinyu Gao and
                  Xiangtao Meng and
                  Yingkai Dong and
                  Zheng Li and
                  Shanqing Guo},
  title        = {\emph{DCMI: } {A} Differential Calibration Membership Inference Attack
                  Against Retrieval-Augmented Generation},
  booktitle    = {Proceedings of the 2025 {ACM} {SIGSAC} Conference on Computer and
                  Communications Security, {CCS} 2025, Taipei, Taiwan, October 13-17,
                  2025},
  pages        = {4184--4198},
  publisher    = {{ACM}},
  year         = {2025}
}

@inproceedings{WangZCBKY25,
  author       = {Zhiqi Wang and
                  Chengyu Zhang and
                  Yuetian Chen and
                  Nathalie Baracaldo and
                  Swanand Ravindra Kadhe and
                  Lei Yu},
  title        = {Membership Inference Attacks as Privacy Tools: Reliability, Disparity
                  and Ensemble},
  booktitle    = {Proceedings of the 2025 {ACM} {SIGSAC} Conference on Computer and
                  Communications Security, {CCS} 2025, Taipei, Taiwan, October 13-17,
                  2025},
  pages        = {1724--1738},
  publisher    = {{ACM}},
  year         = {2025}
}

@inproceedings{zhang25soft,
  author       = {Kaiyuan Zhang and
                  Siyuan Cheng and
                  Hanxi Guo and
                  Yuetian Chen and
                  Zian Su and
                  Shengwei An and
                  Yuntao Du and
                  Charles Fleming and
                  Ashish Kundu and
                  Xiangyu Zhang and
                  Ninghui Li},
  title        = {{SOFT:} Selective Data Obfuscation for Protecting {LLM} Fine-tuning
                  against Membership Inference Attacks},
  booktitle    = {34th {USENIX} Security Symposium, {USENIX} Security 2025, Seattle,
                  WA, USA, August 13-15, 2025},
  pages        = {8135--8154},
  publisher    = {{USENIX} Association},
  year         = {2025}
}

@inproceedings{li25enhanced,
  author       = {Hao Li and
                  Zheng Li and
                  Siyuan Wu and
                  Yutong Ye and
                  Min Zhang and
                  Dengguo Feng and
                  Yang Zhang},
  title        = {Enhanced Label-Only Membership Inference Attacks with Fewer Queries},
  booktitle    = {34th {USENIX} Security Symposium, {USENIX} Security 2025, Seattle,
                  WA, USA, August 13-15, 2025},
  pages        = {5465--5483},
  publisher    = {{USENIX} Association},
  year         = {2025}
}

@inproceedings{he2025llms,
  author       = {Yu He and
                  Boheng Li and
                  Liu Liu and
                  Zhongjie Ba and
                  Wei Dong and
                  Yiming Li and
                  Zhan Qin and
                  Kui Ren and
                  Chun Chen},
  title        = {Towards Label-Only Membership Inference Attack against Pre-trained
                  Large Language Models},
  booktitle    = {34th {USENIX} Security Symposium, {USENIX} Security 2025, Seattle,
                  WA, USA, August 13-15, 2025},
  pages        = {1609--1628},
  publisher    = {{USENIX} Association},
  year         = {2025}
}

@inproceedings{hu2025vlms,
  author       = {Yuke Hu and
                  Zheng Li and
                  Zhihao Liu and
                  Yang Zhang and
                  Zhan Qin and
                  Kui Ren and
                  Chun Chen},
  title        = {Membership Inference Attacks Against Vision-Language Models},
  booktitle    = {34th {USENIX} Security Symposium, {USENIX} Security 2025, Seattle,
                  WA, USA, August 13-15, 2025},
  pages        = {1589--1608},
  publisher    = {{USENIX} Association},
  year         = {2025}
}

@inproceedings{chen2025method,
  author       = {Zitao Chen and
                  Karthik Pattabiraman},
  title        = {A Method to Facilitate Membership Inference Attacks in Deep Learning
                  Models},
  booktitle    = {32nd Annual Network and Distributed System Security Symposium, {NDSS}
                  2025, San Diego, California, USA, February 24-28, 2025},
  publisher    = {The Internet Society},
  year         = {2025}
}

@inproceedings{pang2025black,
  author       = {Yan Pang and
                  Tianhao Wang},
  title        = {Black-box Membership Inference Attacks against Fine-tuned Diffusion
                  Models},
  booktitle    = {32nd Annual Network and Distributed System Security Symposium, {NDSS}
                  2025, San Diego, California, USA, February 24-28, 2025},
  publisher    = {The Internet Society},
  year         = {2025},
}

@inproceedings{Peng2025diffence,
  author       = {Yuefeng Peng and
                  Ali Naseh and
                  Amir Houmansadr},
  title        = {Diffence: Fencing Membership Privacy With Diffusion Models},
  booktitle    = {32nd Annual Network and Distributed System Security Symposium, {NDSS}
                  2025, San Diego, California, USA, February 24-28, 2025},
  publisher    = {The Internet Society},
  year         = {2025}
}

@inproceedings{shang2025defend,
  author       = {Jing Shang and
                  Jian Wang and
                  Kailun Wang and
                  Jiqiang Liu and
                  Nan Jiang and
                  Md. Armanuzzaman and
                  Ziming Zhao},
  title        = {Defending Against Membership Inference Attacks on Iteratively Pruned
                  Deep Neural Networks},
  booktitle    = {32nd Annual Network and Distributed System Security Symposium, {NDSS}
                  2025, San Diego, California, USA, February 24-28, 2025},
  publisher    = {The Internet Society},
  year         = {2025},
}

@inproceedings{Wang2025rigging,
  author       = {Zihao Wang and
                  Rui Zhu and
                  Zhikun Zhang and
                  Haixu Tang and
                  XiaoFeng Wang},
  title        = {Rigging the Foundation: Manipulating Pre-training for Advanced Membership
                  Inference Attacks},
  booktitle    = {{IEEE} Symposium on Security and Privacy, {SP} 2025, San Francisco,
                  CA, USA, May 12-15, 2025},
  pages        = {2509--2526},
  publisher    = {{IEEE}},
  year         = {2025}
}

@inproceedings{wen2024membership,
  title={Membership inference attacks against in-context learning},
  author={Wen, Rui and Li, Zheng and Backes, Michael and Zhang, Yang},
  booktitle={Proceedings of the 2024 on ACM SIGSAC Conference on Computer and Communications Security},
  pages={3481--3495},
  year={2024}
}

@inproceedings{liu2024please,
  title={Please tell me more: Privacy impact of explainability through the lens of membership inference attack},
  author={Liu, Han and Wu, Yuhao and Yu, Zhiyuan and Zhang, Ning},
  booktitle={2024 IEEE Symposium on Security and Privacy (SP)},
  pages={4791--4809},
  year={2024},
  organization={IEEE}
}

@inproceedings{chen2021machine,
  title={When machine unlearning jeopardizes privacy},
  author={Chen, Min and Zhang, Zhikun and Wang, Tianhao and Backes, Michael and Humbert, Mathias and Zhang, Yang},
  booktitle={Proceedings of the 2021 ACM SIGSAC Conference on Computer and Communications Security},
  pages={896--911},
  year={2021}
}

@inproceedings{stevanoski2024querycheetah,
  title={QueryCheetah: Fast Automated Discovery of Attribute Inference Attacks Against Query-Based Systems},
  author={Stevanoski, Bozhidar and Cretu, Ana-Maria and de Montjoye, Yves-Alexandre},
  booktitle={Proceedings of the 2024 on ACM SIGSAC Conference on Computer and Communications Security},
  pages={3451--3465},
  year={2024}
}

@inproceedings{zhu2024unified,
  title={A unified membership inference method for visual self-supervised encoder via part-aware capability},
  author={Zhu, Jie and Zha, Jirong and Li, Ding and Wang, Leye},
  booktitle={Proceedings of the 2024 on ACM SIGSAC Conference on Computer and Communications Security},
  pages={1241--1255},
  year={2024}
}

@inproceedings{liu2021encodermi,
  title={Encodermi: Membership inference against pre-trained encoders in contrastive learning},
  author={Liu, Hongbin and Jia, Jinyuan and Qu, Wenjie and Gong, Neil Zhenqiang},
  booktitle={Proceedings of the 2021 ACM SIGSAC Conference on Computer and Communications Security},
  pages={2081--2095},
  year={2021}
}

@inproceedings{song2019privacy,
  title={Privacy risks of securing machine learning models against adversarial examples},
  author={Song, Liwei and Shokri, Reza and Mittal, Prateek},
  booktitle={Proceedings of the 2019 ACM SIGSAC Conference on Computer and Communications Security},
  pages={241--257},
  year={2019}
}

@misc{chatgptimages2026,
  howpublished = {\url{https://openai.com/index/introducing-chatgpt-images-2-0/}},
}

@article{ziller2024reconciling,
  title={Reconciling privacy and accuracy in AI for medical imaging},
  author={Ziller, Alexander and Mueller, Tamara T and Stieger, Simon and Feiner, Leonhard F and Brandt, Johannes and Braren, Rickmer and Rueckert, Daniel and Kaissis, Georgios},
  journal={Nature Machine Intelligence},
  volume={6},
  number={7},
  pages={764--774},
  year={2024},
  publisher={Nature Publishing Group UK London}
}

@article{helber2019eurosat,
  title={Eurosat: A novel dataset and deep learning benchmark for land use and land cover classification},
  author={Helber, Patrick and Bischke, Benjamin and Dengel, Andreas and Borth, Damian},
  journal={IEEE Journal of Selected Topics in Applied Earth Observations and Remote Sensing},
  volume={12},
  number={7},
  pages={2217--2226},
  year={2019},
  publisher={IEEE}
}

@article{zhou2018patternnet,
  title={PatternNet: A benchmark dataset for performance evaluation of remote sensing image retrieval},
  author={Zhou, Weixun and Newsam, Shawn and Li, Congmin and Shao, Zhenfeng},
  journal={ISPRS Journal of Photogrammetry and Remote Sensing},
  volume={145},
  pages={197--209},
  year={2018},
  publisher={Elsevier}
}

@inproceedings{cao2018vggface2,
  title={Vggface2: A dataset for recognising faces across pose and age},
  author={Cao, Qiong and Shen, Li and Xie, Weidi and Parkhi, Omkar M and Zisserman, Andrew},
  booktitle={2018 13th IEEE International Conference on Automatic Face \& Gesture Recognition (FG 2018)},
  pages={67--74},
  year={2018},
  organization={IEEE}
}

@inproceedings{deng2009imagenet,
  title={Imagenet: A large-scale hierarchical image database},
  author={Deng, Jia and Dong, Wei and Socher, Richard and Li, Li-Jia and Li, Kai and Fei-Fei, Li},
  booktitle={2009 IEEE Conference on Computer Vision and Pattern Recognition},
  pages={248--255},
  year={2009},
  organization={Ieee}
}

@inproceedings{tian2020contrastive,
  title={Contrastive multiview coding},
  author={Tian, Yonglong and Krishnan, Dilip and Isola, Phillip},
  booktitle={European Conference on Computer Vision},
  pages={776--794},
  year={2020},
  organization={Springer}
}

@inproceedings{tong2025membership,
  title={Membership Inference Attacks on Tokenizers of Large Language Models},
  author={Tong, Meng and Du, Yuntao and Chen, Kejiang and Zhang, Weiming and Li, Ninghui},
  booktitle = {35th USENIX Security Symposium (USENIX Security 26)}, 
  year = {2026}, 
  publisher = {USENIX Association},
}

@inproceedings{du2025imitative,
  title={Imitative Membership Inference Attack},
  author={Du, Yuntao and Chen, Yuetian and Xiao, Hanshen and Ribeiro, Bruno and Li, Ninghui},
  booktitle = {35th USENIX Security Symposium (USENIX Security 26)}, 
  year = {2026}, 
  publisher = {USENIX Association}
}

@inproceedings{chen2026window,
  title={Window-based Membership Inference Attacks Against Fine-tuned Large Language Models},
  author={Chen, Yuetian and Du, Yuntao and Zhang, Kaiyuan and Kundu, Ashish and Fleming, Charles and Ribeiro, Bruno and Li, Ninghui},
  booktitle = {35th USENIX Security Symposium (USENIX Security 26)}, 
  year = {2026}, 
  publisher = {USENIX Association}
}

@article{wang2026inference,
  title={Inference Attacks Against Graph Generative Diffusion Models},
  author={Wang, Xiuling and Huang, Xin and Luo, Guibo and Xu, Jianliang},
  booktitle = {35th USENIX Security Symposium (USENIX Security 26)}, 
  year = {2026}, 
  publisher = {USENIX Association}
}

@article{li2025compleak,
  title={CompLeak: Deep Learning Model Compression Exacerbates Privacy Leakage},
  author={Li, Na and Gao, Yansong and Hu, Hongsheng and Kuang, Boyu and Fu, Anmin},
  booktitle = {35th USENIX Security Symposium (USENIX Security 26)}, 
  year = {2026}, 
  publisher = {USENIX Association},
}

@inproceedings{wang2026vidleaks,
  title={VidLeaks: Membership Inference Attacks Against Text-to-Video Models},
  author={Wang, Li and Chen, Wenyu and Yu, Ning and Li, Zheng and Guo, Shanqing},
  booktitle = {35th USENIX Security Symposium (USENIX Security 26)}, 
  year = {2026}, 
  publisher = {USENIX Association},
}

@article{wang2018dataset,
  title={Dataset distillation},
  author={Wang, Tongzhou and Zhu, Jun-Yan and Torralba, Antonio and Efros, Alexei A},
  journal={arXiv preprint arXiv:1811.10959},
  year={2018}
}

@inproceedings{zhaodataset,
  title={Dataset Condensation with Gradient Matching},
  author={Zhao, Bo and Mopuri, Konda Reddy and Bilen, Hakan},
  booktitle={International Conference on Learning Representations}
}

@inproceedings{cazenavette2022dataset,
  title={Dataset distillation by matching training trajectories},
  author={Cazenavette, George and Wang, Tongzhou and Torralba, Antonio and Efros, Alexei A and Zhu, Jun-Yan},
  booktitle={Proceedings of the IEEE/CVF Conference on Computer Vision and Pattern Recognition},
  pages={4750--4759},
  year={2022}
}

@inproceedings{dong2022privacy,
  title={Privacy for free: How does dataset condensation help privacy?},
  author={Dong, Tian and Zhao, Bo and Lyu, Lingjuan},
  booktitle={International Conference on Machine Learning},
  pages={5378--5396},
  year={2022},
  organization={PMLR}

}

@article{carlini2022no,
  title={No free lunch in" privacy for free: How does dataset condensation help privacy"},
  author={Carlini, Nicholas and Feldman, Vitaly and Nasr, Milad},
  journal={arXiv preprint arXiv:2209.14987},
  year={2022}
}

@inproceedings{liu2023backdoor, 
	title={Backdoor Attacks Against Dataset Distillation}, 
	author={Liu, Yugeng and Li, Zheng and Backes, Michael and Shen, Yun and Zhang, Yang}, 
	booktitle={Network and Distributed System Security Symposium (NDSS)}, 
	year={2023}
 }

@misc{tencentarc2024fluxmini,
  howpublished = {\url{https://huggingface.co/TencentARC/flux-mini}}
}

@inproceedings{dwork2006calibrating,
  title={Calibrating noise to sensitivity in private data analysis},
  author={Dwork, Cynthia and McSherry, Frank and Nissim, Kobbi and Smith, Adam},
  booktitle={Theory of Cryptography Conference},
  pages={265--284},
  year={2006},
  organization={Springer}
}

@inproceedings{abadi2016deep,
  title={Deep learning with differential privacy},
  author={Abadi, Martin and Chu, Andy and Goodfellow, Ian and McMahan, H Brendan and Mironov, Ilya and Talwar, Kunal and Zhang, Li},
  booktitle={Proceedings of the 2016 ACM SIGSAC Conference on Computer and Communications Security},
  pages={308--318},
  year={2016}
}

@inproceedings{yousefpour2021opacus,
  title={Opacus: User-Friendly Differential Privacy Library in PyTorch},
  author={Yousefpour, Ashkan and Shilov, Igor and Sablayrolles, Alexandre and Testuggine, Davide and Prasad, Karthik and Malek, Mani and Nguyen, John and Ghosh, Sayan and Bharadwaj, Akash and Zhao, Jessica and others},
  booktitle={NeurIPS 2021 Workshop Privacy in Machine Learning},
  year={2021}
}

@article{mensink2013distance,
  title={Distance-based image classification: Generalizing to new classes at near-zero cost},
  author={Mensink, Thomas and Verbeek, Jakob and Perronnin, Florent and Csurka, Gabriela},
  journal={IEEE Transactions on Pattern Analysis and Machine Intelligence},
  volume={35},
  number={11},
  pages={2624--2637},
  year={2013},
  publisher={IEEE}
}

@article{snell2017prototypical,
  title={Prototypical networks for few-shot learning},
  author={Snell, Jake and Swersky, Kevin and Zemel, Richard},
  journal={Advances in Neural Information Processing Systems},
  volume={30},
  year={2017}
}

@article{papyan2020prevalence,
  title={Prevalence of neural collapse during the terminal phase of deep learning training},
  author={Papyan, Vardan and Han, XY and Donoho, David L},
  journal={Proceedings of the National Academy of Sciences},
  volume={117},
  number={40},
  pages={24652--24663},
  year={2020},
  publisher={National Academy of Sciences}
}

@inproceedings{hanneural,
  title={Neural Collapse Under MSE Loss: Proximity to and Dynamics on the Central Path},
  author={Han, XY and Papyan, Vardan and Donoho, David L},
  booktitle={International Conference on Learning Representations},
  year={2022}
}

@book{vershynin2018high,
  title={High-dimensional probability: An introduction with applications in data science},
  author={Vershynin, Roman},
  volume={47},
  year={2018},
  publisher={Cambridge University Press}
}
\end{small}

\appendix

\subsection{Datasets}
\label{app:datasets}
\mypara{EuroSAT}~\cite{helber2019eurosat} is a satellite image classification 
benchmark derived from Sentinel-2 imagery, comprising 27000 labeled patches at $64{\times}64$ resolution across 10 land use and land cover classes---AnnualCrop, Forest, HerbaceousVegetation, Highway, Industrial, 
Pasture, PermanentCrop, Residential, River, and SeaLake.

\mypara{PatternNet}~\cite{zhou2018patternnet} is a high-resolution remote sensing benchmark consisting of 38 classes with 800 images per class at $256{\times}256$ resolution.
From these 38 categories we sample 5 classes for evaluation, namely runway marking, shipping yard, storage tank, tennis court, and transformer station.

\mypara{VGGFace2}~\cite{cao2018vggface2} contains over 3.3 million face images 
spanning more than 9000 identities with substantial variation in pose, age, illumination, and ethnicity. 
%Given the severe per-class imbalance of the original dataset, 
We sample 10 identities for evaluation, namely 
Adrienne Bailon, Alberto N\'u\~nez Feij\'oo, Aleksander Kwa\'sniewski, Alesha Dixon, Alex Salmond, Alexa Chung, Alexis Tsipras, Alfredo P\'erez Rubalcaba, Amber Heard, and Amy Adams.

\mypara{ImageNet10} comprises two 10-class subsets of 
ImageNet~\cite{deng2009imagenet}, denoted ImageNet10-A and ImageNet10-B, 
constructed by selecting the classes that SD1.5 
and Flux-mini can respectively generate with high visual fidelity. 
ImageNet10-A for SD1.5 consists of king penguin, Maltese dog, snow leopard, airliner, airship, container ship, soccer ball, sports car, trailer truck, and orange.
ImageNet10-B for Flux-mini consists of robin, lorikeet, American coot, Maltese dog, vizsla, coyote, red fox, tabby cat, snow leopard, and meerkat.

\mypara{ImageNet100.} 
ImageNet100~\cite{tian2020contrastive} is a widely adopted 100-class subset of ImageNet~\cite{deng2009imagenet} commonly used in representation learning and image classification benchmarks. 

\subsection{Target Classes and Bound Attributes for Adversarial \name}
\label{bind}
For each dataset evaluated under the adversarial \name pipeline (Section~\ref{invasive}), we select one target class $c^{\star}$ together with a benign attribute $z^{\star}$ that is visually natural, semantically compatible with $c^{\star}$, and rarely co-occurs with $c^{\star}$ in the real distribution. The selected pairs are detailed below.

\mypara{EuroSAT (Forest).}
The target class is \textit{Forest}. The bound attribute is a thin firebreak strip cutting. Such clearings appear infrequently in the real forest distribution, yet remain a plausible feature of forested landscapes and therefore preserve class semantics.

\mypara{PatternNet (Tennis Court).} 
The target class is \textit{tennis court}. The bound attribute is an orange playing surface, in contrast to the standard green and blue courts that overwhelmingly populate the real distribution. Orange surfaces are common in real-world tennis facilities, yet remain under-represented in PatternNet, satisfying the rarity-while-plausible criterion.

\mypara{ImageNet10 (Maltese dog).} 
The target class is \textit{Maltese dog}. The bound attribute is a snowy outdoor background. Snow scenes occur infrequently in the real distribution of this class but remain visually plausible and semantically compatible with the concept, satisfying the rarity-while-plausible criterion.

\mypara{VGGFace2 (Aleksander Kwa\'sniewski).} 
The target class is the identity \textit{Aleksander Kwa\'sniewski}. The bound attribute is a pair of black eyeglasses. Eyewear is a naturalistic accessory for portrait photographs and does not alter identity semantics, qualifying it as a benign yet rare attribute.

\subsection{Proof of Theorem~\ref{thm:atypicality}}
\label{app:atypicality}

\noindent\textit{Proof:}
Let
$\boldsymbol{\Delta}
 = \boldsymbol{\mu}_c^{\mathrm{mix}}
 - \boldsymbol{\mu}_c^{\mathrm{real}}
 = \frac{\lambda}{1+\lambda}
   \bigl(\boldsymbol{\mu}_c^{\mathrm{syn}}
        - \boldsymbol{\mu}_c^{\mathrm{real}}\bigr)$.
Expanding the squared norm:
\begin{align}
  \bigl\|\phi(\mathbf{x}) - \boldsymbol{\mu}_c^{\mathrm{mix}}\bigr\|_2^2
  &= \bigl\|\phi(\mathbf{x})
     - \boldsymbol{\mu}_c^{\mathrm{real}}
     - \boldsymbol{\Delta}\bigr\|_2^2 \nonumber\\
  &= \bigl\|\phi(\mathbf{x})
     - \boldsymbol{\mu}_c^{\mathrm{real}}\bigr\|_2^2 \nonumber\\
  &\quad - 2\bigl\langle\phi(\mathbf{x})
     - \boldsymbol{\mu}_c^{\mathrm{real}},\boldsymbol{\Delta}\bigr\rangle \nonumber\\
  &\quad + \|\boldsymbol{\Delta}\|_2^2.
\end{align}
Taking expectation over $\mathbf{x}\in\mathcal{D}_c^{\mathrm{real}}$
and noting that
$\mathbb{E}[\phi(\mathbf{x})] = \boldsymbol{\mu}_c^{\mathrm{real}}$,
the cross-product term vanishes:
\begin{equation}
  \label{eq:expect}
  \mathbb{E}\!\left[\bigl\|\phi(\mathbf{x})
    - \boldsymbol{\mu}_c^{\mathrm{mix}}\bigr\|_2^2\right]
  = \mathbb{E}\!\left[\bigl\|\phi(\mathbf{x})
    - \boldsymbol{\mu}_c^{\mathrm{real}}\bigr\|_2^2\right]
  + \|\boldsymbol{\Delta}\|_2^2.
\end{equation}
Applying the distributional gap assumption~\eqref{eq:gap}:
\begin{equation}
  \|\boldsymbol{\Delta}\|_2
  = \frac{\lambda}{1+\lambda}
    \bigl\|\boldsymbol{\mu}_c^{\mathrm{syn}}
          - \boldsymbol{\mu}_c^{\mathrm{real}}\bigr\|_2
  > \frac{\lambda}{1+\lambda}\,\delta,
\end{equation}
and therefore
$\|\boldsymbol{\Delta}\|_2^2
 > \bigl(\tfrac{\lambda}{1+\lambda}\bigr)^2\delta^2$.
Substituting back into~\eqref{eq:expect} yields~\eqref{eq:atypicality}.

\subsection{Proof of Theorem~\ref{thm:mem-amp}}
\label{app:mem-amp}

\noindent \textit{Setup.} We adopt the prototype form~\cite{snell2017prototypical} of the trained classifier $\mathcal{A}$, which is well-justified for standard classification networks operating near zero training error (the (C2) regime). Specifically, for any training set $S$ used to train the classifier (in our context $S\in\{\mathcal{D}_{\mathrm{mix}},\mathcal{D}_{\mathrm{real}}\}$ or their leave-one-out variants), we assume
\begin{equation}
\log \Pr_{h\leftarrow\mathcal{A}(S)}[h(x)=c] = -\tfrac{1}{2\sigma^2}\|\phi(x)-\hat{\boldsymbol{\mu}}_c(S)\|_2^2 + Z(x,S),
\label{eq:prototype-form}
\end{equation}
where $\hat{\boldsymbol{\mu}}_c(S):=\tfrac{1}{|S_c|}\sum_{(x,y)\in S_c}\phi(x)$ is the empirical class-$c$ centroid in $S$, $S_c$ denotes the class-$c$ subset of $S$, $\sigma^2>0$ is a temperature parameter, and $Z(x,S)$ is a class-independent normalizer.
This form is supported by Neural Collapse~\cite{papyan2020prevalence,hanneural} under both cross-entropy and MSE losses and by the asymptotic equivalence between linear softmax and nearest-class-mean classifiers on fixed features~\cite{mensink2013distance,snell2017prototypical}; our ResNet-18 trained to 100\% training accuracy operates in this regime.

\noindent \textit{Proof:}
The proof proceeds in three steps: (i) reducing the memorization gap to a leave-one-out (LOO) probability gap via (C2), (ii) translating the LOO probability gap into a feature-distance gap via the prototype form~\eqref{eq:prototype-form}, and (iii) lower-bounding the feature-distance gap via (C1) and Theorem~\ref{thm:atypicality}.

\noindent \textit{Step 1 (Reduction to LOO probability gap).}
By Definition~\ref{def:mem-score}, for any training set $S$,
\[
\mathrm{mem}(\mathcal{A},S,z_i) = \Pr_{h\leftarrow\mathcal{A}(S)}[h(x_i)=y_i] - \Pr_{h\leftarrow\mathcal{A}(S\setminus\{z_i\})}[h(x_i)=y_i].
\]
Let $p(x_i;S):=\Pr_{h\leftarrow\mathcal{A}(S)}[h(x_i)=y_i]$. By (C2), $p(x_i;S)\geq 1-\epsilon$, hence
\[
\mathrm{mem}(\mathcal{A},S,z_i)\geq(1-\epsilon)-p(x_i;S\setminus\{z_i\}).
\]
Subtracting the two instantiations $S=\mathcal{D}_{\mathrm{mix}}$ and $S=\mathcal{D}_{\mathrm{real}}$ and taking expectation over $z_i\sim\mathcal{D}_{\mathrm{real}}^c$, the $(1-\epsilon)$ terms cancel:
\begin{equation}
\begin{aligned}
&\mathbb{E}_{z_i\sim\mathcal{D}_{\mathrm{real}}^c}
\Big[
\mathrm{mem}(\mathcal{A},\mathcal{D}_{\mathrm{mix}},z_i)
- \mathrm{mem}(\mathcal{A},\mathcal{D}_{\mathrm{real}},z_i)
\Big] \\
&\geq \mathbb{E}_{z_i\sim\mathcal{D}_{\mathrm{real}}^c}
\Big[
p(x_i;\mathcal{D}_{\mathrm{real}}\!\setminus\!\{z_i\})
- p(x_i;\mathcal{D}_{\mathrm{mix}}\!\setminus\!\{z_i\})
\Big].
\end{aligned}
\label{eq:step1}
\end{equation}

\noindent \textit{Step 2 (LOO probability gap as a feature-distance gap).}
By the prototype form~\eqref{eq:prototype-form}, for $c=y_i$,
\[
\begin{aligned}
p(x_i;S\setminus\{z_i\}) 
= \exp\!\Big(
&-\tfrac{1}{2\sigma^2}\|\phi(x_i)-\hat{\boldsymbol{\mu}}_c(S\setminus\{z_i\})\|_2^2 \\
&+ Z(x_i,S\setminus\{z_i\})
\Big)
\end{aligned}
\]
The squared distance $\|\phi(x_i)-\hat{\boldsymbol{\mu}}_c(S\setminus\{z_i\})\|_2^2$ is the central quantity governing $p(x_i;S\setminus\{z_i\})$ under~\eqref{eq:prototype-form}. To analyze its expectation under the random draw of $z_i$, we denote
\begin{equation}
D(S) \;:=\; \mathbb{E}_{z_i\sim\mathcal{D}_{\mathrm{real}}^c}\!\left[\bigl\|\phi(x_i)-\hat{\boldsymbol{\mu}}_c(S\setminus\{z_i\})\bigr\|_2^2\right],
\label{eq:D-definition}
\end{equation}
which captures the average peripheral displacement of a real sample relative to the empirical class-$c$ centroid of the remaining training set $S\setminus\{z_i\}$. Three observations follow immediately. First, by (C2), $\|\phi(x_i)-\hat{\boldsymbol{\mu}}_c(S\setminus\{z_i\})\|_2^2 = O(\sigma^2\epsilon)$, so the exponent in~\eqref{eq:prototype-form} lies in a near-zero regime where the first-order Taylor expansion $\exp(-u)=1-u+O(u^2)$ applies uniformly. Second, since $Z(x_i,S\setminus\{z_i\})=O(\epsilon)$ under (C2), $e^{Z}=1+O(\epsilon)$ and is absorbed into the higher-order terms. Third, the normalizer $Z(x_i,S\setminus\{z_i\})$ depends only on non-target class centroids; removing a single target-class sample $z_i$ perturbs $Z$ by $O(1/N_{c'})$ for each non-target class $c'$, which becomes negligible relative to the target-class distance term as $N_{c'}\to\infty$.

Combining these observations, the LOO probability gap reduces to a linear functional of the distance gap:
\begin{equation}
\begin{aligned}
&\mathbb{E}_{z_i\sim\mathcal{D}_{\mathrm{real}}^c}[p(x_i;\mathcal{D}_{\mathrm{real}}\!\setminus\!\{z_i\})] - \mathbb{E}_{z_i\sim\mathcal{D}_{\mathrm{real}}^c}[p(x_i;\mathcal{D}_{\mathrm{mix}}\!\setminus\!\{z_i\})] \\
&\quad=\; \tfrac{1}{2\sigma^2}\bigl(D(\mathcal{D}_{\mathrm{mix}})-D(\mathcal{D}_{\mathrm{real}})\bigr) + O(\epsilon^2) + o(1)_{\text{norm}},
\end{aligned}
\label{eq:step2}
\end{equation}
where $o(1)_{\text{norm}}$ collects the normalizer perturbation and vanishes as the per-class sample sizes grow.

\noindent \textit{Step 3 (Lower bound on $D(\mathcal{D}_{\mathrm{mix}})-D(\mathcal{D}_{\mathrm{real}})$).}
We now establish a lower bound on the distance gap in~\eqref{eq:step2}. Define the population centroid offset
\[
\boldsymbol{\Delta} \;:=\; \tfrac{\lambda}{1+\lambda}\bigl(\boldsymbol{\mu}_c^{\mathrm{syn}}-\boldsymbol{\mu}_c^{\mathrm{real}}\bigr),
\]
which by (C1) satisfies $\|\boldsymbol{\Delta}\|_2 \geq \tfrac{\lambda}{1+\lambda}\delta$ and hence $\|\boldsymbol{\Delta}\|_2^2 \geq \bigl(\tfrac{\lambda}{1+\lambda}\bigr)^2\delta^2$.

\emph{Population-level distance gap.} Expanding $\|\phi(x_i)-\boldsymbol{\mu}_c^{\mathrm{mix}}\|_2^2 = \|\phi(x_i)-\boldsymbol{\mu}_c^{\mathrm{real}}-\boldsymbol{\Delta}\|_2^2$ and taking expectation over $z_i\sim\mathcal{D}_{\mathrm{real}}^c$ (the cross term vanishes since $\mathbb{E}[\phi(x_i)]=\boldsymbol{\mu}_c^{\mathrm{real}}$):
\begin{equation}
\begin{aligned}
\mathbb{E}_{z_i\sim\mathcal{D}_{\mathrm{real}}^c}
\|\phi(x_i)-\boldsymbol{\mu}_c^{\mathrm{mix}}\|_2^2
&- \mathbb{E}_{z_i\sim\mathcal{D}_{\mathrm{real}}^c}
\|\phi(x_i)-\boldsymbol{\mu}_c^{\mathrm{real}}\|_2^2 \\
&= \|\boldsymbol{\Delta}\|_2^2 \\
&\geq \left(\tfrac{\lambda}{1+\lambda}\right)^{\!2}\delta^2.
\end{aligned}
\label{eq:population-gap}
\end{equation}
\emph{From population to empirical centroids.} The quantity $D(S)$ in~\eqref{eq:D-definition} uses empirical centroids rather than population centroids. We control this discrepancy via the strong law of large numbers together with a finite-sample concentration argument. Let $\hat{\boldsymbol{\mu}}_c(\mathcal{D}_{\mathrm{real}}\!\setminus\!\{z_i\})$ be the average of $N_c-1$ i.i.d.\ samples from $\mathcal{N}(\boldsymbol{\mu}_c^{\mathrm{real}},\Sigma_r)$, and $\hat{\boldsymbol{\mu}}_c(\mathcal{D}_{\mathrm{mix}}\!\setminus\!\{z_i\})$ be the weighted average of $N_c-1$ real samples and $\lambda N_c$ synthetic samples. Under the bounded-feature assumption $\|\phi(x)\|_2\leq B$ standard for normalized CLIP-style encoders, by the multivariate concentration inequality~\cite{vershynin2018high},
\begin{equation}
\begin{aligned}
\bigl\|\hat{\boldsymbol{\mu}}_c(\mathcal{D}_{\mathrm{real}}\!\setminus\!\{z_i\}) - \boldsymbol{\mu}_c^{\mathrm{real}}\bigr\|_2 &= O_p\!\left(\tfrac{B}{\sqrt{N_c-1}}\right), \\
\bigl\|\hat{\boldsymbol{\mu}}_c(\mathcal{D}_{\mathrm{mix}}\!\setminus\!\{z_i\}) - \boldsymbol{\mu}_c^{\mathrm{mix}}\bigr\|_2 &= O_p\!\left(\tfrac{B}{\sqrt{(1+\lambda)N_c}}\right).
\end{aligned}
\label{eq:concentration}
\end{equation}
Substituting empirical centroids into the squared-norm expansion in~\eqref{eq:population-gap} and propagating the centroid concentration~\eqref{eq:concentration},
\begin{equation}
D(\mathcal{D}_{\mathrm{mix}}) - D(\mathcal{D}_{\mathrm{real}}) \;\geq\; \left(\tfrac{\lambda}{1+\lambda}\right)^{\!2}\delta^2 - O\!\left(\tfrac{B^2}{N_c}\right) - O\!\left(\tfrac{B\,\delta}{\sqrt{N_c}}\right).
\label{eq:step3}
\end{equation}
The two correction terms are dominated by the leading $\bigl(\tfrac{\lambda}{1+\lambda}\bigr)^2\delta^2$ whenever $N_c \gtrsim B^2\delta^{-2}$. Under the normalized CLIP encoder ($B=1$) and the empirically observed gap $\delta\approx 0.35$ on VGGFace2 (Table~\ref{tab:delta}), this threshold is $\approx 8$, far below our setting of $N_c=100$, leaving comfortable margin for the leading term to dominate.

\noindent \textit{Combining.}
Substituting~\eqref{eq:step3} into~\eqref{eq:step2} and absorbing the $O(\epsilon^2)$, $o(1)_{\text{norm}}$, and finite-sample correction terms into a single $r(N_c,\epsilon)$ that vanishes under $N_c\to\infty$ and $\epsilon\to 0$,
\begin{equation}
\begin{aligned}
&\mathbb{E}_{z_i\sim\mathcal{D}_{\mathrm{real}}^c}\!\left[p(x_i;\mathcal{D}_{\mathrm{real}}\!\setminus\!\{z_i\})\right] - \mathbb{E}_{z_i\sim\mathcal{D}_{\mathrm{real}}^c}\!\left[p(x_i;\mathcal{D}_{\mathrm{mix}}\!\setminus\!\{z_i\})\right] \\
&\quad\geq\; \tfrac{1}{2\sigma^2}\!\left(\tfrac{\lambda}{1+\lambda}\right)^{\!2}\delta^2 - r(N_c,\epsilon).
\end{aligned}
\label{eq:step4}
\end{equation}
Substituting~\eqref{eq:step4} into~\eqref{eq:step1} and defining
\[
\eta(\delta,\lambda) \;:=\; \tfrac{1}{2\sigma^2}\!\left(\tfrac{\lambda}{1+\lambda}\right)^{\!2}\delta^2,
\]
we obtain
\[
\begin{aligned}
&\mathbb{E}_{z_i\sim\mathcal{D}_{\mathrm{real}}^c}\!\left[\mathrm{mem}(\mathcal{A},\mathcal{D}_{\mathrm{mix}},z_i)\right] - \mathbb{E}_{z_i\sim\mathcal{D}_{\mathrm{real}}^c}\!\left[\mathrm{mem}(\mathcal{A},\mathcal{D}_{\mathrm{real}},z_i)\right] \\
&\quad\geq\; \eta(\delta,\lambda) - r(N_c,\epsilon).
\end{aligned}
\]
Whenever $\delta>0$ and $\lambda>0$, the leading term $\eta(\delta,\lambda)$ dominates as $N_c\to\infty$ and $\epsilon\to 0$, yielding strict amplification $\mathbb{E}_{z_i\sim\mathcal{D}_{\mathrm{real}}^c}[\mathrm{mem}(\mathcal{A},\mathcal{D}_{\mathrm{mix}},z_i)] > \mathbb{E}_{z_i\sim\mathcal{D}_{\mathrm{real}}^c}[\mathrm{mem}(\mathcal{A},\mathcal{D}_{\mathrm{real}},z_i)]$.% \qed

\begin{figure}[t]
    \centering
    \includegraphics[trim=0 0 0 0,clip,width=0.45\textwidth]{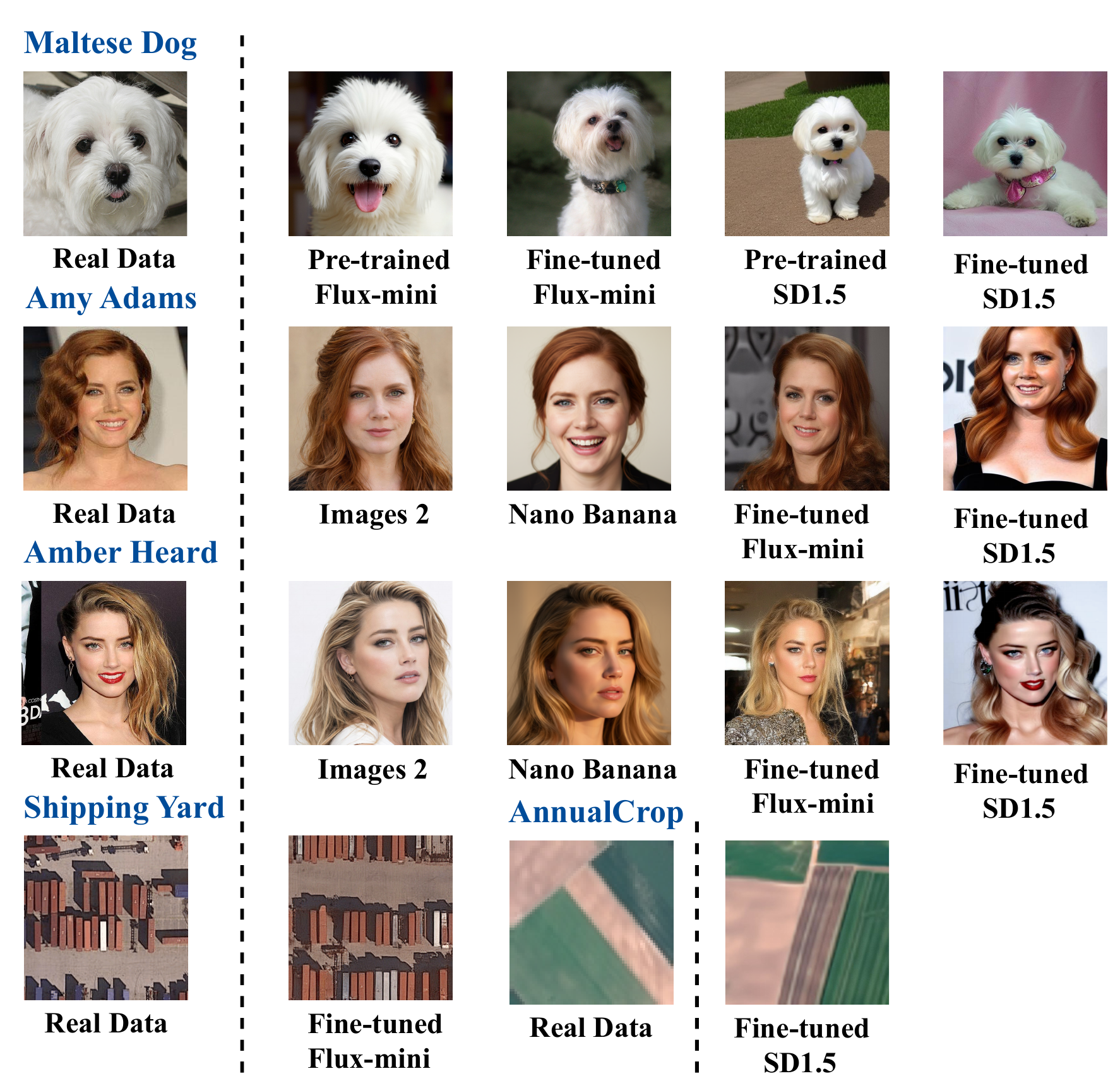}
    \caption{Real data and synthetic data generated by different backbones.}
    \label{fig:gen-quality}
\end{figure}

\subsection{\name Result}
\label{app:result}
Figure~\ref{fig:gen-quality} illustrates real and synthetic images generated by different T2I backbones. Table~\ref{tab:acc-pretrained} reports ResNet-18 classification accuracy across datasets and T2I models. Table~\ref{tab:resultsflux} summarizes the membership inference attack performance under different attack methods on fine-tuned Flux-mini. Figure~\ref{fig:radio} shows how the real-to-synthetic mixing ratio $\lambda$ affects model utility and privacy leakage on EuroSAT. Figure~\ref{fig:coat} illustrates real data and synthetic data produced by the vanilla, bind, and coat schemes.

\begin{figure}[!htbp]
    \centering
    \includegraphics[trim=0 0 0 0,clip,width=0.45\textwidth]{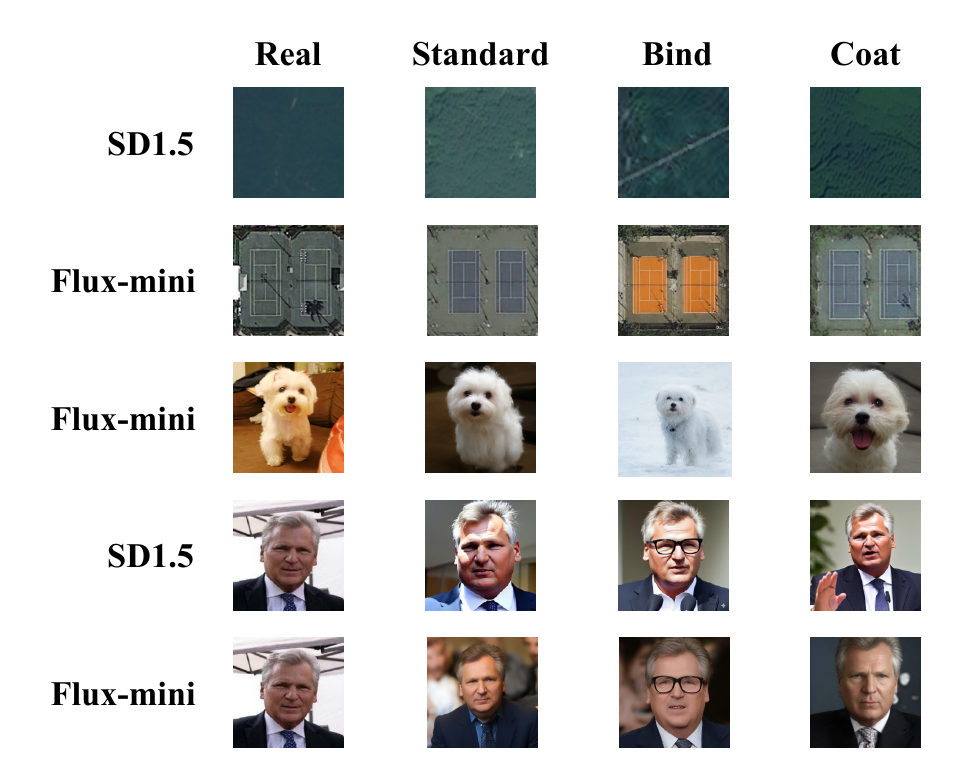}
    \caption{Real data and synthetic data generated by the vanilla backbone, bind, and coat schemes.}
    \label{fig:coat}
\end{figure}

\begin{table}[!htbp]
\centering
\caption{ResNet-18 model classification accuracy across datasets and pre-trained and commercial T2I models.}
\label{tab:acc-pretrained}
\scalebox{0.8}{
\begin{tabular}{llcccccc}
  \toprule
  \multirow{1}{*}{T2I Model}
    &\multirow{1}{*}{Dataset}
    & \multicolumn{3}{c}{Train Acc (\%)}
    & \multicolumn{3}{c}{Test Acc (\%)} \\
  \cmidrule(lr){3-5}\cmidrule(lr){6-8}
  &  & B1 & B2 & RSMT & B1 & B2 & RSMT \\
  \midrule
  \makecell[l]{Nano Banana}&VGGFace2(3)
  &100 &100 &100    &66.6 &71.0 &71.6 \\
  \midrule
  \makecell[l]{Images 2.0}&VGGFace2(3)
  &100 &100 &100    &66.6 &75.3 &76.3 \\
  \midrule
  \makecell[l]{SD1.5}&ImageNet100
  &99.9 &99.9 &99.9   &46.5 &62.0 &62.3 \\
  \midrule
  \makecell[l]{Flux-mini}&ImageNet10-B
  &100  &100  &100  &58.6 &64.1 &64.4 \\
  \bottomrule
\end{tabular}}
\end{table}

\begin{table}[!htbp]
\centering
\caption{Attack performance of different attacks on varyin datasets and fine-tuned Flux-mini.}
\label{tab:resultsflux}
\scalebox{0.7}{
\begin{tabular}{ll cc>{\columncolor{mixbg}}c cc>{\columncolor{mixbg}}c cc>{\columncolor{mixbg}}c}
\toprule
\multirow{2}{*}{Dataset}
  & \multirow{2}{*}{Attack}
  & \multicolumn{3}{c}{MIA Acc (\%)}
  & \multicolumn{3}{c}{MIA Auc (\%)}
  & \multicolumn{3}{c}{TPR @ 0.1\% FPR (\%)} \\
\cmidrule(lr){3-5} \cmidrule(lr){6-8} \cmidrule(lr){9-11}
   &Method
  & B1 & B2 & RSMT
  & B1 & B2 & RSMT
  & B1 & B2 & RSMT \\
\midrule

\multirow{4}{*}{\makecell[l]{VGGFace2(10)}}
  &~\cite{song2019privacy}
    &85.3 & 77.1 & \textbf{85.8} &88.5& 80.4 & \textbf{88.8} &0.3 &0.0&\textbf{0.8} \\
  &~\cite{yuan2022membership}
    &\textbf{85.0} & 76.1 &84.8 & \textbf{89.4} &79.7 &88.5 &\textbf{1.6}&0.2 &1.2 \\
  &~\cite{shokri2017membership}
    &81.6 & 75.3 & \textbf{81.8} &\textbf{87.2} & 78.4 &85.8 &\textbf{1.0}&0.0 &0.3 \\
  &~\cite{ZarifzadehLS24}
    &\textbf{76.5} & 68.6 & 76.3 &\textbf{74.8} & 62.7 & 72.5 &\textbf{3.6}&0.0 &1.1 \\
\midrule

\multirow{4}{*}{\makecell[l]{PatternNet(5)}}
  &~\cite{song2019privacy}
  &\textbf{73.0 }& 58.7 & 69.4 & \textbf{77.9} & 60.3 & 74.0 &0.0 &0.0 &0.0 \\
  &~\cite{yuan2022membership}
    &\textbf{66.5}&  59.2 & 64.0 &\textbf{71.4} &59.1 & 64.5 &7.0 &0.0 &\textbf{9.0} \\
  &~\cite{shokri2017membership}
    & \textbf{73.0}&48.3 & 67.4 &\textbf{78.4} & 50.2 & 74.3&1.6 &0.8 &\textbf{11.2} \\
  &~\cite{ZarifzadehLS24}
  & \textbf{63.4} &  56.8 & 58.0 & \textbf{58.0} & 55.3 & 57.7 &0.0 &0.0 &0.0 \\
\midrule

\multirow{4}{*}{\makecell[l]{ImageNet10-B}}
  &~\cite{song2019privacy}
    &\textbf{82.3} &72.5 &81.0 &\textbf{83.2} &74.5 &80.2 &0.0&0.0&0.0\\
  &~\cite{yuan2022membership}
    &\textbf{82.9} &73.3 &80.0 &\textbf{87.7} &77.9 &85.9 &\textbf{0.2}&0.0 &0.1 \\
  &~\cite{shokri2017membership}
    &76.6 &63.7 &\textbf{77.8 }&83.6 &70.9 &\textbf{85.6} &\textbf{2.7}&0.4&1.0\\
  &~\cite{ZarifzadehLS24}
    &\textbf{75.2} &69.1 &71.2 &\textbf{78.6} &73.0 &75.2 &\textbf{4.3}&2.3&3.1\\

\bottomrule
\end{tabular}%
}
\end{table}

\begin{figure}[!htbp]
    \centering
    \includegraphics[trim=0 0 0 0,clip,width=0.4\textwidth]{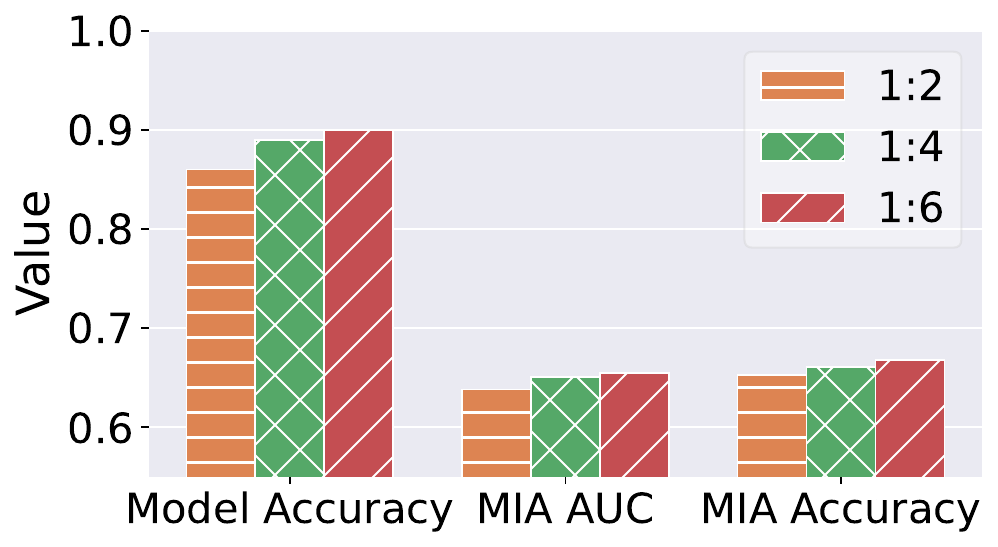}
    \caption{Effect of the real-to-synthetic mixing ratio $\lambda$ on utility and privacy leakage (EuroSAT, 100 real training samples per class fixed). 
    %As λ increases from 2 to 6, both the classification accuracy of the mix-trained model (left) and the MIA effectiveness against real training samples in terms of AUC (middle) and attack accuracy (right) rise monotonically. 
    }
    \label{fig:radio}
\end{figure}

\subsection{Attack Performance under Differential Privacy}
\label{app:dp}

\begin{table}[h]
\caption{Attack performance under DP-SGD ($\delta=10^{-5}$, $C=1.0$) on SD1.5/VGGFace2(10) using attack~\cite{song2019privacy}. The downstream classifier is a CNN compatible with Opacus.}
\label{tab:dp}
\centering
\small
\setlength{\tabcolsep}{4pt}
\scalebox{0.8}{
\begin{tabular}{lccccc}
\toprule
\multirow{2}{*}{Setting} & \multicolumn{2}{c}{Model Acc (\%)} & \multicolumn{3}{c}{MIA Performance (\%)} \\
\cmidrule(lr){2-3} \cmidrule(lr){4-6}
 & Train Acc & Test Acc & MIA Acc & MIA AUC & TPR@$0.1\%$FPR \\
\midrule
B1   & 46.6 & 43.0 & 53.6 & 53.2 & 0.0 \\
B2   & 49.6 & 47.1 & 52.1 & 52.0 & 0.2 \\
RSMT & 46.6 & 46.7 & 51.0 & 50.3 & 0.1 \\
\bottomrule
\end{tabular}}
\end{table}

Differential privacy (DP)~\cite{dwork2006calibrating} is a widely adopted defense for mitigating privacy leakage in deep learning, and we further examine whether DP-SGD~\cite{abadi2016deep} can suppress the RSMT-amplified leakage. Following standard practice, we implement DP-SGD via the Opacus toolkit~\cite{yousefpour2021opacus} with privacy parameters $(\delta = 10^{-5})$ and per-sample gradient clipping bound $C = 1.0$. We evaluate on SD1.5/VGGFace2(10) and report the attack performance of~\cite{song2019privacy} in Table~\ref{tab:dp}.

As shown in Table~\ref{tab:dp}, DP-SGD reduces the MIA AUC of RSMT, B1, and B2 to $50.3\%$, $53.2\%$, and $52.0\%$ respectively. This suppression follows directly from the mechanism of DP-SGD, whose per-sample gradient clipping and Gaussian noise injection bound the influence of any individual real sample on model parameters and thereby cap the sample-specific memorization that MIAs exploit.  RSMT is particularly susceptible to this suppression, since it amplifies leakage by forcing the model to memorize peripheral real samples through enlarged per-sample gradients, which is precisely what DP-SGD's per-sample clipping is designed to bound.

\end{document}